\documentclass[a4paper,onecolumn, 11pt, accepted=2024-04-11]{quantumarticle}
\pdfoutput=1
\usepackage{graphicx}
\usepackage{fullpage,latexsym,amsmath,mathtools,amsfonts,xcolor}
\usepackage{amssymb}
\usepackage{amsthm}
\usepackage{wrapfig}
\usepackage{subfigure}
\usepackage[hidelinks]{hyperref}
\usepackage[font=small,skip=0pt]{caption}
\usepackage{slashbox}
\usepackage{natbib}
\usepackage{doi}

%RdW: added the folowing to make reference, thm numbers etc. clickable
\usepackage{hyperref}
\hypersetup{
    colorlinks,
    linkcolor={blue!100!black},
    citecolor={blue!100!black},
}

\def\01{\{0,1\}}

\newcommand{\ket}[1]{|#1\rangle}

 % inproduct, < | >
 % inproduct, < , >

\newcommand{\poly}{\mbox{\rm poly}}
\newcommand{\polylog}{\mbox{\rm polylog}}

\newtheorem{definition}{Definition}
\newtheorem{theorem}{Theorem}
\newtheorem{problem}{Problem}
\newtheorem{lemma}[theorem]{Lemma}
\newtheorem{proposition}[theorem]{Proposition}
\newtheorem{corollary}[theorem]{Corollary}

\newtheorem{fact}{Fact}

\bibpunct[,]{[}{]}{,}{n}{,}{,}

%\newenvironment{proof}
%{\noindent {\bf Proof. }}
%{{\hfill $\Box$}\\
% \smallskip}

% sergey's macros
\newcommand{\aff}{\mathsf{Aff}}

\newcommand{\be}{\begin{equation}}
\newcommand{\ee}{\end{equation}}
\newcommand{\ra}{\rangle}

\newcommand{\ba}{\begin{array}}
\newcommand{\ea}{\end{array}}

\newcommand{\calL}{{\cal L }}

\newcommand{\calC}{{\cal C }}
\newcommand{\calS}{{\cal S }}

\newcommand{\FF}{\mathbb{F}}

\newcommand{\ZZ}{\mathbb{Z}}

\begin{document}

\title{Generating $k$ EPR-pairs from an $n$-party resource state}

\author{Sergey Bravyi}
\affiliation{IBM Quantum, IBM T.J.\ Watson Research Center, Yorktown Heights, NY 10598, USA.} 
\email{sbravyi@us.ibm.com}

\author{Yash Sharma}
\affiliation{Rutgers University}
\email{yash.sharma@rutgers.edu}

\author{Mario Szegedy}
\affiliation{Rutgers University}
\email{szegedy@cs.rutgers.edu}

\author{Ronald de Wolf}
\affiliation{QuSoft, CWI and University of Amsterdam, the Netherlands.}
\thanks{Partially supported by the Dutch Research Council (NWO) through Gravitation-grant Quantum Software Consortium, 024.003.037.}
\email{rdewolf@cwi.nl}

\maketitle

\begin{abstract}
Motivated by quantum network applications over classical channels, we initiate the study of $n$-party resource states from which LOCC protocols can create EPR-pairs between any $k$ disjoint pairs of parties. 
We give constructions of such states where $k$ is not too far from the optimal $n/2$ while the individual parties need to hold only a constant number of qubits.
In the special case when each party holds only one qubit, we describe a family of $n$-qubit states with $k$ proportional to $\log n$ based on Reed-Muller codes,
as well as small numerically found examples for $k=2$ and $k=3$.
We also prove some lower bounds, for example showing that if $k=n/2$ then the parties must have at least $\Omega(\log\log n)$ qubits each.
\end{abstract}

\section{Introduction}

\subsection{Generating EPR-pairs from a resource state}

Quantum communication networks combine several quantum computers to enable them to solve interesting tasks from cryptography, communication complexity, distributed computing etc.
Building a large-scale quantum communication network is a daunting task that will take many years, but networks with a few small quantum computers are under construction and may start to appear in the next few years~\cite{WEH:qinternet}.

\setlength{\intextsep}{-11pt}%
\begin{wrapfigure}[9]{r}{0.2\textwidth}
\centering
\includegraphics[width=0.2\textwidth]{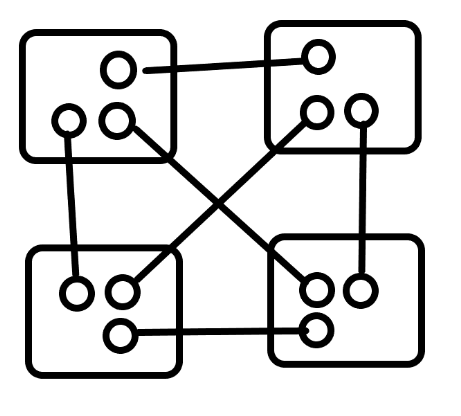}
  \caption{State of $n=4$ parties
  with 6 EPR-pairs} \label{fig:allpairs}
\end{wrapfigure}

These networks are either based on channels that physically communicate quantum states, or rely on classical communication in tandem with shared entanglement, or a combination of both. Communication over classical channels cannot increase entanglement, so in the absence of quantum channels we have to rely on prior entangled states. For example, if two parties share an EPR-pair, $\frac{1}{\sqrt{2}}(\ket{00}+\ket{11})$, then one party can transmit (``teleport'') a qubit to the other via two classical bits of communication, consuming the EPR-pair in the process~\cite{teleporting}. 
If we want to enable many qubits to be sent in this way, then we could start with an entangled state where each pair among the $n$ parties shares its own EPR-pair. This would allow any pair to exchange a qubit, but would require us to start with a rather large initial entangled state of $\binom{n}{2}$ EPR-pairs, and each of the $n$ parties would need to hold $n-1$ qubits (see Figure~\ref{fig:allpairs} for $n=4$).

Now suppose that we know in advance that only some $k$ pairs out of the $n$ parties will be required to exchange a qubit, but we do not know in advance what those $k$ pairs are.
In this paper we study what initial $n$-party resource states are sufficient or necessary to achieve this task.
Note that an EPR-pair between two parties allows them to exchange a qubit in either direction via local operations and classical communication (LOCC), and conversely the ability to exchange a qubit between two parties allows them to share an EPR-pair (one party locally creates the EPR-pair and sends one of its qubits to the other party). This shows that the ability to exchange qubits between any $k$ disjoint pairs of parties is essentially equivalent to the ability to establish EPR-pairs between any $k$ disjoint pairs.
We focus on the latter task in this paper.

We call an $n$-party state $\ket{\psi}$ \emph{$k$-pairable} if for every $k$ disjoint pairs $\{a_1,b_1\},\ldots,\{a_k,b_k\}$ of parties, there exists an LOCC protocol that starts with $\ket{\psi}$ and ends up with a state where each of those $k$ pairs of parties shares an EPR-pair.
%
%We assume here that every party knows the pairs $\{a_1,b_1\},\ldots,\{a_k,b_k\}$, so they all know which particular LOCC protocol needs to be executed. 
The local quantum operations and the classical communication are free, but we do care about the number of qubits per party in $\ket{\psi}$: the fewer the better.
The same resource state $\ket{\psi}$ has to work for every possible $k$-pairing,
%
%so that it can be fixed before we learn which $k$-pairing we actually want to realize.
so it is fixed before the pairing task is given.
%
%As a simple and well-known example, the $n$-qubit GHZ-state
For example, the
$n$-qubit GHZ-state
\[
\frac{1}{\sqrt{2}}(\ket{0^n}+\ket{1^n})
\]
is 1-pairable: in order to obtain an EPR-pair between two parties Alice and Bob, the other $n-2$ parties can measure their qubit in the Hadamard basis and communicate the classical measurement outcomes to Alice and Bob, who convert their remaining 2-qubit state into an EPR-pair if one of them (say Alice) does a $Z$-gate conditioned on the parity of the $n-2$ bits they received.

\vspace{5mm}
\begin{center}
\begin{figure}[h]
\centerline{
\includegraphics[width=6.5cm]{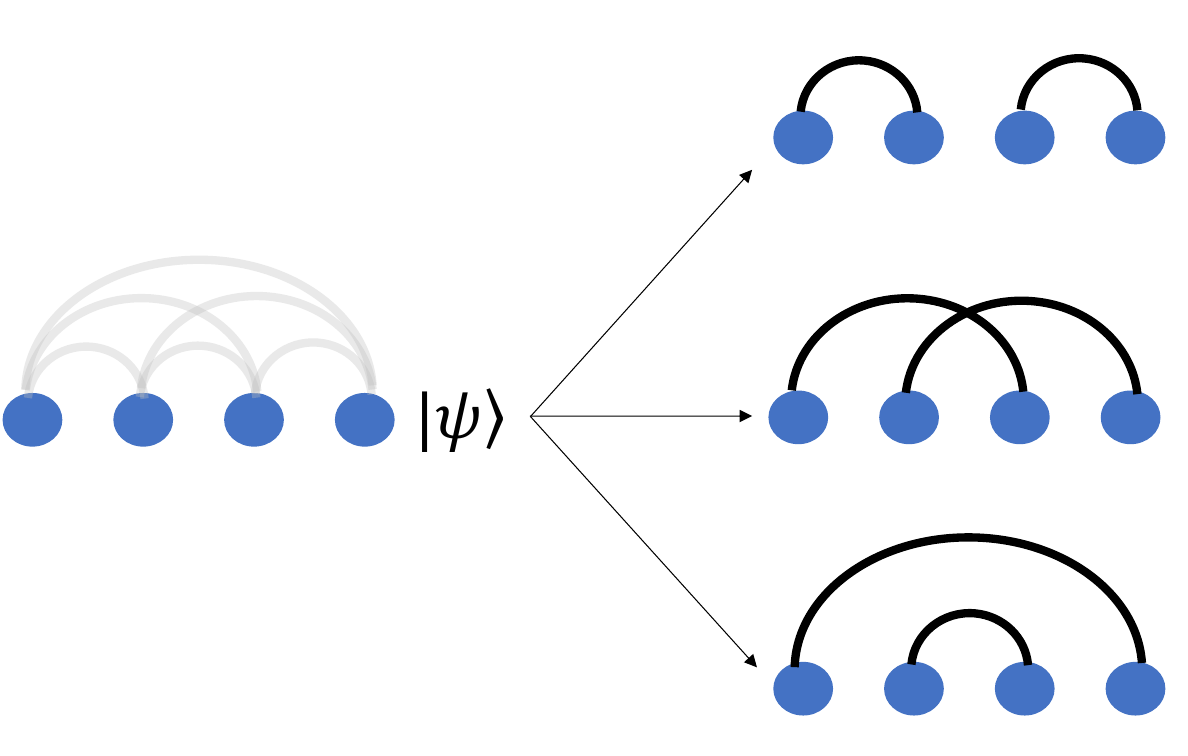}}
\caption{{A cartoon illustrating $k$-pairability.
An entangled resource state $\psi$ is initially distributed among $n$ parties.
     The parties can implement an arbitary LOCC protocol, that is,
     any local quantum operations and classical communication.
     The resource state $\ket{\psi}$ is called $k$-pairable if
     for any selection of $k$ disjoint pairs of parties, 
     there exists an LOCC protocol converting $\ket{\psi}$ to a collection of EPR-pairs shared by the selected pairs of parties. 
     In this example $k=2$ and $n=4$.
     The final EPR-pairs are indicated by solid arcs connecting the parties.
     Arrows indicate LOCC protocols converting $\ket{\psi}$ to the three possible desired final states, which correspond to the three possible ways of partitioning the $n=4$ vertices into $k=2$ disjoint pairs. Our goal is to maximize the pairability parameter $k$ while keeping the number of qubits per party as small as possible.}}  
     \label{fig:example}
 \end{figure}
\end{center}

The GHZ-example has the minimal possible 1 qubit per party, but unfortunately $k$ is only~1 there: we can only create 1 EPR-pair. We are interested in resource states that are $k$-pairable for larger $k\le n/2$. We give both upper and lower bounds for $k$-pairability, considering both the situation where we allow $m>1$ qubits per party (but not too many),
and the situation where we insist that each of the $n$ parties has only the minimal $m=1$ qubits.

For $k>1$ it is important to note that our definition of $k$-pairability requires the $k$ pairs to be disjoint, and does not allow overlapping pairs.
The main reason for this is that allowing $k$ overlapping pairs would in particular require us to be able to create a star graph of EPR-pairs, where one party shares EPR-pairs with $k$ other parties. This can only happen if that one party holds at least $k$ qubits.\footnote{To prove this, consider the bipartite state of party~1 vs the union of all other parties. If party~1 holds $m$ qubits, the initial Schmidt rank of the state will be $\leq 2^m$. The final state would have $k$ EPR-pairs between party~1 and the other parties, which requires Schmidt rank $\geq 2^k$. But LOCC cannot increase Schmidt rank, which implies $m\geq k$.}
This lower bound $m\geq k$ would rule out the constructions we have where $m$ is much smaller than $k$ (for example, $m=1$ vs $k=\Omega(\log n)$ in Section~\ref{sec:constrnoancilla}).
The $k$-pairability problem with $m\ll k$ is interesting from the practical standpoint since qubits, especially error-corrected logical qubits, built on top of multiple physical qubits, are expensive and we would prefer to have $n$-party $k$-pairable resource states with as few qubits per party as possible.

\subsection{Motivation}

Multi-party resource states are foundational for tasks such as distributed quantum computing \cite{raussendorf2003measurement, PhysRevLett.86.5188, VanMeter2014}, 
quantum secret sharing
\cite{PhysRevA.59.1829, VanMeter2014}, 
and multi-party quantum protocols \cite{CleveBuhrman1997, 2001_Fitzi}. One of the most utilized resources in quantum networking is EPR-pairs shared between two parties. Our model aims to capture scenarios involving multiple parties, where it is not known in advance between which parties the quantum resources should be deployed. This situation is similar to a call center, which must dynamically manage incoming calls. Our study concentrates on developing resource states that provide the flexibility to determine the call/connection structure \emph{after} the resource state has been created.

 Imagine for instance a scenario where multiple government agencies need to communicate securely and share sensitive information. The agencies are interconnected in a network where the communication partners may vary depending on the situation. 
 Similarly, a decentralized banking system might use quantum cryptography to secure transactions between multiple parties, ensuring that financial operations remain secure even as the network of participants changes, mirroring fluctuating market conditions.
 It is important to develop multi-party quantum resource states that can enable secure, flexible quantum communication systems over classical channels, allowing these agencies to establish secure communication channels dynamically, similar to switching call connections based on priority and need.

In this theoretical work we use several simplifying assumptions, most notably that EPR-pairs suffice as the primary resource required for applications.
The interesting follow-up work~\cite{CMP:smallkpairable,CCMPST:vertexminor}, discussed at the end of this section, subsequently removed this assumption.

\subsection{Our results 1: constructions of \texorpdfstring{$k$}{k}-pairable resource states}\label{ssec:introresults1}

In Section~\ref{sec:constrmultiple} we first study $k$-pairable resource states where each of the $n$ parties is allowed to have $O(1)$ qubits (hence $\ket{\psi}$ will have $O(n)$ qubits in total). We show that we can make $k$ as large as $n/\polylog(n)$ while each party holds only 10 qubits. Roughly, the idea is to take a special kind of $n$-vertex expander graphs that guarantee existence of $k$ edge-disjoint paths for any $k$ disjoint pairs, let each edge in the graph correspond to an EPR-pair, and  create the $k$ desired EPR-pairs via entanglement-swapping along the edge-disjoint paths. 
If we allow $m=O(\log n)$ qubits per party instead of $m=O(1)$, then we can construct $k$-pairable resource states with $k = n/2$, meaning that from our fixed resource state we can create EPR-pairs across any perfect matching of the $n$ parties into disjoint pairs.%
\footnote{If we allow one party to hold many more qubits than the others, then we could use a resource state~$\ket{\psi}$ corresponding to a star graph, where the central party shares an EPR-pair with each of the $n-1$ other parties, and uses entanglement-swapping (see the proof of Lemma~\ref{lemma:paths}) to link up the $k$ pairs as desired.
This $(2n-2)$-qubit state is $k$-pairable for the maximal $k=n/2$, and $n-1$ parties hold the minimal 1 qubit. However, the central party holds $n-1$ qubits and has to do all the work in obtaining the $k$-pairing. In the spirit of the small quantum networks of small quantum computers that we'll have in the near and medium-term future, we prefer constructions where none of the parties needs to hold many qubits.}
This result
essentially requires only 
classical off-the-shelf routing arguments.

Since qubits are expensive, especially when lots of error-correction is needed to protect them, we also look at what is possible when each party holds only 1 qubit, which is of course the bare minimum. In this case, we construct $n$-party (which in this case is the same as $n$-qubit) resource states for the case $k=1$ for arbitrary $n$ (this corresponds to the GHZ-state).
For $k\ge 2$ it is not clear that $k$-pairability is a property monotone in $n$.
What we have is that $k$-pairable states exist
%Sergey: I don't see how our construction for k=2 and n=16 can be extended to any n>=16. I think it can be extended only to any n=2^m with m>=4. The same applies to k=3 and n=32.
% Perhaps we should change inequalities to equalities here and in the full paper
for $k=2$ for $n=16$ and higher powers of~2 (we also give numerical evidence for the existence of a 2-pairable state on $n=10$ qubits); for $k=3$ for $n=32$ and higher powers of~2; and for arbitrary $k$ for $n=2^{3k}$ and higher powers of~2. 
These resource states will be superpositions over the codewords in a Reed-Muller code, and we use the stabilizer formalism to design LOCC protocols for obtaining the desired $k$ EPR-pairs from the resource state.
%Sergey: added some more details here
Our construction is efficient in the sense that all steps
in the LOCC protocol 
can be computed in time $\poly(n)$. To prove correctness of the protocol
we reduce the problem of EPR-pair generation to a version of
the polynomial regression problem: 
constructing a multi-variate $\FF_2$-valued 
polynomial of fixed degree that takes prescribed values at a given set of points.
One of our main technical contributions is developing tools for
solving a particular family of such polynomial regression problems.

Our protocols for generating EPR-pairs can be made fault-tolerant if each party encodes their $m$ qubits by some stabilizer-type quantum code~\cite{gottesman1998theory}.
Importantly, the encoded versions of our protocols only require logical Clifford gates and Pauli measurements.
These operations can be implemented transversally for many stabilizer codes, such as for instance the color codes of~\cite{bombin2006topological}.
Thus our protocols are well-suited for networks of small fault-tolerant
quantum computers and applications that require fast communication of qubits between $k$ arbitrary pairs of network nodes. 
A summary of our protocols can be found in Figure~\ref{fig:table}.

\vspace{1cm}
\begin{figure}[hbt]
\centerline{
\includegraphics[height=7.3cm]{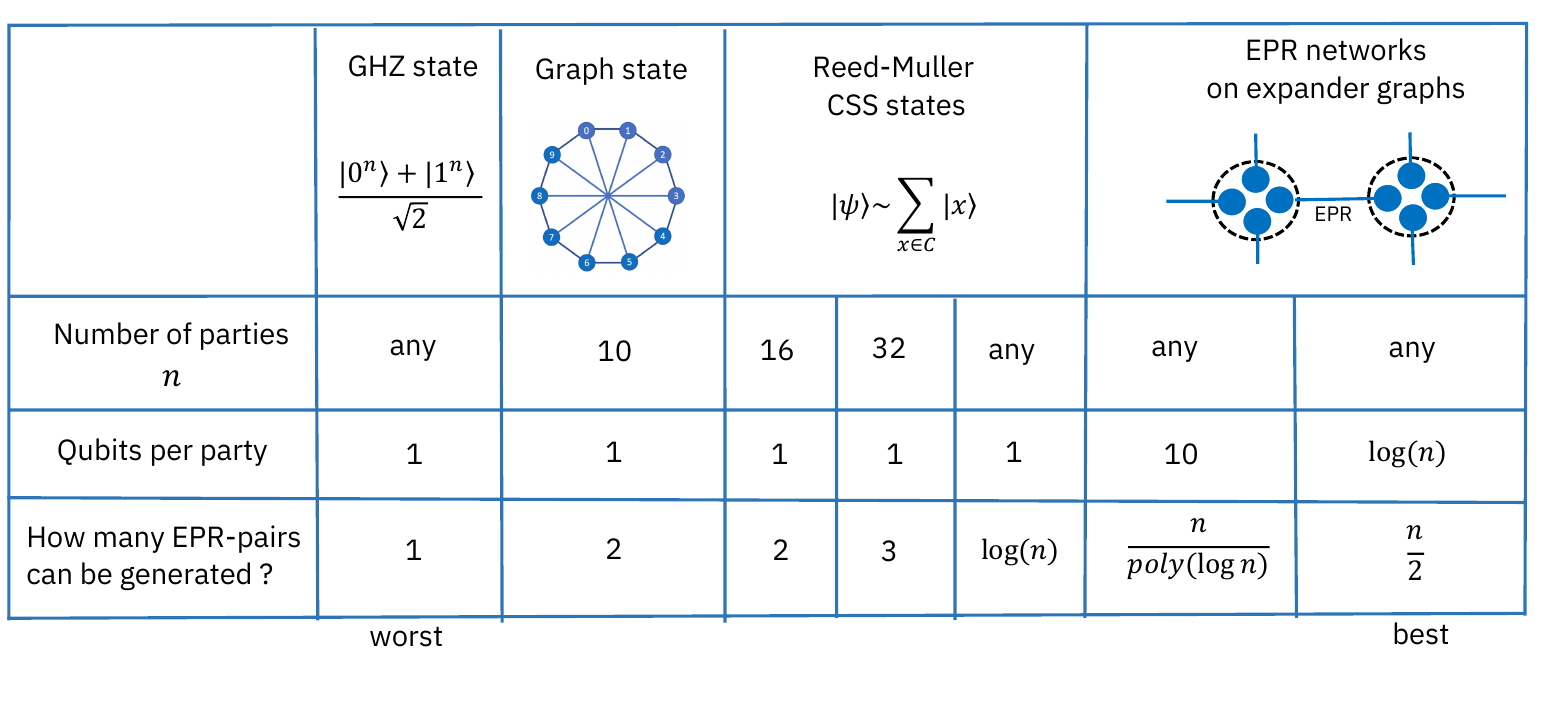}}
  \caption{Informal summary of our constructions. We consider resource states of different type 
  shared among $n$ parties such that each party holds a fixed number of qubits, ranging between $1$ and $\log{(n)}$.
  The last row shows the pairability parameter $k$---the number of EPR-pairs that can be 
 generated by LOCC starting from the respective resource state. For simplicity, we ignore constant factors in the $\log{(n)}$ scaling
 and ignore minor restrictions on the number of parties $n$ in certain cases, see Sections~\ref{sec:constrmultiple},\ref{sec:constrnoancilla} for details. Our proof of $k$-pairability is analytic in all cases except for $n=10$ and $32$ 
 where we provide only a computer-aided proof.} \label{fig:table}
\end{figure}
\vspace{1cm}

\subsection{Our results 2: obstructions}

Next we look at \emph{ob}structions, namely lower bounds on the achievable tradeoff between $n$, $k$, and~$m$. First consider the case where we can pair up any $k=n/2$ disjoint pairs. An ideal resource state would allow us to do this (i.e., be $n/2$-pairable) with only $m=1$ qubits per party. As mentioned above, we have shown that $k$-pairability with only 1 qubit per party is indeed achievable if $k\ll n/2$, but in Section~\ref{sec:obstructionskisn/2} we show it is not achievable if $k=n/2$: in that case $m=\Omega(\log\log n)$ qubits per party are needed. The proof is by an intricate dimension-counting argument, which gives upper and lower bounds on the dimension of the space of states that can be reached (with non-zero probability) by LOCC protocols on a fixed $nm$-qubit resource state~$\ket{\psi}$.
In Section~\ref{sec:obstructionsgeneralk} we extend this approach to the case of partial pairings, so where $k<n/2$, showing $m=\Omega\left(\log\left(\frac{k\log n}{n\log\log n}\right)\right)$ in this more general case. In particular, if 
$m=O(1)$ then $k$ can be at most 
$O\left(n\frac{\log\log n}{\log n}\right)=o(n)$, so achieving something close to complete pairability (i.e., $k=\Omega(n)$) requires a super-constant number of qubits per party.
Up to the power of the polylog, this matches our construction of $k$-pairable states with $k=n/\polylog(n)$ and $m=10$ qubits per party (Section~\ref{ssec:introresults1}).

We note here that our lower bounds apply to all possible LOCC-protocols, while our constructions are more lightweight, requiring only Pauli measurements and Clifford operations to create the $k$ EPR-pairs from the $n$-party resource state. We consider it a good thing that our upper bounds need only fairly restricted operations, while our lower bounds apply even to the general model with more powerful operations.

\subsection{Related and follow-up work}

To the best of our knowledge, the problem of what resource states allow LOCC protocols to construct EPR-pairs between any $k$ pairs of parties has not been studied before. However, we are aware of a number of related works, which we will briefly discuss here.\footnote{We won't go over the literature on quantum network coding (which assumes the ability to send qubits over specific edges) nor the massive experimental physics literature on preparation of entangled states on actual noisy hardware such as optical links and repeaters.}
These works can be organized into two categories.

\subsubsection*{Entanglement routing assisted by quantum communication.} Here some parties are allowed to exchange qubits in addition to performing LOCC
on the initial resource state.

Schoute et al.~\cite{schouteetal:shortcuts} consider quantum networks where parties can create EPR-pairs with their immediate neighbors and then use entanglement-swapping combined with efficient routing algorithms to create desired long-distance entanglement. This differs from our approach in allowing the ability to create new EPR-pairs when needed (which requires quantum communication), while we allow only LOCC starting from one fixed entangled resource state.

Hahn, Pappa, and Eisert~\cite{hahn2019quantum} also study a quite similar problem to ours, but starting from a network where some parties are linked via a quantum channel, while some other parties are not (directly) linked at all.
In addition to efficiently generating EPR-pairs they also study generating GHZ-states between specified parties.

Pant et al.~\cite{pant2019routing} study how a network whose nodes are connected via lossy optical links and have limited quantum processing capabilities,  can obtain EPR-pairs simultaneously between many pairs of nodes; their limitations per node are analogous to our goal of having only few qubits per party, but they allow quantum communication while we allow only classical communication.

\subsubsection*{Restricted variants of $k$-pairability.} 
Here the parties are only allowed to perform  LOCC on
the initial 
resource state. The parties may be able to generate $k$ EPR-pairs for some but not all choices of such pairs.

Miguel-Ramiro, Pirker, and D\"ur~\cite{MPD:networks}
consider resource states interpolating between the two extreme cases discussed in our  introduction: the GHZ state shared among $n$ parties and $\binom{n}{2}$
EPR states shared between each pair of parties. This work proposed clustering
and merging algorithms that produce resource states with the desired functionality. 
However, these methods do not appear to provide $k$-pairable resource states with few qubits per party.

Du, Shang, and Liu~\cite{DSL:multipoint} study a problem similar to ours but starting from resource states that consist only of pre-shared EPR-pairs between adjacent parties in a given network. Like us, they use entanglement-swapping to create EPR-pairs between distant parties.

Contreras-Tejada, Palazuelos, and de Vicente~\cite{CPV:noisynetworks} gave similar constructions as we gave in Section~\ref{sec:constrmultiple} (with EPR-pairs on the edges of an $n$-vertex graph), but focus primarily on the question for what type of graphs the long-range entanglement survives constant amounts of noise on the edges.

Illiano et al.~\cite{10008516} study 1-pairable states with the additional property that the identity of the one pair that ends up sharing an EPR-pair remains unknown to the other $n-2$ parties (in fact one can get this easily from the $n$-party GHZ-state if the other parties broadcast their measurement outcomes to everyone rather than sending it only to the two parties that want an EPR-pair).

Meignant, Markham, and Grosshans~\cite{MMG:distributing} and Fischer and Townsley~\cite{fischer&townsley21} studied what is roughly a partial ``dual'' of our problem: how many EPR-pairs between which parties of a given $n$-party network are necessary and sufficient to generate a classically given $n$-party graph state?

Dahlberg, Helsen, and Wehner~\cite{dahlberg2020transforming} show that it is NP-complete to decide whether a classically given $n$-party stabilizer state can be transformed into a set of EPR-pairs on specific qubits using only single-qubit Clifford operations, single-qubit Pauli measurements and classical communication (such protocols are more restricted than the LOCC we allow in our paper). They also give some algorithms to do the transformation in some special cases~\cite{Dahlberg_2020}.

\subsubsection*{Follow-up work.}
Our construction in Section~\ref{sec:constrnoancilla} has $m=1$ and $k=\Theta(\log n)$, so the number of qubits it uses to achieve $k$-pairability with one qubit per party, scales as $n=2^{\Omega(k)}$. Claudet, Mhalla, and Perdrix~\cite{CMP:smallkpairable}
recently improved this exponential scaling to a polynomial one: they show the existence of $k$-pairable  graph states with one qubit per party, using only $n = O(k^3 (\log k)^3)$ qubits (=parties). Their proof uses the  probabilistic method, so is not as constructive as ours. However, very recently this was improved by Cautr\`et, Claudet, Mhalla, Perdrix, Savin, and Thomass\'e~\cite{CCMPST:vertexminor} to a probabilistic proof with $n = O(k^2)$ qubits and an explicit construction with $n = O(k^4)$ qubits.

\section{Constructions with multiple qubits per party}\label{sec:constrmultiple}

In this section we combine classical network-routing strategies
and the standard entanglement-swapping protocol to 
construct $n$-party $k$-pairable resource states 
with $k$ nearly linear in $n$, 
such that each party holds at most $m=O(1)$ qubits.
Increasing the number of qubits per party from a constant to $m=O(\log{n})$ yields maximally pairable
resource states with $k=n/2$.

 Suppose $G=(V,E)$ is a graph with $n$ vertices
$V=\{1,2,\ldots,n\}$. Vertex $i\in V$ represents the $i$-th party. 
We place two qubits at every edge $(i,j)\in E$ such that in total there are $n=2|E|$
qubits. Define an $n$-party resource state 
\[
\ket{\psi_G} = \bigotimes_{(i,j)\in E} \ket{\Phi^+_{i,j}},
\]
where $\ket{\Phi^+_{i,j}}$ is an EPR-pair located on edge $(i,j)$. 
The state $|\psi_G\rangle$ is shared among $n$ parties such that the two qubits
located on an edge $(i,j)\in E$ are assigned to the parties $i$ and $j$
who share the EPR-pair $|\Phi^+_{i,j}\rangle$. Thus each party shares one EPR-pair with each of its neighbors.
 Accordingly, each party holds at most $d$ qubits, where $d$ is the maximum vertex degree of $G$.

\begin{lemma}
\label{lemma:paths}
The resource state $|\psi_G\rangle$ is $k$-pairable if for any choice of $k$ disjoint pairs of vertices
$\{a_1,b_1\},\ldots,\{a_k,b_k\}$ in the graph $G$, there exist $k$ edge-disjoint paths $P_1,\ldots,P_k\subseteq E$ such that the path $P_i$ connects
vertices $\{a_i,b_i\}$.
\end{lemma}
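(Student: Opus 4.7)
The plan is to reduce the lemma to the standard \emph{entanglement-swapping} primitive applied in parallel to each of the guaranteed edge-disjoint paths. Given any target $k$-pairing $\{a_1,b_1\},\ldots,\{a_k,b_k\}$, the hypothesis of the lemma supplies edge-disjoint paths $P_1,\ldots,P_k$ with $P_i$ connecting $a_i$ to $b_i$. Since the paths are edge-disjoint, no edge-EPR-pair of $|\psi_G\rangle$ is used by more than one of the $k$ subprotocols, so it suffices to specify a correct LOCC protocol per path; the $k$ protocols can then be executed independently and in parallel on disjoint qubit registers.

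For a single path $P_i = (v_0=a_i, v_1, v_2, \ldots, v_\ell = b_i)$, I would recall the textbook entanglement-swapping procedure and lift it to arbitrary length. Each intermediate party $v_j$ (for $1 \le j \le \ell-1$) holds two qubits, one shared with $v_{j-1}$ and one with $v_{j+1}$, forming two EPR-pairs $|\Phi_{v_{j-1},v_j}\rangle$ and $|\Phi_{v_j,v_{j+1}}\rangle$. The party $v_j$ performs a Bell-basis measurement on its two qubits and broadcasts the two classical outcome bits to $b_i$ (say). Using the identity
\[
|\Phi^+\rangle_{AB}\otimes |\Phi^+\rangle_{CD} \;=\; \tfrac{1}{2}\sum_{x,z\in\{0,1\}} |\Phi_{x,z}\rangle_{BC}\otimes (X^x Z^z)_D |\Phi^+\rangle_{AD},
\]
swapping the two EPR-pairs incident to $v_j$ yields an EPR-pair between $v_{j-1}$ and $v_{j+1}$ up to a Pauli correction determined by the measurement outcome. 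Iterating this identity along the path, after all $\ell-1$ intermediate parties have measured, the endpoints $a_i$ and $b_i$ share an EPR-pair up to a single Pauli $X^x Z^z$ whose exponents are the XOR of the broadcasted bits. Once $b_i$ receives the classical messages, it computes $(x,z)$ and applies $X^x Z^z$ to undo the correction, leaving a perfect EPR-pair between $a_i$ and $b_i$.

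All operations used are local quantum operations (Bell measurements and Pauli gates) together with classical communication of the measurement outcomes, so the whole procedure is LOCC as required. Running these $k$ path-protocols simultaneously on the disjoint sets of edge-EPR-pairs produces the desired $k$ EPR-pairs on the prescribed pairs $\{a_i,b_i\}$, which proves the lemma. The main conceptual step is simply noting that edge-disjointness of the paths is exactly what allows the per-path swapping protocols to be composed in parallel without any interference; there is no substantial obstacle beyond invoking entanglement-swapping correctly along an arbitrary-length path.
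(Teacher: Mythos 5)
Your proof is correct and takes essentially the same approach as the paper: entanglement-swapping along each path (the paper implements the Bell measurement as a parity measurement followed by Hadamard-basis measurements, but this is the same primitive), with edge-disjointness guaranteeing the $k$ per-path protocols compose without interference.
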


\begin{proof}
Suppose Charlie shares an EPR-pair with Alice and another EPR-pair with Bob. The following well-known entanglement-swapping protocol uses LOCC to create an EPR-pair between Alice and Bob. First, Charlie measures the parity of his two qubits in the standard basis $\{\ket{0},\ket{1}\}$, sends the 1-bit measurement outcome to Bob, and conditioned on it he applies a $\sigma^x$ (bitflip) on his second qubit and Bob applies a $\sigma^x$ to his qubit. This results in a 4-qubit GHZ-state $\frac{1}{\sqrt{2}}(\ket{0000}+\ket{1111})$. Now Charlie measures each of his two qubits in the Hadamard basis $\{\ket{+},\ket{-}\}$, sends the parity of the two outcomes to Bob, who conditioned on that bit applies a $\sigma^z$ (phaseflip) to his qubit. It may be verified that now Alice and Bob share an EPR-pair.

The creation of the $k$ EPR-pairs using the $k$ edge-disjoint paths is now fairly straightforward: the parties on the path from $a_i$ to $b_i$ use the EPR-pairs with their neighbors on the path to create an EPR-pair between $a_i$ and $b_i$ via entanglement-swapping. Because the $k$ paths are edge-disjoint, no edge (=EPR-pair) is used more than once.
\end{proof}

Below it will be convenient to relax the edge-disjointness condition  in Lemma~\ref{lemma:paths} 
and consider pairability by nearly edge-disjoint paths. 
More precisely, suppose $p\ge 1$ is an integer. Consider a resource state
$|\psi_G\rangle^{\otimes p}$ such that each copy of $|\psi_G\rangle$ is shared among $n$ parties as specified above. Then each party holds at most $pd$ qubits, where $d$
is  the maximum vertex degree of $G$. 
Each party shares $p$ EPR-pairs with each of its neighbors.
 An immediate corollary of Lemma~\ref{lemma:paths} is the following.
 
\begin{corollary}
\label{corol:paths}
The resource state $|\psi_G\rangle^{\otimes p}$ is $k$-pairable
if for any choice of $k$ disjoint pairs of vertices
$\{a_1,b_1\},\ldots,\{a_k,b_k\}$ in the graph $G$, there exist $k$
paths $P_1,\ldots,P_k\subseteq E$ such that the path $P_i$ connects
vertices $\{a_i,b_i\}$ and each edge of $G$ belongs to at most $p$ paths.
\end{corollary}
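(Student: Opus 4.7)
The plan is to reduce Corollary~\ref{corol:paths} to Lemma~\ref{lemma:paths} via a multigraph lifting. Let $G'$ be the multigraph obtained from $G$ by replacing each edge $e \in E$ with $p$ parallel copies $e^{(1)}, \ldots, e^{(p)}$. Then $|\psi_G\rangle^{\otimes p}$ is naturally identified with $|\psi_{G'}\rangle$: the $\ell$-th copy of the EPR-pair on edge $e$ in $|\psi_G\rangle^{\otimes p}$ is assigned to the parallel edge $e^{(\ell)}$ of $G'$, and the parties $i,j$ incident to $e$ hold both qubits of that EPR-pair. Note that Lemma~\ref{lemma:paths} and its proof extend verbatim to multigraphs, since entanglement-swapping is a purely local operation that does not care whether the path in which it is applied lives in a simple graph or a multigraph.

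Given $k$ disjoint pairs $\{a_i, b_i\}$ and paths $P_1, \ldots, P_k$ in $G$ whose edge-load is at most $p$, the next step is to lift the $P_i$ to edge-disjoint paths $P_1', \ldots, P_k'$ in $G'$. For each edge $e \in E$, let $S_e = \{\,i : e \in P_i\,\}$, which by hypothesis satisfies $|S_e| \leq p$. I would fix, for each such $e$, an arbitrary injection $\pi_e : S_e \to \{1, \ldots, p\}$, and define $P_i'$ from $P_i$ by replacing every traversal of $e$ with a traversal of the parallel copy $e^{(\pi_e(i))}$. By construction, the paths $P_i'$ are pairwise edge-disjoint in $G'$ and each $P_i'$ still connects $a_i$ to $b_i$.

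Applying the multigraph version of Lemma~\ref{lemma:paths} to $G'$ then produces the $k$ desired EPR-pairs via entanglement-swapping along $P_1', \ldots, P_k'$, establishing $k$-pairability of $|\psi_G\rangle^{\otimes p}$. The argument is essentially combinatorial bookkeeping and I do not anticipate any substantive obstacle; the only point worth double-checking is that Lemma~\ref{lemma:paths} really is insensitive to edge multiplicities, which follows at once from the fact that its proof only manipulates EPR-pairs and their endpoints, never the underlying graph structure.
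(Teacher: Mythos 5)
Your proof is correct and matches the paper's intent: the paper states this as an ``immediate corollary'' of Lemma~\ref{lemma:paths} with no further argument, and the implicit reasoning is exactly your bookkeeping of assigning each path traversing an edge its own copy of the $p$ EPR-pairs on that edge so that no EPR-pair is consumed twice. Your multigraph formalization is a clean and valid way to make that explicit.
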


To keep the number of qubits per party small, we would like the graph $G$ to have 
a small vertex degree and, at the same time, allow vertex pairability by (nearly) edge-disjoint paths
for any choice of $k$ disjoint vertex pairs. We would like to maximize the pairability parameter $k$
while keeping the vertex degree $d$ as small as possible.
Luckily, the problem of constructing such graphs has been already studied 
due to its importance for classical communication networks. 
A graph $G=(V,E)$ is said to have \emph{edge expansion} $h$
if for any subset of vertices $S\subseteq V$ with $|S|\le |V|/2$, the number of edges that
have exactly one endpoint in $S$ is at least $h |S|$.
We shall use the following fact.

\begin{fact}[\bf Broder, Frieze, Upfal~\cite{broder1994existence}]
\label{fact:paths}
For any constants $d\ge 3$ and $h>1$ there exists a constant $c>0$
such that the following is true.
Suppose $G$ is an  $n$-vertex $d$-regular graph  with edge expansion at least $h$.
Then for any choice of $k\le n/\log^c{(n)}$ disjoint vertex pairs in $G$ there exists
a family of paths $P_1,\ldots,P_k$ connecting the chosen pairs of vertices such that every
edge of $G$ belongs to at most two paths. 
These paths can be found in time $\poly(n)$. 
\end{fact}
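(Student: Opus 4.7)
The plan is to exploit two classical facts about expanders: (i) random walks mix to near-uniform in $O(\log n)$ steps, and (ii) once past the mixing time, the probability that the walk is currently traversing a specified edge is $O(1/(nd))$. Since $G$ is $d$-regular with edge expansion at least $h>1$, the discrete Cheeger inequality bounds its normalized spectral gap below by a positive constant depending only on $d$ and $h$, so the mixing time is $L = O(\log n)$ with the implicit constant depending only on $d$ and $h$.

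Given $k$ disjoint vertex pairs $\{a_i,b_i\}$, I would sample, independently for each $i$, a lazy random walk of length $L$ starting at $a_i$ and another starting at $b_i$. The two endpoints are each within total-variation distance $1/\poly(n)$ of uniform on $V$. Next I would pair up each matched $a_i$-walk endpoint with the corresponding $b_i$-walk endpoint by inserting a short connecting segment, and concatenate the three pieces into a single candidate path $P_i$ from $a_i$ to $b_i$. The existence of these connecting segments is itself proved via a second-level random-walk/expansion argument, as discussed below.

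To control edge congestion, fix any edge $e$. Across distinct pairs $i$ the indicators ``walk uses edge $e$'' are independent, and each walk traverses $e$ in expectation only $O(L/(nd))$ times once past mixing. Summed over the $2k$ walks, the expected congestion on $e$ is $O(kL/(nd))$, which is $o(1)$ provided $k \le n/\log^c n$ for a sufficiently large constant $c$. A Chernoff tail bound then bounds the probability of congestion exceeding $2$ on a fixed edge by $1/\poly(n)$, and a union bound over the $nd/2$ edges still leaves a failure probability $o(1)$. The exponent $c$ is chosen to absorb $L$, the required Chernoff deviation, and the union-bound factor.

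The main technical obstacle is the intermediate matching step: after both half-walks land at near-uniform but otherwise arbitrary vertices, one must actually pair these endpoints and connect each pair by a short low-congestion path inside the same graph $G$. The standard resolution is to apply the random-walk/congestion analysis a second time to an auxiliary bipartite graph of candidate routings, and use the edge-expansion of $G$ to verify a Hall-type condition that guarantees a suitable matching of endpoint pairs exists; crucially, the combined congestion contributed by both phases still fits within the budget of $2$. Finally, derandomization by the method of conditional expectations, or simple resampling with explicit verification, upgrades the argument to a deterministic $\poly(n)$-time algorithm that outputs the paths $P_1, \ldots, P_k$.
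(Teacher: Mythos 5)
First, a contextual remark: the paper does not prove this statement at all --- it is imported verbatim as an external result of Broder, Frieze and Upfal --- so there is no in-paper argument to compare against, and your sketch has to stand entirely on its own. It does not, for two concrete reasons.

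The decisive gap is the congestion analysis. With $k\le n/\log^c n$ endpoints and walks of length $L=O(\log n)$, the expected number of walks crossing a fixed edge $e$ is $O(kL/(nd))=O(\log^{1-c}n)$, i.e.\ only \emph{polylogarithmically} small. The probability that three or more independent walks hit $e$ is then of order $\bigl(kL/(nd)\bigr)^{3}=1/\polylog(n)$, not $1/\poly(n)$ as you claim; a Chernoff bound cannot do better than this when the mean is $1/\polylog(n)$, because the tail at a constant threshold is governed by $(e\mu/t)^{t}$. Union-bounding over the $\Theta(n)$ edges therefore gives a bound of order $n/\polylog(n)\gg 1$: with high probability polynomially many edges will carry congestion $3$ or more, and the scheme of ``independent random walks plus union bound'' simply fails. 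A second, related problem is the pre-mixing portion of the walks: your per-step bound $O(1/(nd))$ only holds after the walk has mixed, but nothing prevents many of the $a_i$'s from being clustered in a small ball, in which case the first few steps of all those walks pile onto the few edges near that cluster. Finally, the ``intermediate matching step'' --- connecting the near-uniform endpoints of the two half-walks by short low-congestion segments --- is deferred to ``a second-level random-walk/expansion argument'' and ``a Hall-type condition'' without any detail; this is precisely the technical heart of the Broder--Frieze--Upfal construction (their flow/splitter machinery combined with an alteration or Lov\'asz-Local-Lemma-type repair of over-congested edges), and it is what makes the theorem nontrivial. Your outline identifies the right ingredients but the quantitative core of the argument, as written, does not close.
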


It is known~\cite{bollobas1988isoperimetric} that $d$-regular graphs with edge expansion $h>1$ exist
for any constant $d\ge 5$ and all large enough $n$. Thus Corollary~\ref{corol:paths} and Fact~\ref{fact:paths} imply that for 
all large enough $n$ there exist $k$-pairable resource states with $k=n/\polylog n$ and at most 10 qubits per party. 

Let us say that  a graph $G$ is path-pairable 
if the number of vertices $n$ is even and 
the condition of Lemma~\ref{lemma:paths} holds 
for $k=n/2$. We shall need the following fact
stated as Corollary~2 in~\cite{gyHori2017note}.
\begin{fact}[\bf Gy{\H{o}}ri, Mezei,  M{\'e}sz{\'a}ros~\cite{gyHori2017note}]
\label{fact:pairable}
For any integer $q\ge 1$
there exists a $d$-regular $n$-vertex path-pairable graph with $d=18q$ and $n=18^q$. 
\end{fact}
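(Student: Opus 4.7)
My plan is to prove Fact~\ref{fact:pairable} by induction on $q$, building $G_q$ through a recursive blow-up construction. The degree and vertex-count recurrences $d_{q+1} = d_q + 18$ and $n_{q+1} = 18\, n_q$ are the tell-tale signs of such a construction, with the ``$+18$'' coming from $18$ additional edges per vertex linking the $18$ copies at each level.

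For the base case $q=1$, I would exhibit an $18$-regular multigraph on $18$ vertices that is path-pairable. A natural candidate is $K_{18}$ together with a perfect matching of doubled edges, which makes every vertex have degree $17+1=18$. With the degree this large relative to $n=18$, any perfect matching of the $9$ demand pairs can be realized by edge-disjoint paths of length at most two: serve each demand either by its direct edge or by a length-two detour through a spare intermediate vertex, choosing intermediates greedily and using the multiplicity to resolve any last-step clash.

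For the inductive step, assume $G_q$ is $18q$-regular, path-pairable, on $n_q = 18^q$ vertices. I would form $G_{q+1}$ from $18$ disjoint copies $G^{(1)},\ldots,G^{(18)}$ of $G_q$ together with, for each vertex $v\in V(G_q)$, a \emph{linking} multigraph $H_v$ on $\{v^{(1)},\ldots,v^{(18)}\}$ that is itself the base-case graph from $q=1$ (hence $18$-regular and path-pairable on $18$ vertices). The resulting graph has $|V(G_{q+1})| = 18 \cdot 18^q = 18^{q+1}$ vertices and regular degree $18q + 18 = 18(q+1)$. Given any perfect matching $M$ on $V(G_{q+1})$, I would route the demands in two stages: first, for each cross-copy pair $\{a^{(i)},b^{(j)}\}\in M$ with $i\ne j$, use a path inside one of the linking gadgets $H_v$ to transport one endpoint into a common copy, thereby inducing a perfect matching $M^{(i)}$ inside each $G^{(i)}$; then apply the inductive hypothesis in each copy to realize $M^{(i)}$ by edge-disjoint paths. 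Path-disjointness across the two stages is automatic because the two stages use disjoint edge sets.

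The main obstacle is orchestrating stage one so that (a) each linking gadget $H_v$ is used within its own path-pairability budget, and (b) the matchings $M^{(i)}$ induced on the copies are bona fide perfect matchings of those copies. I expect to handle this with an auxiliary ``super-graph'' on the $18$ copies (viewed as super-vertices) together with a Hall-type matching argument: the cross-copy demands decompose into a family of requests that can be served one $v$-cluster at a time by invoking the base-case path-pairability of each $H_v$, while simultaneously assigning transport endpoints inside each copy so that the induced $M^{(i)}$ is exactly a perfect matching there. Once this combinatorial bookkeeping is in place, the inductive hypothesis closes the argument.
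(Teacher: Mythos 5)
First, a point of reference: the paper does not prove this statement at all. Fact~\ref{fact:pairable} is imported verbatim as Corollary~2 of Gy{\H{o}}ri, Mezei and M{\'e}sz{\'a}ros~\cite{gyHori2017note}, so there is no in-paper argument to compare yours against and your proposal has to stand on its own. (A small remark on the base case: for $q=1$ an $18$-regular graph on $18$ vertices is necessarily a multigraph, as you implicitly acknowledge; since $K_{18}$ alone already realizes any perfect matching by its own edges, the base case is unproblematic provided multigraphs are permitted.)

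The inductive step, however, has a genuine gap, and it sits exactly where you locate ``the main obstacle.'' Your induction hypothesis is plain path-pairability, i.e., the ability to route demand graphs that are perfect matchings, but after stage one the induced demand set $M^{(i)}$ inside copy $i$ is in general \emph{not} a perfect matching. A demand $\{a^{(i)},b^{(j)}\}$ resolved in copy $l$ necessarily lands at the fixed positions $a^{(l)}$ and $b^{(l)}$, because the gadget $H_b$ only joins the $18$ clones of the position $b$; and $b^{(l)}$ is already the endpoint of its own original demand. So unless you can guarantee that each position of $G_q$ is charged at most once per copy, some vertex of copy $l$ ends up as an endpoint of two induced demands and the hypothesis does not apply. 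Arranging that every position is charged exactly once in every copy amounts to properly $18$-edge-colouring the multigraph on $V(G_q)$ whose edges are the position-pairs occurring in $M$ --- an $18$-regular multigraph, possibly with loops when the two endpoints share a position. By Shannon's bound such a multigraph can require up to $27$ colours (e.g., a fat triangle with $9$ parallel edges per side), and your additional restriction that each demand be resolved in one of its two home copies only makes the colouring constraint harder; no Hall-type bookkeeping can always succeed here. This is precisely why the arguments in the literature, including~\cite{gyHori2017note}, run the induction on the stronger property of \emph{terminal-pairability} --- routability of demand \emph{multigraphs} of bounded maximum degree, resting on the Csaba--Faudree--Gy{\'a}rf{\'a}s--Lehel--Schelp theorem for complete graphs --- rather than on path-pairability itself. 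To repair your proof you would need to strengthen the induction hypothesis in this way and rederive the degree budget $18$ from the terminal-pairability threshold of the linking gadgets; as written, the passage from $M$ to the $M^{(i)}$ does not go through.
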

Combining Lemma~\ref{lemma:paths} and Fact~\ref{fact:pairable} we infer that $m=O(\log n)$ qubits per party suffices for complete pairings, in contrast with the naive resource state where every one of the $\binom{n}{2}$ pairs shares an EPR-pair and hence each party holds $m=n-1$ qubits.

\begin{corollary}\label{cor:mis10}
There exists
a family of $n$-party $(n/2)$-pairable resource states 
with $m=18\log_{18}(n)\approx 4.3 \log_2(n)$ qubits per party. 
\end{corollary}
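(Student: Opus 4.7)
The plan is to invoke Fact~\ref{fact:pairable} to obtain a graph with suitable pairability properties, then apply Lemma~\ref{lemma:paths} to convert this into a statement about the resource state $|\psi_G\rangle$.

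First I would fix an integer $q \ge 1$ and let $G$ be the $d$-regular $n$-vertex path-pairable graph provided by Fact~\ref{fact:pairable}, so that $d = 18q$ and $n = 18^q$. By the definition of path-pairability given just before Fact~\ref{fact:pairable}, for any choice of $n/2$ disjoint pairs of vertices of $G$ there exist $n/2$ edge-disjoint paths in $G$ connecting those pairs. This is precisely the hypothesis of Lemma~\ref{lemma:paths} for $k = n/2$, so the resource state $|\psi_G\rangle$ is $(n/2)$-pairable.

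Next I would bound the number of qubits per party. By construction of $|\psi_G\rangle$, each party holds one qubit per incident edge, hence at most $d = 18q$ qubits. Substituting $q = \log_{18}(n)$ gives $m \le 18\log_{18}(n)$, and converting to base $2$ yields $m \le 18/\log_2(18) \cdot \log_2(n) \approx 4.3\log_2(n)$, as claimed. This construction yields the desired family of resource states for every $n$ of the form $18^q$ with $q \ge 1$.

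There is no real obstacle here: the corollary is essentially a packaging step that chains Fact~\ref{fact:pairable} through Lemma~\ref{lemma:paths}. The only minor caveat worth flagging to the reader is that the statement gives a family indexed by $q$, so the parameter $n$ ranges over the powers of $18$ rather than over all even integers; if one wanted $(n/2)$-pairability for arbitrary even $n$, one could pad with isolated EPR-pairs between designated pairs of unused parties, at the cost of at most a constant-factor blowup in $m$, but this is not needed for the stated corollary.
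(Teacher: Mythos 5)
Your proof is correct and follows exactly the paper's (one-line) argument: combine Fact~\ref{fact:pairable} with Lemma~\ref{lemma:paths}, noting that a $d$-regular graph gives at most $d=18q=18\log_{18}(n)$ qubits per party. The extra remarks about the range of $n$ are accurate but not needed for the stated corollary.
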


\section{Constructions that use only one qubit per party}\label{sec:constrnoancilla}

%SBB: moved this comment to the next paragraph
%Since qubits (especially error-corrected logical qubits, built on top of multiple physical qubits) are expensive, we would prefer to have $n$-party $k$-pairable resource states with as few qubits per party as possible.
%In this section we will look at resource states where each party just has one qubit,  and we will try to see for what $k$ we can have $k$-pairability.

In this section we study $k$-pairability  of  $n$-party quantum states under the most stringent restriction: each party holds only one qubit
(obviously, $k$-pairability with $k\ge 1$ is impossible if some party has no qubits).
%This problem is well-motivated since
%qubits, especially error-corrected logical qubits, built on top of multiple physical qubits, are expensive and we would prefer to have $n$-party $k$-pairable resource states with as few qubits per party as possible.

We have already seen that the $n$-qubit GHZ-state shared by $n$ parties is $1$-pairable.
Naively, one might think that the GHZ-example is already best-possible and $k=1$ is as far as one can get
with one qubit per party. Surprisingly, this naive intuition turns out to be wrong.
Here we 
give examples of $k$-pairable states with one qubit per party for an arbitrary $k$.
We choose the resource state $\ket{\psi}$
as the uniform superposition of codewords of a suitable linear code $\calC$ of codelength~$n$.
The GHZ-example $|\psi\ra = \frac{1}{\sqrt{2}}(|0^n\ra+|1^n\ra)$
with $k=1$ is the special case with the repetition code $\calC=\{0^n,1^n\}$.

To achieve $k$-pairability for $k\ge 2$ we choose $\calC$ as the Reed-Muller code $\mathsf{RM}(k-1,m)$ with 
a suitable parameter $m$, see below for 
details.\footnote{We follow the standard notation for the parameter
$m$ of Reed-Muller codes. It should not be confused with the number of qubits per party, which equals~1 throughout this section.} The LOCC protocol converting $\ket{\psi}$ to the desired EPR-pairs
can be described by a
pair of disjoint subsets $X,Z\subseteq [n]$
such that all qubits contained in $Z$ and $X$ are measured in the standard basis $\{|0\ra,|1\ra\}$
and the Hadamard basis $\{|+\ra,|-\ra\}$ respectively.
The protocol creates EPR-pairs on $2k$ qubits contained in the complement $[n]\setminus (X\cup Z)$.
Here $|\pm \ra=(|0\ra\pm |1\ra)/\sqrt{2}$.
Finally, a Pauli correction
$\sigma^x$ or $\sigma^z$ is applied to each EPR qubit $a_1,\ldots,a_k$. The correction
depends on the measurement outcomes and requires classical communication from 
parties in $X\cup Z$ to parties $a_1,\ldots,a_k$.

Our construction is efficient in the sense that the
subsets of qubits $X$ and $Z$ can be computed in time $O(n)$ for any given choice of EPR qubits.
Furthermore, the initial resource state $\ket{\psi}$ can be prepared by a quantum circuit of size $O(n^2)$.
While describing the subsets $X$ and $Z$ is relatively simple, proving that the resulting LOCC
protocol indeed generates the desired EPR-pairs is considerably more complicated in the case $k\ge 2$,
as compared with the GHZ-example for $k=1$.
For resource states based on Reed-Muller codes 
$\mathsf{RM}(k-1,m)$, we will see below that 
the proof can be reduced to solving a polynomial regression problem: constructing a polynomial $f\, : \, \FF_2^m\to \FF_2$ of degree $k-1$ whose values $f(x)$ are fixed at a certain subset of points $x$.
The number of qubits 
$n = n(k)$ used by our construction is given by $n(2) = 16$, $n(3) = 32$, and $n(k) = 2^{3k}$ for $k\ge 4$
(note that the number of qubits is the same as the number of parties throughout this section).
While this scaling $n(k)$  may be far from optimal, the main value of our result
is demonstrating that $k$-pairability with 
an arbitrary $k$ is possible in principle even in the most restrictive setting
with one qubit per party.
To the best of our knowledge, this was not known prior to our work.
We leave as an open question whether $k$-pairable states based on Reed-Muller codes
can achieve a more favorable scaling $n(k)=\poly(k)$ or even  the scaling
$n(k)=O(k\,\polylog(k))$ that can be achieved if we allow 10 qubits per party instead of 1 (end of Section~\ref{sec:constrmultiple}). Such an improvement may require consideration of more general LOCC protocols
that use all three types of Pauli measurements, in the $\sigma^x,\sigma^y,\sigma^z$ bases.

Finally, we describe a numerically-found example of a  $10$-qubit $2$-pairable state with one qubit per party; this is more efficient than the 16-qubit $2$-pairable state from the above results.  This example is based on a stabilizer-type resource state and an LOCC protocol with
Pauli measurements. We also show that no stabilizer state with $n\le 9$ qubits is $2$-pairable using only Pauli measurements. In that sense our $10$-qubit example is optimal.

The rest of this section is organized as follows.
We introduce CSS-type resource states
and give sufficient conditions for $k$-pairability
of such states in Section~\ref{subs:css}.
Reed-Muller codes and their basic properties
are described in 
Section~\ref{subs:RM}. 
We define resource states based on Reed-Muller codes
and describe our LOCC protocol 
for generating EPR-pairs
in Section~\ref{subs:XZ}.
A proof of $k$-pairability
for $k=2,3$, and for an arbitrary $k$ is given
in Sections~\ref{subs:2-pairable}, \ref{subs:3-pairable},
and~\ref{subs:k-pairable} respectively.
Finally, we describe the $10$-qubit $2$-pairable
example in Section~\ref{subs:10qubit}.

\subsection{Pairability of CSS stabilizer states}
\label{subs:css}

To describe our construction we need more notation.
Let $\FF_2^n=\{0,1\}^n$ be the $n$-dimensional vector space over~$\FF_2$.
Given a vector $f\in \FF_2^n$ and a bit index $j$,  let  $f(j)\in \{0,1\}$ be the $j$-th bit of $f$.
We write $f\cdot g = \sum_{j=1}^n f(j) g(j)$ for the dot product of vectors $f,g\in \FF_2^n$.
Unless stated otherwise, addition of binary vectors and the dot product  are computed  modulo two. 
The weight of a vector $f\in \FF_2^n$ is the number of bits $j$ such that $f(j)=1$.
A  linear code of length $n$ is simply a linear subspace $\calC\subseteq \FF_2^n$.
Vectors $f\in \calC$ are called codewords. The code is said to have distance $d$
if any nonzero codeword has weight at least $d$.
The dual code of $\calC$, denoted $\calC^\perp$, is the subspace 
of vectors $f\in \FF_2^n$
such that $f\cdot g =0$ for all $g\in \calC$.
An affine subspace of dimension $d$ is a set of vectors $\{f+h\, \mid \, f\in \calC\}$,
where $\calC\subseteq \FF_2^n$ is a linear subspace of dimension $d$ and $h\in \FF_2^n$ is some fixed vector.

Suppose our $n$-qubit resource state $\ket{\psi}$ has the form 
\be
|\psi\ra =|\calC\ra :=\frac1{\sqrt{|\calC|}}  \sum_{f\in \calC} |f\ra,
\ee
where $\calC\subseteq \FF_2^n$ is a linear code.
Such states are known as Calderbank-Shor-Steane (CSS) stabilizer states~\cite{calderbank1996good,steane1996multiple,calderbank1998quantum}.
It is well-known that  the state $|\calC\ra$ can be prepared by a quantum circuit of size $O(n^2)$
for any linear code $\calC$, 
see for instance~\cite{aaronson2004improved}.
We begin by deriving a sufficient condition under which a CSS stabilizer state
is $k$-pairable. Below  we assume that each of the $n$ parties holds only one qubit.

\begin{lemma}[\bf Pairability of CSS stabilizer states]
\label{lemma:css}
Suppose $\calC\subseteq \FF_2^n$ is a linear code.
Suppose for any set of $k$ disjoint pairs of qubits
$\{a_1,b_1\},\ldots,\{a_k,b_k\}$
 there exists a partition of the $n$ qubits into three disjoint subsets
\be
\{1,2,\ldots,n\} = EXZ
\ee
such that  $E=\{a_1,b_1,\ldots,a_k,b_k\}$ and
the following conditions hold for all $i=1,2,\ldots,k$:
\begin{description}
\item[\bf CSS1:] $\exists \; f\in \calC$ such that $f(a_i)=f(b_i)=1$ and for all $p\in EZ\setminus \{a_i,b_i\}$: $f(p)=0$
\item[\bf CSS2:] $\exists \; \bar{f}\in \calC^\perp$ such that $\bar{f}(a_i)=\bar{f}(b_i)=1$ and for all $p\in EX\setminus \{a_i,b_i\}$: $\bar{f}(p)=0$
\end{description}
Then the $n$-qubit state $|\calC\ra=\frac1{\sqrt{|\calC|}} \sum_{f\in \calC} |f\ra$ is $k$-pairable. 
\end{lemma}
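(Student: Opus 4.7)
The plan is to analyze the LOCC protocol that measures each qubit $j\in Z$ in the standard basis $\{\ket{0},\ket{1}\}$ and each qubit $j\in X$ in the Hadamard basis $\{\ket{+},\ket{-}\}$, broadcasts the outcomes to the parties holding $E$, and then applies single-qubit Pauli corrections on $a_1,\ldots,a_k$. I will show via the stabilizer formalism that the residual state on $E$ is forced to be $\bigotimes_{i=1}^k \ket{\Phi_{a_i,b_i}}$ up to signs that the party holding each $a_i$ can cancel using its broadcast-derived classical side-information.

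First, I would recall that $\ket{\calC}$ is a CSS stabilizer state whose stabilizer group is generated by the $X$-type operators $X^g := \prod_j X_j^{g(j)}$ for $g\in\calC$ and the $Z$-type operators $Z^h := \prod_j Z_j^{h(j)}$ for $h\in\calC^\perp$; these commute because $g\cdot h=0$ for all such $g,h$. I would then invoke the standard ``pushing stabilizers through measurements'' argument: a Pauli operator commutes with every measured observable of the protocol iff its $X$-part vanishes on $Z$ and its $Z$-part vanishes on $X$. Hence each $X^g$ with $g\in\calC$ supported in $E\cup X$ produces a post-measurement stabilizer $(-1)^{x\cdot g|_X}\,X^{g|_E}$ on $E$, where $x\in\{0,1\}^X$ records the Hadamard outcomes; symmetrically, each $Z^h$ with $h\in\calC^\perp$ supported in $E\cup Z$ produces $(-1)^{z\cdot h|_Z}\,Z^{h|_E}$ on $E$, where $z\in\{0,1\}^Z$ records the standard-basis outcomes.

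The hypotheses CSS1 and CSS2 are then used in the third step. CSS1 supplies, for each $i$, a codeword $f_i\in\calC$ supported in $E\cup X$ whose restriction to $E$ is exactly the indicator of $\{a_i,b_i\}$, and CSS2 supplies a dual codeword $\bar f_i\in\calC^\perp$ supported in $E\cup Z$ whose restriction to $E$ is again the indicator of $\{a_i,b_i\}$. Feeding these into the preceding observation yields $2k$ mutually commuting Pauli operators
\[
\epsilon_i\,X_{a_i}X_{b_i},\qquad \eta_i\,Z_{a_i}Z_{b_i},\qquad i=1,\ldots,k,
\]
with signs $\epsilon_i,\eta_i\in\{\pm 1\}$ efficiently computable from $x$ and $z$, all of which stabilize the post-measurement state. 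These $2k$ operators are independent and, up to signs, generate exactly the stabilizer group of $\bigotimes_i\ket{\Phi_{a_i,b_i}}$; since a stabilizer group of size $2^{2k}$ on $|E|=2k$ qubits pins down a unique state, the residual state on $E$ must equal that collection of EPR-pairs up to a local Pauli on each pair.

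Finally, I would observe that applying $\sigma^z_{a_i}$ flips $\epsilon_i$ while leaving all other $\epsilon_j$ and $\eta_j$ unchanged, and applying $\sigma^x_{a_i}$ flips $\eta_i$ alone; hence the party holding $a_i$ can use its copy of the broadcast outcomes to apply the unique correction in $\{I,\sigma^x,\sigma^z,\sigma^x\sigma^z\}$ that sets $\epsilon_i=\eta_i=+1$. This completes the LOCC protocol. The only step requiring care is the stabilizer bookkeeping in paragraph two; once that is set up, CSS1 and CSS2 hand us precisely the two families of surviving stabilizers that characterize the target EPR-pair state, and no further combinatorial argument is needed.
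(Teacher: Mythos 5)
Your proposal is correct and follows essentially the same route as the paper: push the $X$-type stabilizer from CSS1 and the $Z$-type stabilizer from CSS2 through the $Z$- and Hadamard-basis measurements, obtain the signed stabilizers $\pm X_{a_i}X_{b_i}$ and $\pm Z_{a_i}Z_{b_i}$ of the post-measurement state, and fix the signs by broadcast-conditioned Pauli corrections on the $a_i$. If anything, you are slightly more explicit than the paper in noting that these $2k$ independent commuting stabilizers uniquely determine the product of EPR-pairs on $E$.
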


Here and below we use shorthand set union notation $XY\equiv X\cup Y$ whenever $X$ and $Y$ are disjoint sets.
The desired EPR-pairs can be generated in three steps.
First, each qubit $p\in Z$ is measured in the standard basis $\{|0\ra,|1\ra\}$ and each qubit $p\in X$ is measured
in the Hadamard basis $\{+\ra,|-\ra\}$. Next, each party $p\in XZ$ broadcasts their binary measurement outcome to $a_1,\ldots,a_k$.
Finally, a Pauli correction is applied to each qubit $a_i$; this may depend on the measurement outcomes.

\begin{proof}[\bf Proof of Lemma~\ref{lemma:css}]
We assume some familiarity with the stabilizer formalism~\cite{calderbank1996good,gottesman1998heisenberg,nielsen2002quantum}.
Let $\sigma^x_j$ and $\sigma^z_j$ be single-qubit Pauli operators acting on the $j$-th qubit
tensored with the identity on all other qubits.
The resource state $|\calC\ra$ has Pauli 
stabilizers\footnote{Note that $\sigma^x(f)|\calC\ra=\frac1{\sqrt{|\calC|}}\sum_{g\in \calC}\sigma^x(f)|g\ra =
\frac1{\sqrt{|\calC|}}\sum_{g\in \calC}|f+g\ra =
\frac1{\sqrt{|\calC|}}\sum_{g'\in \calC}|g'\ra=
|\calC\ra$ 
for any $f\in \calC$.
Here the third equality uses the fact that $\calC$ is a linear code, i.e., a subspace.
Furthermore, $\sigma^z(\bar{f})|\calC\ra=|\calC\ra$ since $\sigma^z(\bar{f})|g\ra=(-1)^{\bar{f}\cdot g}|g\ra=|g\ra$ for any $\bar{f}\in \calC^\perp$ and
$g\in \calC$.} 
\[
\sigma^x(f)\equiv \prod_{p\, : \, f(p)=1} \sigma^x_p \quad \mbox{for} \quad f\in \calC
\]
and
\[
\sigma^z(\bar{f})\equiv \prod_{p\, : \, \bar{f}(p)=1} \sigma^z_p \quad \mbox{for} \quad \bar{f}\in \calC^\perp.
\]
Thus we have $\sigma^x(f)|\calC\ra=\sigma^z(\bar{f})|\calC\ra=|\calC\ra$.
Suppose $f$ and $\bar{f}$ obey conditions CSS1, CSS2 for some pair $\{a_i,b_i\}$.
Let $m_p=\pm 1$ be the measurement outcome on a qubit $p\in XZ$.
 Condition
CSS1 implies that the stabilizer $\sigma^x(f)$ commutes with Pauli operators $\sigma^z_p$ on qubits $p\in Z$, which are measured in the standard basis. Thus $\sigma^x(f)$ 
and $\{ m_p \sigma^x_p \, \mid p\in X\}$ are
stabilizers of the final state after the measurement.
We infer that the final state is stabilized by 
\[
S^x_i \; = \; \sigma^x(f)  
\prod_{p\in X\, : \, f(p)=1} m_p \sigma^x_p \;\; = \;\; 
\sigma^x_{a_i} \sigma^x_{b_i} 
\prod_{p\in X\, : \, f(p)=1} \; m_p.
\]
Here the second equality follows from CSS1.
Likewise, CSS2 implies that  the stabilizer $\sigma^z(\bar{f})$ commutes with Pauli operators $\sigma^x_p$ on qubits $p\in X$, which are measured in the Hadamard basis. Thus $\sigma^z(\bar{f})$
and $\{ m_p \sigma^z_p \, \mid p\in Z\}$
are  stabilizers of the final state.
We infer that the final state is stabilized by 
\[
S^z_i \; = \; \sigma^z(\bar{f})  \prod_{p\in Z\, : \, \bar{f}(p)=1} m_p \sigma^z_p 
\;\; = \;\; \sigma^z_{a_i} \sigma^z_{b_i} 
\prod_{p\in Z\, : \, \bar{f}(p)=1} \; m_p.
\]
Here the second equality follows from CSS2.
This is only possible if the final state
contains an EPR-pair on the qubits $\{a_i,b_i\}$, up to 
a Pauli correction $\sigma^x_{a_i}$  and/or $\sigma^z_{a_i}$. The correction can be applied via LOCC if each party $p\in XZ$
broadcasts their  measurement outcome $m_p$ to all parties $a_1,\ldots,a_k$.
\end{proof}
Thus it suffices  to show that for any $k\ge 1$ one can choose a sufficiently large $n$ and 
a linear code $\calC\subseteq \FF_2^n$ that satisfies $k$-pairability conditions CSS1 and CSS2
of Lemma~\ref{lemma:css}.
Below we will choose $\calC$ from the family of Reed-Muller codes~\cite{macwilliams1977theory} to achieve this.

\subsection{Reed-Muller codes}
\label{subs:RM}

First, let us record the definition and some basic properties of Reed-Muller codes.
Let $m\ge 1$ be an integer. 
A Boolean function $f\, : \, \FF_2^m\to \FF_2$ can be considered as a binary vector of length $n=2^m$
which lists the function values $f(x)$ for all inputs $x\in \FF_2^m$ in some fixed  (say, the lexicographic) order.
 For example, if $m=2$ and $f(x)=1+x_1x_2$ then 
we can consider $f$ as a length-$4$ binary vector 
\[
[f(00),\, f(10),\, f(01),\, f(11)] = [1,1,1,0].
\]
Reed-Muller code $\mathsf{RM}(r,m)$ has length $n=2^m$ and 
its codewords are the $n$-bit vectors associated with  $m$-variate degree-$r$ polynomials
$f\, : \, \FF_2^m \to \FF_2$. 
One can choose generators of $\mathsf{RM}({r},{m})$
as a set of monomials 
$\prod_{j\in S} x_j$ where $S$ runs over all subsets of $[m]$ of size
at most $r$. The monomial  associated with the empty set $S=\emptyset$
is the constant-1 function. 
For example, $\mathsf{RM}(0,m)$ is the repetition code of length $n=2^m$
since there are only two degree-$0$ polynomials: $f(x)\equiv 1$ and $f(x)\equiv 0$.
We shall use the following  facts. 
\begin{fact}[\bf Code parameters]
\label{fact:params}
The  Reed-Muller code $\mathsf{RM}(r,m)$ has dimension
\[
D(r,m) = \sum_{p=0}^r \binom{m}{p}
\]
and distance $2^{m-r}$.
\end{fact}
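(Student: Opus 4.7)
The plan is to treat dimension and distance separately, as they use different arguments.

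For the dimension, I would prove the cleaner stronger claim that the $2^m$ multilinear monomials $\prod_{j\in S} x_j$ over all subsets $S\subseteq [m]$ form a basis of the $2^m$-dimensional space of Boolean functions $\FF_2^m\to \FF_2$. Spanning is clear because every polynomial in $\FF_2[x_1,\ldots,x_m]$ can be made multilinear using $x_j^2=x_j$, and every Boolean function admits a (multilinear) polynomial representation, obtained for instance by Möbius inversion: $f(x) = \sum_{S\subseteq [m]} c_S \prod_{j\in S} x_j$ with $c_S = \sum_{y\le \mathbf{1}_S} f(y)$. A dimension count then forces linear independence. Restricting to monomials of degree at most $r$ gives a linearly independent generating set of $\mathsf{RM}(r,m)$ of size $\sum_{p=0}^{r}\binom{m}{p}$, yielding the dimension formula.

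For the distance, I would prove both directions. The upper bound $d\le 2^{m-r}$ is witnessed by the degree-$r$ monomial $f(x)=x_1x_2\cdots x_r$, which evaluates to $1$ on exactly the $2^{m-r}$ inputs with $x_1=\cdots=x_r=1$. For the lower bound I would induct on $m$ using the Plotkin $(u\mid u+v)$ decomposition: any $f\in \mathsf{RM}(r,m)$ can be written uniquely as $f(x_1,\ldots,x_m) = g(x_1,\ldots,x_{m-1}) + x_m \, h(x_1,\ldots,x_{m-1})$ with $g\in \mathsf{RM}(r,m-1)$ and $h\in \mathsf{RM}(r-1,m-1)$, and then $\mathrm{wt}(f) = \mathrm{wt}(g) + \mathrm{wt}(g+h)$ by splitting according to $x_m\in\{0,1\}$. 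If $h=0$ and $f\ne 0$, then $g\ne 0$ and $\mathrm{wt}(f) = 2\,\mathrm{wt}(g) \ge 2\cdot 2^{m-1-r} = 2^{m-r}$ by the inductive hypothesis (valid when $r\le m-1$). If $h\ne 0$, then $\mathrm{wt}(f) \ge \mathrm{wt}((g+h)+g) = \mathrm{wt}(h) \ge 2^{(m-1)-(r-1)} = 2^{m-r}$ by the inductive hypothesis applied to $\mathsf{RM}(r-1,m-1)$.

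The base cases are $r=0$, where $\mathsf{RM}(0,m)$ is the repetition code with minimum nonzero weight $2^m = 2^{m-0}$, and $r=m$, where $\mathsf{RM}(m,m)$ equals all of $\FF_2^{2^m}$ with minimum nonzero weight $1 = 2^{m-m}$. The main obstacle is bookkeeping: I need to check that the two-parameter induction on $(r,m)$ actually bottoms out correctly in both of the Plotkin cases (in particular that $r-1\ge 0$ when invoking the second case and that $r\le m-1$ when invoking the first), and that the Plotkin decomposition really is a bijection $\mathsf{RM}(r,m-1)\times \mathsf{RM}(r-1,m-1)\to \mathsf{RM}(r,m)$, which follows by counting dimensions using Pascal's identity $\binom{m}{p} = \binom{m-1}{p} + \binom{m-1}{p-1}$ together with the dimension formula proved in the first step.
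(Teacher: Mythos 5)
Your proof is correct. The paper does not prove this statement at all: it is stated as a Fact and attributed to Chapter~13 of MacWilliams and Sloane, so there is no in-paper argument to compare against. What you have written is essentially the standard textbook proof that the citation points to: the dimension via the monomial basis of multilinear polynomials (with Möbius inversion giving surjectivity onto all Boolean functions and a dimension count giving independence), and the distance via the Plotkin $(u\mid u+v)$ decomposition $f=g+x_m h$ with the weight identity $\mathrm{wt}(f)=\mathrm{wt}(g)+\mathrm{wt}(g+h)$ and the two-case induction. Your bookkeeping concerns are the right ones and they do check out: the case $h=0$ invokes the hypothesis for $\mathsf{RM}(r,m-1)$, which needs $r\le m-1$ and is covered by the separate base case $r=m$; the case $h\ne 0$ invokes it for $\mathsf{RM}(r-1,m-1)$, which needs $r\ge 1$ and is covered by the base case $r=0$; and the decomposition is a bijection either by your dimension count with Pascal's identity or, even more directly, by collecting the terms of a multilinear polynomial that contain $x_m$. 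One small point worth making explicit in the second case is that the step $\mathrm{wt}(g)+\mathrm{wt}(g+h)\ge\mathrm{wt}(h)$ is the triangle inequality for Hamming weight, $\mathrm{wt}(a)+\mathrm{wt}(b)\ge\mathrm{wt}(a+b)$; with that noted the argument is complete.
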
 
\begin{fact}[\bf Dual code]
\label{fact:duality}
Suppose $0\le r<m$. Then $\mathsf{RM}({r},{m})^\perp = \mathsf{RM}({m-r-1},{m})$.
\end{fact}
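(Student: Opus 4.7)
The plan is to establish the duality $\mathsf{RM}(r,m)^\perp = \mathsf{RM}(m-r-1,m)$ by a dimension count combined with one direction of inclusion. By Fact~\ref{fact:params}, the dimension of $\mathsf{RM}(r,m)$ is $\sum_{p=0}^r \binom{m}{p}$ and the dimension of $\mathsf{RM}(m-r-1,m)$ is $\sum_{p=0}^{m-r-1}\binom{m}{p}$. Since these two partial sums of binomial coefficients partition $\{0,1,\ldots,m\}$, they add up to $\sum_{p=0}^m \binom{m}{p} = 2^m = n$. So it suffices to prove $\mathsf{RM}(m-r-1,m) \subseteq \mathsf{RM}(r,m)^\perp$ and then equality will follow from dimension counting, since $\dim \mathsf{RM}(r,m)^\perp = n - \dim \mathsf{RM}(r,m)$.

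First I would reduce the inclusion to a single combinatorial identity. Take $f\in\mathsf{RM}(r,m)$ and $g\in\mathsf{RM}(m-r-1,m)$, viewed as polynomials on $\FF_2^m$ of degree at most $r$ and $m-r-1$ respectively. Their dot product as binary vectors is $f\cdot g = \sum_{x\in\FF_2^m} f(x)g(x) \pmod{2}$. The product $fg$ is a polynomial of total degree at most $r+(m-r-1) = m-1$, so it suffices to prove the following key claim: for any polynomial $h\colon\FF_2^m\to\FF_2$ of degree strictly less than $m$,
\[
\sum_{x\in\FF_2^m} h(x) \equiv 0 \pmod{2}.
\]

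The main obstacle is this key claim, and I would handle it by reducing to monomials via linearity and then counting. For a monomial $x^S = \prod_{j\in S} x_j$ with $S\subseteq [m]$, the sum $\sum_{x\in\FF_2^m} x^S$ counts the number of $x\in\FF_2^m$ such that $x_j=1$ for every $j\in S$, which equals $2^{m-|S|}$. This integer is even precisely when $|S|<m$, and it equals $1$ only when $S=[m]$. Hence, modulo $2$, the only monomial with nonzero sum is the degree-$m$ monomial $x_1 x_2 \cdots x_m$. Expanding any $h$ of degree $<m$ in the monomial basis and summing term by term therefore gives zero mod $2$, proving the claim.

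Combining these pieces: the key claim applied to $h=fg$ yields $f\cdot g = 0$ in $\FF_2$ for every $f\in\mathsf{RM}(r,m)$ and $g\in\mathsf{RM}(m-r-1,m)$, establishing $\mathsf{RM}(m-r-1,m)\subseteq\mathsf{RM}(r,m)^\perp$. Together with the matching dimensions, this forces equality and completes the proof. I expect the small bookkeeping issue to verify is just that the multilinear monomials $\{x^S : S\subseteq[m]\}$ form a basis for all functions $\FF_2^m\to\FF_2$, which follows from the fact that $x_j^2=x_j$ on $\FF_2$ so every such function has a unique multilinear polynomial representative; this is standard and deserves only a sentence.
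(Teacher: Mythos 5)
Your proof is correct. The paper does not prove this fact itself---it only cites Chapter 13 of MacWilliams and Sloane---and your argument (one inclusion via the observation that $\sum_{x\in\FF_2^m} h(x)\equiv 0 \pmod 2$ for any polynomial $h$ of degree less than $m$, since each multilinear monomial $x^S$ with $|S|<m$ has exactly $2^{m-|S|}$ ones, combined with the dimension count $\dim\mathsf{RM}(r,m)+\dim\mathsf{RM}(m-r-1,m)=2^m$ from Fact~\ref{fact:params}) is precisely the standard textbook proof being referenced, with all steps valid.
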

\begin{fact}[\bf Affine invariance]
\label{fact:symmetry}
Suppose  $A\in \FF_2^{m\times m}$
is an invertible matrix, and $b\in \FF_2^m$ is a vector. 
If $f\, : \, \FF_2^m\to \FF_2$ is a degree-$r$ polynomial, 
then the function
$f'(x)=f(Ax+b)$ is also a degree-$r$ polynomial. The map $f\to f'$ is
a bijection of the set of all $m$-variate degree-$r$ polynomials. 
\end{fact}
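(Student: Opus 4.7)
The plan is to prove the two halves separately: first, that the composition $f' = f \circ \phi$ with $\phi(x) = Ax+b$ preserves the degree bound, and second, that this gives a bijection of the space of degree-$r$ polynomials.

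For degree preservation, I would work with the canonical multilinear representation of $f$, namely $f(y) = \sum_{S \subseteq [m],\, |S|\le r} c_S \prod_{j\in S} y_j$, since over $\FF_2$ every function admits a unique such representation and its degree is well defined. Under the substitution $y_j = (Ax+b)_j = \sum_i A_{ji} x_i + b_j$, each coordinate $y_j$ becomes an affine (hence degree-$\le 1$) polynomial in $x$. A product of at most $r$ such affine polynomials is a polynomial of total degree at most $r$ when expanded, and reducing modulo the relations $x_i^2 = x_i$ (which is what turns a formal polynomial into its multilinear representative) can only decrease the degree. Summing over the monomials $S$ with $|S|\le r$ then shows that $f'$ has degree $\le r$.

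For bijectivity, note that $f \mapsto f'$ is $\FF_2$-linear, so it suffices to exhibit a two-sided inverse on the space of degree-$r$ polynomials. The affine map $\phi(x)=Ax+b$ is invertible because $A$ is, with inverse $\phi^{-1}(x)=A^{-1}x + A^{-1}b$ (recall $-1=1$ in $\FF_2$). Composition with $\phi^{-1}$ is, by the same argument as above, another degree-preserving map on $m$-variate degree-$r$ polynomials, and it is a two-sided inverse to composition with $\phi$. Hence $f\mapsto f'$ is a bijection of the degree-$r$ polynomial space to itself.

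There is no real obstacle here; the only point requiring mild care is the reduction modulo $x_i^2 - x_i$ when moving from the expanded product of affine forms to the canonical multilinear representative. This step is inequality-preserving for degree (monomials containing repeated variables collapse to lower-degree monomials), so the degree bound $r$ is never violated. Everything else is a routine application of linearity of composition and invertibility of the affine group on $\FF_2^m$.
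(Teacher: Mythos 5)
Your proof is correct. The paper does not prove this fact itself---it cites Chapter 13 of MacWilliams and Sloane---so there is no in-paper argument to compare against, but your argument (substitute affine forms into the multilinear representation, note that reduction modulo $x_i^2=x_i$ only lowers degree, and obtain bijectivity from linearity plus composition with $\phi^{-1}(x)=A^{-1}x+A^{-1}b$ as a two-sided inverse) is the standard one and is complete. One small remark: if ``degree-$r$'' were read as \emph{exactly} degree $r$ rather than degree at most $r$ (the paper uses the latter convention, as for codewords of $\mathsf{RM}(r,m)$), your bijectivity argument already covers it, since both the map and its inverse preserve the subspace of polynomials of degree at most $r-1$ and hence preserve exact degree.
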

For the proof of  Facts~\ref{fact:params}, \ref{fact:duality}, \ref{fact:symmetry}, see e.g.\ Chapter 13 of~\cite{macwilliams1977theory}.
As a consequence of Fact~\ref{fact:symmetry},
the resource state $|\calC\ra$ with 
$\calC=\mathsf{RM}(r,m)$ is invariant under a permutation of the $n=2^m$ qubits defined as $W_\varphi|f\ra=|f'\ra$, where $f'(x)=f(\varphi(x))$ for all $x\in \FF_2^m$.
Here $\varphi\, : \, \FF_2^m\to \FF_2^m$
is any invertible affine map. In other words, $W_\varphi|\calC\ra=|\calC\ra$. This generalizes the symmetry of the $n$-qubit GHZ-state which is invariant under any permutation of the $n$ qubits. 

Recall that minimum-weight codewords of a linear code are non-zero codewords  whose
weight equals the code distance.
\begin{fact}[\bf Codewords from affine subspaces]
\label{fact:affine}
A vector $f\in \FF_2^n$ is a minimum-weight codeword of $\mathrm{RM}(r,m)$ if and only if
the support of $f$ is an $(m-r)$-dimensional affine subspace of $\FF_2^m$.
\end{fact}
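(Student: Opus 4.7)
The plan is to prove the two implications separately; the first is very short and the second goes by induction using the Plotkin $(u,u+v)$ decomposition of $\mathrm{RM}(r,m)$.

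\textbf{If direction.} If $A\subseteq\FF_2^m$ is an $(m-r)$-dimensional affine subspace, I would write $A$ as the common zero set of $r$ independent affine forms $L_1,\ldots,L_r$, so that
\[
\mathbf{1}_A(x)\;=\;\prod_{i=1}^r\bigl(1+L_i(x)\bigr)
\]
is a polynomial of degree at most $r$, hence an element of $\mathrm{RM}(r,m)$. Its weight is $|A|=2^{m-r}$, matching the minimum distance from Fact~\ref{fact:params}, so $\mathbf{1}_A$ is a minimum-weight codeword.

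\textbf{Only-if direction.} I would induct on $m$; the base $m=0$ is trivial. For the inductive step, given $f\in\mathrm{RM}(r,m)$ of weight $2^{m-r}$, split coordinates by $x_1$ and write $f(x_1,x')=g(x')+x_1\,h(x')$ with $g\in\mathrm{RM}(r,m-1)$ and $h\in\mathrm{RM}(r-1,m-1)$. Then $\mathrm{wt}(g)+\mathrm{wt}(g+h)=2^{m-r}$, and the triangle inequality $\mathrm{wt}(h)\le\mathrm{wt}(g)+\mathrm{wt}(g+h)$ together with the minimum distance $2^{m-r}$ of $\mathrm{RM}(r-1,m-1)$ forces either $h=0$ or $\mathrm{wt}(h)=2^{m-r}$. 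If $h=0$, then $g$ is min-weight in $\mathrm{RM}(r,m-1)$, so by induction $\mathrm{supp}(f)=\FF_2\times\mathrm{supp}(g)$ is $(m-r)$-affine. If $h\ne 0$, then $h$ is min-weight in $\mathrm{RM}(r-1,m-1)$, and by induction $B:=\mathrm{supp}(h)$ is an $(m-r)$-affine subspace of $\FF_2^{m-1}$; the identity $\mathrm{wt}(g)+\mathrm{wt}(g+h)=\mathrm{wt}(h)$ then forces $\mathrm{supp}(g)$ and $\mathrm{supp}(g+h)$ to be disjoint and to partition $B$, and since each is either empty or of weight $\ge 2^{m-1-r}$, either one of them vanishes (so $\mathrm{supp}(f)$ is $\{0\}\times B$ or $\{1\}\times B$, both affine) or both are min-weight in $\mathrm{RM}(r,m-1)$.

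The main obstacle is that last sub-case. I plan to finish it by a short affine-geometry observation: two disjoint $(m-1-r)$-dimensional affine subspaces $C_0,C_1$ that together cover an $(m-r)$-dimensional affine subspace $B\subseteq\FF_2^{m-1}$ must be parallel translates $c_0+W$ and $c_1+W$ of a common linear subspace $W$, because after translating $B$ to the origin, $C_0$ is a coset of the kernel of some nonzero linear functional on $B$ and $C_1=B\setminus C_0$ is the \emph{other} coset of the same kernel. Once this is established, it is a direct check that
\[
\mathrm{supp}(f)\;=\;(\{0\}\times C_0)\cup(\{1\}\times C_1)\;=\;(0,c_0)+\bigl\langle(1,c_0+c_1)\bigr\rangle+(\{0\}\times W)
\]
is an $(m-r)$-dimensional affine subspace of $\FF_2^m$, closing the induction.
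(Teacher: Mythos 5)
Your proof is correct. Note that the paper does not actually prove Fact~\ref{fact:affine}: it defers to Proposition~2 and Corollary~4 of Assmus's paper, where the result is obtained by a more algebraic route (analyzing minimum-weight codewords via the structure of Reed--Muller codes as ideals / via their automorphism group). Your argument is the classical self-contained alternative: the ``if'' direction by writing the indicator of an $(m-r)$-dimensional affine subspace as a product of $r$ affine factors, and the ``only-if'' direction by induction on $m$ through the Plotkin $(u,u+v)$ decomposition, which is essentially the MacWilliams--Sloane/Kasami--Tokura treatment. All the steps check out: the identity $\mathrm{wt}(g)+\mathrm{wt}(g+h)=\mathrm{wt}(h)$ does force $\mathrm{supp}(g)$ and $\mathrm{supp}(g+h)$ to be disjoint (since $\mathrm{supp}(h)$ is their symmetric difference, equality in the triangle inequality kills the intersection), and your affine-geometry observation in the last sub-case is right: a codimension-one affine subspace $C_0$ of $B$ is a coset of a hyperplane $W$ of the direction space of $B$, and its complement in $B$ is automatically the other coset, so $C_0$ and $C_1$ are parallel and the displayed set is indeed an $(m-r)$-dimensional affine subspace. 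Two cosmetic points you should make explicit when writing this up: (i) the induction is really on $m$ with $r$ ranging over $0\le r\le m$, so you use the hypothesis for both $(r,m-1)$ and $(r-1,m-1)$ and need the conventions $\mathrm{RM}(-1,m-1)=\{0\}$ and $\mathrm{RM}(s,m-1)=\FF_2^{2^{m-1}}$ for $s\ge m-1$; (ii) the sub-case where both $g$ and $g+h$ are nonzero can only occur when $r\le m-1$ (otherwise $\mathrm{wt}(g)+\mathrm{wt}(g+h)=1$), which is exactly when the minimum distance $2^{m-1-r}$ of $\mathrm{RM}(r,m-1)$ that you invoke is meaningful. Neither issue affects the validity of the argument.
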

For the proof see, e.g., Proposition~2 and Corollary~4 in~\cite{assmus1992reed}.
%The next fact is due to Berlekamp and Sloane~\cite{berlekamp1969restrictions}.
%\begin{fact}[\bf Weight distribution]
%\label{fact:weight}
%Consider any codeword $v\in \mathrm{RM}(r,m)$ and suppose that the weight of $v$ is at most twice the code %distance,
%$|v|\le 2^{m-r+1}$. Then $|v|=2^{m-r+1}-2^i$ for some integer $i\ge 0$.
%\end{fact}
We shall see that verification of conditions CSS1 and CSS2 of Lemma~\ref{lemma:css} 
with $\calC=\mathsf{RM}(r,m)$ can be reduced 
to (multiple instances of) the following problem. 
\begin{problem}[\bf Polynomial regression]
Find a degree-$r$ polynomial $f\, : \, \FF_2^m\to \FF_2$  that satisfies a system of equations
\be
f(x^i)=g_i \quad  \mbox{for} \quad i=1,\ldots,s
\ee
where $x^1,\ldots,x^s\in \FF_2^m$ are distinct points and $g_1,\ldots,g_s\in \{0,1\}$.
\end{problem}
 \begin{lemma}
\label{lemma:reg}
The polynomial regression problem has a solution $f$ if  at least one of the following conditions is satisfied:
(1) $s<2^{r+1}$, or (2) $s=2^{r+1}$ and $\sum_{i=1}^s g_i =0$.
\end{lemma}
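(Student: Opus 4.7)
\textbf{Proof plan for Lemma~\ref{lemma:reg}.}

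The plan is to translate the polynomial regression problem into a surjectivity question about the evaluation map, and then read off the obstruction from the dual code $\mathsf{RM}(m-r-1,m)$ using Facts~\ref{fact:duality} and~\ref{fact:affine}. Let $S=\{x^1,\ldots,x^s\}\subseteq \FF_2^m$ and define the evaluation map
\[
\mathrm{ev}_S \, : \, \mathsf{RM}(r,m) \longrightarrow \FF_2^s, \qquad f\longmapsto (f(x^1),\ldots,f(x^s)).
\]
The task is to show that the vector $(g_1,\ldots,g_s)$ lies in the image of $\mathrm{ev}_S$ under the stated assumptions.

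First I would use linear duality in $\FF_2^s$: a vector $g=(g_1,\ldots,g_s)$ lies in $\mathrm{Im}(\mathrm{ev}_S)$ iff it is orthogonal to every $c=(c_1,\ldots,c_s)\in \FF_2^s$ that annihilates the image, i.e.\ satisfies $\sum_i c_i f(x^i)=0$ for every degree-$r$ polynomial $f$. Such an annihilator is precisely the extension-by-zero $\tilde c \in \FF_2^{2^m}$ obtained by placing $c_i$ at coordinate $x^i$ and $0$ elsewhere; the condition says $\tilde c\in \mathsf{RM}(r,m)^\perp$. By Fact~\ref{fact:duality} this equals $\mathsf{RM}(m-r-1,m)$. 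Thus the obstructions to solvability are exactly the codewords of $\mathsf{RM}(m-r-1,m)$ whose support is contained in $S$.

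For case (1), I apply Fact~\ref{fact:params}: the code $\mathsf{RM}(m-r-1,m)$ has minimum distance $2^{m-(m-r-1)}=2^{r+1}$, so every nonzero codeword has weight at least $2^{r+1}$. If $s<2^{r+1}$ then no nonzero codeword can be supported inside $S$, so the only annihilator is $c=0$ and $\mathrm{ev}_S$ is surjective; any $(g_1,\ldots,g_s)$ is interpolated.

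For case (2), $s=2^{r+1}$, any nonzero annihilator $\tilde c\in \mathsf{RM}(m-r-1,m)$ with support in $S$ must have weight exactly $2^{r+1}$ and hence support equal to $S$. By Fact~\ref{fact:affine}, its support must be an $(m-(m-r-1))=(r+1)$-dimensional affine subspace of $\FF_2^m$, so $S$ is that affine subspace and (working over $\FF_2$, where the support determines the vector) $\tilde c=\mathbf{1}_S$, i.e.\ $c=(1,1,\ldots,1)$. Orthogonality to this single annihilator is precisely the condition $\sum_{i=1}^s g_i=0$, which is assumed; if $S$ is not such an affine subspace, there is no nonzero annihilator at all and every $g$ works. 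Either way, the regression problem is solvable.

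The steps are all short once the duality is set up; the only place that needs care is confirming that a weight-$2^{r+1}$ codeword supported in a size-$2^{r+1}$ set $S$ forces $S$ itself to be an affine subspace, which is where I would explicitly cite Fact~\ref{fact:affine}. There is no real technical obstacle — the lemma is essentially a restatement of the minimum-distance/affine-support structure of Reed--Muller codes via evaluation-map duality.
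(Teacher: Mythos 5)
Your proposal is correct and follows essentially the same route as the paper: both reduce solvability to the fact that the only possible linear dependency among the evaluation points (equivalently, the only codeword of $\mathsf{RM}(m-r-1,m)=\mathsf{RM}(r,m)^\perp$ supported on $\{x^1,\ldots,x^s\}$) is trivial when $s<2^{r+1}$ and is the all-ones vector when $s=2^{r+1}$, using Facts~\ref{fact:duality} and~\ref{fact:params}. The only cosmetic differences are that you phrase this via the evaluation map and its annihilators rather than via ranks of submatrices of the generator matrix, and that your appeal to Fact~\ref{fact:affine} in case (2) is not actually needed, since over $\FF_2$ a weight-$2^{r+1}$ codeword supported in a size-$2^{r+1}$ set is automatically the indicator vector of that set.
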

Here the sum $\sum_{i=1}^s g_i$ is evaluated modulo two.
\begin{proof}
$\mathsf{RM}(r,m)^\perp = \mathsf{RM}(m-r-1,m)$ by Fact~\ref{fact:duality}.
The code $\mathsf{RM}(m-r-1,m)$ 
has distance $2^{r+1}$, see Fact~\ref{fact:params},
and thus every $2^{r+1}-1$ columns of its parity check matrix are linearly independent.
The parity check matrix $M$ of $\mathsf{RM}(m-r-1,m)$
is the generator matrix of its dual, $\mathsf{RM}(r,m)$, so the above 
implies that if $s<2^{r+1}$, then the rank of the matrix $M_{X}$ formed by the columns of $M$
with indices from $X = \{x^1,\ldots,x^s\}$
is $s$, so $\FF_{2}^X$ is in the span of the rows of~$M_{X}$.

If $s=2^{r+1}$, then the rank of $M_{X}$
is either $2^{r+1}$ or $2^{r+1}-1$. If $2^{r+1}$, we proceed as above.
If $2^{r+1}-1$, the only linear combination of the columns of $M_X$ 
that gives the zero-vector, is the sum of all columns of $M_X$,
and a vector $(g_1,\ldots,g_s)\in \FF_{2}^X$ can be generated 
from the rows of $M_X$ if and only if
$\sum_{i=1}^s g_i =0 \mod 2$.
\end{proof}

\subsection{Resource state and LOCC protocol}
\label{subs:XZ}

Our candidate $k$-pairable state is a CSS stabilizer state $|\calC\ra$ with 
\be
\calC=\mathsf{RM}(k-1,m)
\ee
and a suitable parameter $m=m(k)$.
To describe the subsets of qubits $X,Z\subseteq \FF_2^m$
satisfying conditions CSS1 and CSS2 of Lemma~\ref{lemma:css}, we need one extra piece of notation.

\begin{definition}
Suppose  $S\subseteq \FF_2^m$ is a non-empty subset. An affine subspace spanned by $S$,
denoted $\aff{(S)}$, is defined as 
\be
\aff{(S)} = \left\{ \sum_{v\in T} v \; \mid \; 
T\subseteq S \quad \mbox{and} \quad 
|T|=1{\pmod 2}\right\}.
\ee
\end{definition}

Thus $\aff{(S)}$ contains all vectors 
that can be written as a sum of an odd number of vectors
from~$S$.
For example, $\aff(\{a\})=\{a\}$, $\aff(\{a,b\})=\{a,b\}$, and 
$\aff(\{a,b,c\}) = \{a,\; b,\; c,\; a+b+c\}$.
Note that $|\aff(S)|=2^d$, where $d\le |S|-1$ is the dimension of $\aff{(S)}$.

\begin{figure}[ht]
\vspace{10mm}
\centerline{\includegraphics[height=5cm]{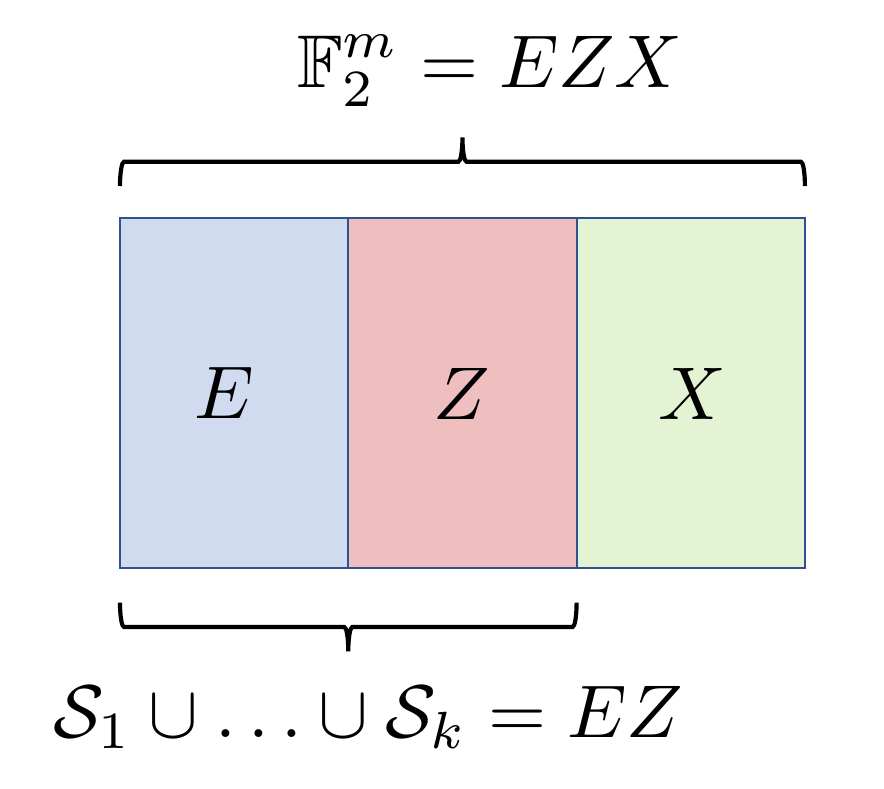}}
\caption{Measurement pattern for the resource
state $|\calC\ra$, where $\calC=\mathsf{RM}(k-1,m)$
and $\calC^\perp=\mathsf{RM}(m-k,m)$.
We consider $n=2^m$ qubits. Each qubit is labeled 
by an $m$-bit string.
Qubits are partitioned into three disjoint subsets, $EXZ$, where $E=\{a_1,b_1,\ldots,a_k,b_k\}$
is the set of EPR qubits, $Z$ is the set of qubits measured in the standard basis $\{|0\ra,|1\ra\}$ and
$X$ is the set of qubits measured in the Hadamard basis $\{|+\ra,|-\ra\}$. We choose $Z=(\calS_1\cup \ldots\cup \calS_k)\setminus E$, where the 
$\calS_i$ are $k$-dimensional affine subspaces 
of $\FF_2^m$, see Eq.~(\ref{Sk}). These subspaces are chosen such that $\calS_i\cap E = \{a_i,b_i\}$ for all $i$. We choose $X$ as the complement of $EZ$.
A codeword $\bar{f}\in\calC^\perp$ satisfying condition CSS2 for some pair of EPR
qubits $\{a_i,b_i\}$ is chosen as the characteristic function of the subspace $\calS_i$, that is,
$\bar{f}(x)=1$ if $x\in \calS_i$ and $\bar{f}(x)=0$
if $x\in \FF_2^m\setminus \calS_i$.
A codeword $f\in \calC$ satisfying condition CSS1
is constructed using the polynomial regression,
see Lemma~\ref{lemma:reg}.
\vspace{10mm}
}
\label{fig:EXZ}
\end{figure}

Let $n=2^m$ be the number of qubits.
Suppose our goal is to generate $k$ EPR-pairs on pairs of qubits
$\{a_1,b_1\},\ldots,\{a_k,b_k\}$. Define a subset of ``EPR qubits''
\[
E=\{a_1,b_1,\ldots,a_k,b_k\}
\]
and a family of $k$ affine subspaces
$\calS_1,\ldots,\calS_k\subseteq \FF_2^m$ such that 
\be
\label{Sk}
\calS_i=\aff{(\{a_i,b_i,c_1,c_2,\ldots,c_k\}\setminus \{c_i\})} \subseteq \FF_2^m
\ee
where $c_1,\ldots,c_k\in \FF_2^m$ 
are vectors that will be appropriately defined in
Sections~\ref{subs:2-pairable}, \ref{subs:3-pairable}, \ref{subs:k-pairable}.
The $c$-vectors may depend on the $a$'s and $b$'s. 
The set of EPR qubits $E$ is obviously
contained in the union of $\calS_1,\ldots,\calS_k$.
We choose the subsets of qubits $X$ and $Z$ in Lemma~\ref{lemma:css} as
\be
\label{XZ}
X = \FF_2^m \setminus  (\calS_1\cup \ldots \cup \calS_k) \quad  \mbox{and} \quad
Z = (\calS_1\cup \ldots \cup \calS_k)\setminus E
\ee
The subsets $E,X,Z$ are pairwise disjoint and 
$\FF_2^m =EXZ$.
We illustrate the relationships between  these sets  in Figure~\ref{fig:EXZ}.
In the GHZ-example one has $k=1$ and $\calS_1=\{a_1,b_1\}$. In this case $Z=\emptyset$
and $X=\FF_2^m\setminus \{a_1,b_1\}$, that is, the LOCC protocol requires only measurements
in the Hadamard basis.
In the rest of this section we prove 
that the vectors $c_1,\ldots,c_k$ in Eq.~(\ref{Sk}) can always be chosen
such that the subsets $X$ and $Z$ satisfy conditions CSS1 and CSS2 of Lemma~\ref{lemma:css}.

\subsection{\texorpdfstring{$2$}{2}-pairability}
\label{subs:2-pairable}

We now need to show how to choose $c_1,\ldots,c_k$.
We begin with the simple case $k=2$.

\begin{lemma}
Suppose $k=2$ and $m\ge 4$. Choose any vector $c\in \FF_2^m\setminus \aff{(E)}$
and let $c_1=c_2=c$. Then the subsets $X$, $Z$ defined in Eqs.~(\ref{Sk},\ref{XZ}) 
satisfy conditions CSS1 and CSS2 with $\calC=\mathsf{RM}(1,m)$.
\end{lemma}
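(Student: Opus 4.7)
The plan is to verify the two conditions of Lemma~\ref{lemma:css} for $\calC = \mathsf{RM}(1, m)$ separately. First I would unpack the geometry: with $c_1 = c_2 = c$, we get $\calS_i = \aff(\{a_i, b_i, c\}) = \{a_i, b_i, c, a_i + b_i + c\}$, a $2$-dimensional affine subspace of four distinct points (distinctness uses $c \notin \aff(E)$, which in particular forces $c \neq a_i, b_i$). A short preliminary check establishes that for $j \neq i$, neither $a_j$ nor $b_j$ lies in $\calS_i$: disjointness of pairs rules out $\{a_i, b_i\}$; the point $c$ is excluded because $a_j \in E \subseteq \aff(E)$; and $a_j = a_i + b_i + c$ would force $c = a_i + b_i + a_j \in \aff(E)$, a contradiction. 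The same argument shows $\calS_1 \cap \calS_2 = \{c\}$, so $|EZ| = |\calS_1 \cup \calS_2| = 7$.

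For CSS2 the natural candidate is $\bar{f}_i$, the indicator function of $\calS_i$. Since $\calC^\perp = \mathsf{RM}(m-2, m)$ has minimum distance $4$, Fact~\ref{fact:affine} identifies its minimum-weight codewords with indicators of $2$-dimensional affine subspaces, so $\bar{f}_i \in \calC^\perp$. Then $\bar{f}_i(a_i) = \bar{f}_i(b_i) = 1$ is immediate, and $\bar{f}_i(p) = 0$ for $p \in EX \setminus \{a_i, b_i\}$ because such a $p$ lies either in $X$ (disjoint from $\calS_1 \cup \calS_2$ by definition) or in $\{a_j, b_j\}$ for $j \neq i$ (disjoint from $\calS_i$ by the preliminary step). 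So CSS2 drops out.

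The main obstacle is CSS1, because $EZ \setminus \{a_i, b_i\} = (\calS_1 \cup \calS_2) \setminus \{a_i, b_i\}$ contains $5$ points in addition to the two fixed values at $a_i, b_i$, whereas Lemma~\ref{lemma:reg} for $r = 1$ directly handles only $s \leq 4$ prescribed values. My plan is to exploit the fact that every $f \in \mathsf{RM}(1, m)$ satisfies the affine-subspace identity $f(w) + f(x) + f(y) + f(z) = 0$ whenever $\{w, x, y, z\}$ is a $2$-dimensional affine subspace (this is exactly $\mathsf{RM}(1, m) \perp \mathsf{RM}(m-2, m)$ applied to the indicator). Applied to $\calS_1$ and $\calS_2$, this forces the desired values at $a_1 + b_1 + c$ and $a_2 + b_2 + c$ to be determined by those at $\{a_1, b_1, c\}$ and $\{a_2, b_2, c\}$, and a one-line check shows they come out to $0$ as required. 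So CSS1 at $i = 1$ reduces to finding an affine $f$ with $f(a_1) = f(b_1) = 1$ and $f(c) = f(a_2) = f(b_2) = 0$.

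To finish, I would verify consistency of this $5$-equation system in $(\alpha, \beta) \in \FF_2^{m+1}$ by the standard criterion: every even-sized subset of $\{a_1, b_1, c, a_2, b_2\}$ summing to $0$ must have prescribed values summing to $0$. No pair of distinct points sums to $0$; the only candidate $4$-subset whose sum could vanish is $\{a_1, b_1, a_2, b_2\}$, and there $1 + 1 + 0 + 0 = 0$; any $4$-subset containing $c$ is ruled out because it would force $c$ to be a sum of three vectors in $E$, contradicting $c \notin \aff(E)$. Hence $f$ exists, CSS1 for $i = 1$ holds, and the symmetric argument (swapping the two pairs) covers $i = 2$.
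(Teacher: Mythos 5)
Your proof is correct and follows essentially the same route as the paper: the same geometric setup for $\calS_1,\calS_2$, the same intersection argument $\calS_i\cap E=\{a_i,b_i\}$, and the identical choice of the indicator of $\calS_i$ (via Fact~\ref{fact:affine}) for CSS2. The only real divergence is in finishing CSS1: the paper first invokes part~(2) of Lemma~\ref{lemma:reg} to get an affine $f$ with the prescribed values at the four points $a_1,b_1,a_2,b_2$, and then, if $f(c)=1$, corrects by adding a linear function vanishing on $\aff(E)$ but not at $c$; you instead impose all five constraints at once and verify solvability of the resulting linear system in $(\alpha,\beta)$ by checking that every even-sized zero-sum subset of $\{a_1,b_1,c,a_2,b_2\}$ has prescribed values summing to $0$. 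Both arguments are valid and of comparable length; yours avoids the correction step at the cost of re-deriving the consistency criterion that Lemma~\ref{lemma:reg} packages. One small inaccuracy: your claim that $\calS_1\cap\calS_2=\{c\}$ (hence $|EZ|=7$) is not forced by $c\notin\aff(E)$ --- if $a_1+b_1=a_2+b_2$ then $a_1+b_1+c=a_2+b_2+c$ lies in both subspaces and $|EZ|=6$. This does not affect your argument, since the values at $c$, $a_1+b_1+c$, and $a_2+b_2+c$ are each handled correctly whether or not the latter two coincide, but the side remark as stated is false.
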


\begin{proof}
Note that a vector $c$ as above exists since $|\aff{(E)}|\le 2^{|E|-1}=8$
and $|\FF_2^m|\ge 16$ for $m\ge 4$.
Specializing Eq.~(\ref{Sk}) to the case $k=2$ and $c_1=c_2=c$ one gets 
\be
\label{Sk2}
\calS_1 = \{a_1,b_1,c,a_1+b_1+c\} \quad\mbox{and} 
\quad \calS_2 = \{a_2,b_2,c,a_2+b_2+c\}.
\ee 
The assumption that $c\notin E$
implies that the $\calS_i$ are 2-dimensional affine subspaces, and
in particular, $|\calS_i|=4$ ($i=1,2$).
We claim that 
\be
\label{intersect_k2}
\calS_i \cap E= \{a_i,b_i\}.
\ee
Indeed, by definition, $a_i,b_i\in \calS_i$. Suppose $a_1\in \calS_2$. 
Since all EPR qubits are distinct, the inclusion $a_1\in \calS_2$ is only possible if 
$a_1=c$ or $a_1=a_2+b_2+c$. 
In both cases $c\in \aff{(E)}$, which contradicts the choice of $c$.
Thus $a_1\notin \calS_2$. Applying the same arguments to $a_2,b_1,b_2$
proves Eq.~(\ref{intersect_k2}). 

Let us first check condition CSS2 with $i=1$ (the same argument applies to $i=2$).
Choose
\[
\bar{f}(x)=\left\{ \ba{rl}
1 &\mbox{if }  x\in \calS_1\\
0 & \mbox{otherwise}\\
\ea
\right.
\]
Since $\calS_1$ is a $2$-dimensional affine subspace,
Fact~\ref{fact:affine} implies that $\bar{f}\in \calC^\perp$.
We have $\bar{f}(a_1)=\bar{f}(b_1)=1$ 
since $a_1,b_1\in \calS_1$.
From Eq.~(\ref{intersect_k2}) one gets $EX\cap \calS_1=E\cap\calS_1=\{a_1,b_1\}$.
Thus
$\bar{f}(v)=0$ for all $v\in EX\setminus \{a_1,b_1\}$, as claimed.
This proves condition CSS2.

Let us  check condition CSS1 with $i=1$ (the same argument applies to $i=2$).
We can invoke Lemma~\ref{lemma:reg} (polynomial regression) with $r=1$
and $s=4$ to show that there exists a degree-$1$ polynomial $f\, : \, \FF_2^m\to \FF_2$
such that 
\be
\label{constraints_k2}
f(a_1)=f(b_1)=1 \quad  \mbox{and} \quad f(a_2)=f(b_2)=0.
\ee
We can use condition~(2) of Lemma~\ref{lemma:reg} since $s=2^{r+1}=4$.
By definition, $f$ is a codeword of $\calC=\mathsf{RM}(1,m)$ and $f(a_1)=f(b_1)=1$.
We need to check that $f(v)=0$ for all $v\in EZ\setminus \{a_1,b_1\}$.
By definition,
\[
EZ\setminus \{a_1,b_1\} = (\calS_1\cup \calS_2)\setminus \{a_1,b_1\}=
\{a_2,\quad b_2,\quad c,\quad a_1+b_1+c,\quad a_2+b_2+c\}.
\]
We already know that $f(a_2)=f(b_2)=0$ by  Eq.~(\ref{constraints_k2}).
Since $f$ is a degree-$1$ polynomial, one has 
\[
f(a_1+b_1+c)=f(a_1)+f(b_1)+f(c) = 1 + 1+f(c) = f(c).
\]
Likewise, $f(a_2+b_2+c)=1+1+f(c)=f(c)$. 
If $f(c)=0$, then we are done. 
Suppose $f(c)=1$.
Since $c\not\in \aff{(E)}$ there is an affine subspace $S$ with co-dimension one,
which contains $\aff{(E)}$, but $c\not\in S$. Let $g$ be the linear function that is $0$
on $S$ and 1 on $\overline{S}$. Let $h=f+g$. Then $h$ on $a_1,b_1,a_2,b_2$
takes the same values as $f$, but $h(c)=0$, and we apply the above argument for 
$h$ instead of $f$. 
This proves CSS1.
%\mario{(again, I found the following text too long, so I just replaced it with  the equivalent but shorter text) Let $d$ be the dimension of the affine subspace $\aff{(E)}$. We have $d\le 3$ since $|E|=4$. One can describe $\aff{(E)}$ as a set of solutions $x\in \FF_2^m$ of a linear system $\ell_i(x)=\epsilon_i$, $i=1,\ldots,m-d$,  for some linear functions $\ell_i\, : \, \FF_2^m\to \FF_2$ and $\epsilon_i\in \{0,1\}$. Since we assumed $c\notin \aff{(E)}$, one must have $\ell_i(c)\ne \epsilon_i$ for some $i$. Modify $f$ according to $f(x)\gets f(x)+\ell_i(x)+\epsilon_i$. Now $f$ satisfies Eq.~(\ref{constraints_k2}) and additionally $f(c)=0$. Repeating the above arguments shows that $f(a_1)=f(b_1)=1$ and $f(v)=0$ for all $v\in EZ\setminus \{a_1,b_1\}$.}
%\sergey{The shortened proof looks good. Thanks, Mario}
\end{proof}

\subsection{\texorpdfstring{$3$}{3}-pairability}
\label{subs:3-pairable}

In the case $k=3$  we choose $m=5$ and   $\calC=\mathsf{RM}(2,5)$.
The resource state $|\calC\ra$ requires $n=32$ qubits.
We checked conditions CSS1 and CSS2 of Lemma~\ref{lemma:css} numerically
using exhaustive search over all tuples of EPR qubits and all choices
of vectors $c_1,c_2,c_3$ in the definition of subsets $X$ and~$Z$.
It was observed that for any tuple $\{a_1,b_1,a_2,b_2,a_3,b_3\}$
of EPR qubits, there exists at least one
choice of the $c$-vectors such that $X$ and $Z$ obey conditions CSS1 and CSS2.
The search space was pruned by exploiting the affine
invariance of Reed-Muller codes, see Fact~\ref{fact:symmetry}.
 Namely, choose any invertible affine map 
$\varphi\, : \, \FF_2^m\to \FF_2^m$ such that $\varphi(a_1)=0^m$ and $\varphi(b_1)=10^{m-1}$.
Fact~\ref{fact:symmetry} implies that a permutation of the $n=2^m$ qubits described by $\varphi$
is an automorphism of $\calC$. Thus this permutation of qubits
leaves the resource state $|\calC\ra$ invariant
and we can assume w.l.o.g.\ that $a_1=0^m$ and $b_1=10^{m-1}$.
We also pruned the search over the $c$-vectors by imposing a constraint
$c_1+c_2+c_3=0$
which is analogous to the constraint $c_1=c_2$ used for $k=2$.
The remaining search over $a_2,b_2,a_3,b_3$ took less than one hour on a laptop computer.
Note that the affine invariance of Reed-Muller codes also implies that 
$|\mathrm{RM}(2,m)\ra$ is $3$-pairable for all $m\ge 5$ since we can always apply an affine
map $\varphi$ as above such that $\varphi(a_i)$ and $\varphi(b_i)$ has nonzeros only on the first
$5$ bits. We note that the choice of parameters $r=2$, $m=5$ is minimal for $3$-pairability
of resource states $|\mathrm{RM}(r,m)\ra$,
as follows from a simple code distance argument.\footnote{$k$-pairability of  the CSS resource state $|\calC\ra$ requires both codes $\calC$ and $\calC^\perp$ to have minimum  distance at least $2k$ since otherwise a stabilizer of $|\calC\ra$ may have support
on a subset of the EPR qubits $a_1,b_1,\ldots,a_k,b_k$ and anti-commute with some stabilizer of the target EPR-pairs.
In particular, $3$-pairability requires $\calC$ and $\calC^\perp$ to have distance at least~$6$.
Reed-Muller codes $\calC=\mathrm{RM}(r,m)$ do not have this property for $r=1$ and $m\le 4$,
see Facts~\ref{fact:params}, \ref{fact:symmetry}. Meanwhile, the code 
$\mathrm{RM}(2,5)$ is self-dual and has distance $8$.}

\subsection{\texorpdfstring{$k$}{k}-pairability for an arbitrary $k$ (and sufficiently large $n$)}
\label{subs:k-pairable}

In this section we prove that the resource state
$|\mathsf{RM}(k-1,m)\ra$ is $k$-pairable for any $k\ge 2$ and $m\ge 3k$ (note that the number of parties can be any $n=2^m\geq 2^{3k}$). First let us
exploit the affine invariance of Reed-Muller codes
(Fact~\ref{fact:symmetry}) to convert the set of EPR
qubits $a_1,b_1,\ldots,a_k,b_k$ into a certain standard form. Choose a linear invertible map $\varphi\, : \, \FF_2^m\to \FF_2^m$ such that
$\varphi(a_i)$ and $\varphi(b_i)$ have zeros on the first $k$ bits for all $i$ (recall that we label the $n=2^m$ qubits by $m$-bit strings).
This is always possible for $m\ge 3k$.
Since the state $|\mathsf{RM}(k-1,m)\ra$ is invariant under the permutation of the $n=2^m$ qubit-labels associated with $\varphi$, we can replace
$a_i,b_i$ by $\varphi(a_i)$ and $\varphi(b_i)$. Accordingly, from now on we assume that
$a_i$ and $b_i$ have zeros on the first $k$ bits.
The linear map $\varphi$ can be computed in time $O(m^3)$ using Gaussian elimination. 
In addition, we can assume that 
\be
\label{ABoverlap}
\{a_1,b_1,\ldots,a_k,b_k\} \cap \{0^m,a_1+b_1,\ldots,a_k+b_k\} = \emptyset.
\ee
Indeed, to see this, suppose $h\in \FF_2^m$ is a vector whose first $k$ bits are zero, and 
none of the vectors $a_i+h$ or $b_i+h$ belongs to the set $\{0^m,a_1+b_1,\ldots,a_k+b_k\}$.
Using the affine invariance of Reed-Muller codes one can replace
$a_i$ and $b_i$ by $a_i+h$ and $b_i+h$. The new vectors $a_i,b_i$ obey 
the extra condition Eq.~(\ref{ABoverlap}).
The number of bad $h$s ($h$s we should not pick) is at most
\[
|\{a_1,b_1,\ldots,a_k,b_k\}| \cdot |\{0^m,a_1+b_1,\ldots,a_k+b_k\}| = 2k(k+1)
\]
(upper bounding the number of all possible differences between the two sets).
The number of $h$s we can pick from (all those vectors starting with $k$ 0s)
is at least $2^{2k}$. Now $2k(k+1) < 2^{2k}$ (which holds for all $k\ge 2$) gives the claimed property.
Hence, from now on we assume Eq.~(\ref{ABoverlap}).
%\mario{(referring here to randomness is a conceptual detour) Picking $h$ uniformly at random subject to having zeros on the first $k$ bits, the probability that a given point $a_i+h$ or $b_i+h$ belongs to $S$ is $|S|/2^{m-k}\le (k+1)/2^{2k}$. By the union bound, the probability that at least one point $a_i+h$ or $b_i+h$ belongs to $S$ is $P\le 2k(k+1)/2^{2k}$. One can easily check that $P<1$ for $k\ge 2$. Thus there exists at least one vector $h$ with the claimed property.}
%\sergey{The new version looks good. I removed the notation $S$ since it is no longer used.}

We choose vectors $c_1,\ldots,c_k$ in Eq.~(\ref{Sk}) as the basis
vectors of $\FF_2^m$ such that  the $j$-th bit of $c_j$ is~1
and all other bits of $c_j$ are~0,
\be
c_j=[\underbrace{0,\ldots ,0}_{j-1} ,1, \underbrace{0,\ldots ,0}_{m-j}], \qquad j=1,\ldots,k.
\ee
Thus the $c$-vectors are supported on the first $k$ bits while all $a$- and $b$-vectors
are supported only on the last $m-k$ bits. Next we use Eqs.~(\ref{Sk},\ref{XZ})
to define the subsets of qubits $X,Z\subseteq \FF_2^m$ to be measured in the Hadamard ($X$)
and in the standard ($Z$) basis, respectively. For convenience, we restate the definitions of $X$, $Z$, and $\calS_i$ below.
\[
X = \FF_2^m \setminus  (\calS_1\cup \ldots \cup \calS_k) \quad  \mbox{and} \quad
Z = (\calS_1\cup \ldots \cup \calS_k)\setminus E,
\]
\[
\calS_i=\aff{(\{a_i,b_i,c_1,c_2,\ldots,c_k\}\setminus \{c_i\})}.
\] 
It remains to prove that $X$ and $Z$ satisfy conditions CSS1, CSS2 of Lemma~\ref{lemma:css}
with 
\[
\calC=\mathsf{RM}(k-1,m) \quad \mbox{and} \quad  \calC^\perp=\mathsf{RM}(m-k,m).
\]
Below 
we shall use the following property.
\begin{proposition}
The affine subspace $\calS_i$ is $k$-dimensional and obeys
\label{prop:intersect}
\be
\label{intersect}
\calS_i \cap E= \{a_i,b_i\}.
\ee
\end{proposition}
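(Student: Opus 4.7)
The plan is to exploit the clean coordinate separation built into the construction: by the preceding normalization, the $a_i$ and $b_i$ are supported on the last $m-k$ coordinates, while the $c_j$ are the first $k$ standard basis vectors, so they are supported on the first $k$ coordinates. Both parts of Proposition~\ref{prop:intersect} will follow from projecting onto the first $k$ bits.

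For the dimension claim, I would translate by $b_i$ to linearize. Since $b_i \in \calS_i$, the set $\calS_i + b_i$ is a linear subspace of the same dimension as $\calS_i$, and it is spanned by the $k$ vectors $\{a_i + b_i\} \cup \{c_j + b_i : j \neq i\}$. To show these are linearly independent, I would look at their restrictions to the first $k$ coordinates: each $c_j + b_i$ restricts to the standard basis vector $e_j \in \FF_2^k$ (since $b_i$ vanishes there), while $a_i + b_i$ restricts to $0$. Any vanishing linear combination therefore forces the coefficients in front of the $c_j + b_i$ to be zero, by linear independence of distinct $e_j$'s; the remaining coefficient in front of $a_i + b_i$ must then also vanish because $a_i \neq b_i$ implies $a_i + b_i \neq 0$. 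Hence $\dim \calS_i = k$.

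For the intersection claim, the inclusion $\{a_i,b_i\} \subseteq \calS_i \cap E$ is immediate. For the reverse, suppose $v \in E \cap \calS_i$. By the definition of $\aff$, we may write $v = \sum_{w \in T} w$ for some odd-size subset $T \subseteq \{a_i, b_i\} \cup \{c_\ell : \ell \neq i\}$. Restricting both sides to the first $k$ coordinates, the left side is zero (every EPR qubit has first $k$ bits equal to $0$ by our normalization) while the right side equals $\sum_{\ell \neq i,\, c_\ell \in T} e_\ell$. Linear independence of the $e_\ell$ in $\FF_2^k$ forces $T$ to contain none of the $c_\ell$. Thus $T \subseteq \{a_i, b_i\}$ with $|T|$ odd, so $T = \{a_i\}$ or $T = \{b_i\}$, giving $v \in \{a_i, b_i\}$.

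I do not expect a genuine obstacle here: the whole point of choosing the $c_j$ as basis vectors on the first $k$ coordinates and normalizing $a_i, b_i$ to vanish there is precisely to make these two projection arguments work. Note that the stronger condition in Eq.~(\ref{ABoverlap}) is not needed for this proposition; it will be used later when verifying the CSS1/CSS2 conditions on $X$ and $Z$.
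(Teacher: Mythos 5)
Your proof is correct and follows essentially the same route as the paper's: both arguments rest on the coordinate separation (the $c_j$ supported on the first $k$ bits, the $a_i,b_i$ on the last $m-k$) and project onto the first $k$ coordinates to kill the $c$-contributions, then use $a_i\neq b_i$ and distinctness of the EPR qubits. Your version is just slightly more explicit about exhibiting a basis of $\calS_i+b_i$ for the dimension claim, and your observation that Eq.~(\ref{ABoverlap}) is not needed here matches the paper, which only invokes distinctness.
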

\begin{proof}
We have $|\calS_i|=2^k$ since all $c$-vectors
are linearly independent and have zeros on the last $m-k$ bits,
while $a_i$, $b_i$ have zeros on the first $k$ bits and $a_i\ne b_i$.
Thus $\calS_i$ is $k$-dimensional.

Let us check Eq.~(\ref{intersect}).
By definition, $\calS_i$ contains both $a_i$ and $b_i$.
Suppose $i\ne j$ and $a_j\in \calS_i$. Then 
$a_j$ is an odd linear combination of vectors
$\{a_i,b_i,c_1,c_2,\ldots,c_k\}\setminus \{c_i\}$.
Recall that the last $m-k$ bits of all $c$-vectors are zero and the first $k$ bits of all $a$- and $b$-vectors are zero.
Thus  $a_j$ must be an odd linear combination of vectors $a_i$ and $b_i$ only.
This is only possible if $a_j=a_i$ or $a_j=b_i$. However, we assumed that all EPR
qubits $a_1,b_1,\ldots,a_k,b_k$ are distinct. Thus $a_j\notin \calS_i$.
The same argument shows that $b_j\notin \calS_i$.
\end{proof}

First let us check condition CSS2 with $i=1$ (the same argument works for any $i$).
Choose a function $\bar{f}\, : \, \FF_2^m\to \FF_2$ as
\[
\bar{f}(x)=\left\{ \ba{rl}
1 &\mbox{if }  x\in \calS_1\\
0 & \mbox{otherwise}\\
\ea
\right.
\]
Since $\calS_1$ is a $k$-dimensional affine subspace,
Fact~\ref{fact:affine} implies that $\bar{f}\in \calC^\perp$.
We have $\bar{f}(a_1)=\bar{f}(b_1)=1$ 
since $a_1,b_1\in \calS_1$.
From Eq.~(\ref{intersect}) one gets $EX\cap \calS_1=E\cap\calS_1=\{a_1,b_1\}$.
Thus
$\bar{f}(v)=0$ for all $v\in EX\setminus \{a_1,b_1\}$, as claimed.
This proves condition CSS2.

Checking  condition CSS1 requires more
technical work, and we strongly encourage the reader to first study the proof 
of a quite general special case in Appendix \ref{app:example},
which is much simpler. 

As before, we can focus on the
case $i=1$ (the same argument works for any $i$).
Then condition CSS1 is equivalent to
the existence of a degree-$(k-1)$ polynomial
$f\, : \, \FF_2^m \to \FF_2$ such that 
\be
\label{desired}
f(a_1)=f(b_1)=1 \quad \mbox{and} \quad 
f(x)=0 \quad \mbox{for all} \quad x\in (\calS_1\cup \ldots \cup \calS_k) \setminus \{a_1,b_1\}.
\ee
Any degree-$(k-1)$ polynomial $f\, : \, \FF_2^m\to \FF_2$
can be written as
\be
\label{f(x)}
f(x)=\sum_{T\subsetneq [k]} f_T(x) \prod_{j\in T} x_j
\ee
where $f_T\, : \, \FF_2^{m} \to \FF_2$ is some polynomial of degree $k-1-|T|$
that depends only on the variables $x_{k+1},\ldots,x_{m}$.
%\mario{Better with $\subsetneq$, otherwise $k-1-|T|$ can become -1. Old text: Note that any $x\in \calS_j$ satisfies $x_j=0$ since $c_j$ is not included in the list of vectors spanning $\calS_j$. Thus a degree-$k$ monomial $\prod_{j=1}^k x_j$ vanishes for any $x\in \calS_1\cup \ldots \cup \calS_k$ and we can set $f_T\equiv 0$ if $|T|=k$.}
%\sergey{Agreed.}
It remains to choose the polynomials $f_T$ with $0\le |T|\le k-1$.
We shall use induction on $|T|$ starting with $T=\emptyset$.
At each induction step we shall use polynomial regression (Lemma~\ref{lemma:reg})
to argue that the desired polynomial $f_T$ exists 
(there is no need to construct $f_T$ explicitly). 
Given $\emptyset \neq T\subsetneq [k]$, define the following set of $m$-bit strings
\be
\label{e(S)}
e(T) = \left\{ \ba{rl}
\{ a_i,b_i\, \mid \, i\in [k]\setminus T\}
& \mbox{if $|T|$ is even},\\
\{0^m\} \cup \{a_i+b_i\, \mid \, i\in [k]\setminus T\}
& \mbox{if $|T|$ is odd}.\\
\ea
\right.
\ee
%If $T$ and $U$ are subsets of $[k]$, we shall write
%$U\subset T$ if $U$ is a subset of $T$ and $U\ne T$.\ronald{can also use symbol $\subsetneq$?}

\begin{proposition}
\label{prop:induction}
Consider 
 a function $f(x)$ of the form Eq.~(\ref{f(x)}),
where the $f_T(x)$ are some functions that depend only on the variables $x_{k+1},\ldots,x_m$.
Then $f(x)$
satisfies condition Eq.~(\ref{desired}) iff:
\be
\label{induction_1}
f_\emptyset(a_1)=f_\emptyset(b_1)=1, \quad  \quad f_\emptyset(a_i)=f_\emptyset(b_i)=0 \quad
\mbox{for $2\le i\le k$},
\ee
and for every $\emptyset \neq T\subsetneq [k]$ and every $x\in e(T)$:
\be
\label{induction_2}
\sum_{U\subseteq T} f_U(x) = 0 \;\;\;\;\;\;\; \left({\rm equivalently,}\;
f_T(x)=\sum_{U\subsetneq T} f_U(x) \right)
\ee
\end{proposition}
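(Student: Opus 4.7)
The plan is to prove the proposition by direct substitution: each point $x$ in some $\calS_i$ admits a canonical parameterization induced by the split between the first $k$ coordinates (where the $c$-vectors live) and the last $m-k$ coordinates (where the $a$-, $b$-vectors live), and under this parameterization the conditions of Eq.~(\ref{desired}) collapse into the stated conditions on $f_\emptyset$ and on $\sum_{U\subseteq T}f_U$.

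First I would parameterize points. By definition of $\aff(\cdot)$, every $x\in \calS_i$ has the form
\[
x = \alpha a_i + \beta b_i + \sum_{j\in T} c_j,\qquad T\subseteq [k]\setminus\{i\},\ \ \alpha,\beta\in\{0,1\},\ \ \alpha+\beta+|T|\equiv 1\pmod{2}.
\]
Because the $c_j$'s are supported on the first $k$ coordinates and the $a_i,b_i$'s on the last $m-k$ coordinates, the first $k$ bits of $x$ equal the indicator $\chi_T$ and the last $m-k$ bits equal $\alpha a_i+\beta b_i$ (restricted to those coordinates). In particular, $(T,\alpha a_i+\beta b_i)$ can be read off uniquely from $x$, so no point is double-counted across different $\calS_i$'s.

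Next I would plug this parameterization into Eq.~(\ref{f(x)}). The monomial $\prod_{j\in T'} x_j$ equals $1$ precisely when $T'\subseteq T$, and since $f_U$ depends only on $x_{k+1},\ldots,x_m$, while $T\subsetneq [k]$ is automatic (as $i\notin T$), we obtain
\[
f(x) \;=\; \sum_{U\subseteq T} f_U(\alpha a_i + \beta b_i).
\]
The desired conditions on $f$ then decouple according to $T$.

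Finally I would case-split on $T$. For $T=\emptyset$ the parity constraint forces $(\alpha,\beta)\in\{(1,0),(0,1)\}$, so $x\in\{a_i,b_i\}$; writing out $f(a_1)=f(b_1)=1$ and $f(a_i)=f(b_i)=0$ for $i\ge 2$ yields Eq.~(\ref{induction_1}). For $T\ne\emptyset$ the point $x$ has nonzero first $k$ bits, hence $x\notin\{a_1,b_1\}$, so the requirement is $f(x)=0$, i.e.\ $\sum_{U\subseteq T}f_U(y)=0$ with $y:=\alpha a_i+\beta b_i$. A brief parity calculation shows that, as $i$ ranges over $[k]\setminus T$ and $(\alpha,\beta)$ over its permitted values, $y$ ranges over $\{a_i,b_i:i\in[k]\setminus T\}$ when $|T|$ is even and over $\{0^m\}\cup\{a_i+b_i:i\in[k]\setminus T\}$ when $|T|$ is odd; both match $e(T)$ from Eq.~(\ref{e(S)}). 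This yields Eq.~(\ref{induction_2}).

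The only bookkeeping delicacy I anticipate is verifying that the same value $y$ arising from different indices $i\in[k]\setminus T$ produces the \emph{same} constraint rather than a conflicting one, but since $\sum_{U\subseteq T} f_U(y)$ depends only on $y$, no duplication can introduce an inconsistency. The reverse direction is immediate: if the conditions Eq.~(\ref{induction_1}) and Eq.~(\ref{induction_2}) hold, then our formula for $f(x)$ gives the required values at every $x\in\calS_1\cup\cdots\cup\calS_k$.
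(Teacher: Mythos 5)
Your proof is correct and follows essentially the same route as the paper's: both plug the canonical parameterization $x=\chi_T+\alpha a_i+\beta b_i$ of points of $\calS_1\cup\cdots\cup\calS_k$ into Eq.~(\ref{f(x)}), observe that the monomials select exactly the terms $U\subseteq T$, and note that $e(T)$ is by construction the set of ``tails'' $y$ for which $\chi_T+y$ sweeps out $S\setminus E$. You simply carry out explicitly the ``straightforward calculation'' the paper leaves to the reader, including the (correct) observation that coincidences of $y$ across different $i\in[k]\setminus T$ cannot create conflicting constraints.
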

\begin{proof}
CSS1 requires that $f$ takes given values on $S = \calS_1\cup \ldots \cup \calS_k$, namely 
$f(a_{1})= f(b_{1}) = 1$, and $f$ must be zero on the rest of $S$. When we plug these input values
into Eq.~(\ref{f(x)}), it is a straightforward calculation to see
that Eq.~(\ref{induction_1}) exactly says that  $f(a_{i})= f(b_{i}) = 1$ if and only if $i=1$,
and Eq.~(\ref{induction_2}) exactly says that  $f(x)=0$ for all $x\in S\setminus E$.
More precisely, Eq.~(\ref{induction_2}) for some $\emptyset \neq T\subsetneq [k]$ and $x\in e(T)$
expresses exactly $f(y)=0$ for $y=\chi_{T}+x\in S$, where $\chi_{T}$
is the characteristic vector of~$T$. Eq.~(\ref{e(S)}) was engineered so that 
the above $y$ ranges over all of $S\setminus E$.
\end{proof}

In the rest of the section we concentrate our efforts on finding a family $\{f_T\}_{T\subsetneq [k]}$
of polynomials that satisfy Proposition 
\ref{prop:induction}.
Below we focus on the case when $k$ is even.
The analysis for odd $k$ requires only a few minor modifications,
as detailed in  Appendix~\ref{app:odd_k}.

For fixed $k, a_{i}s$ and $b_{i}s$ we recursively (in the increasing size of $|T|$) construct
$f_T$, and inductively show that
$f_T$ has the desired properties.
To enable induction,  we supplement Eqs.~(\ref{induction_1},\ref{induction_2})
with a few extra conditions on the polynomials $f_T$.
Let $\ell\ge 0$ be an integer. 
We will say that a family of polynomials $f_T\, : \, \FF_2^m\to \FF_2$
labeled by subsets $T\subsetneq [k]$ with $|T|\le \ell$
is \emph{valid} if all conditions stated below are satisfied for $|T|\le \ell$:
%\mario{I have added ``for $|T|\le \ell$''} 
\begin{enumerate}
\item[\bf I1:] $f_T$ depends only on the variables $x_{k+1},\ldots,x_m$
\item[\bf I2:]  $f_T$ has degree $k-1-|T|$
\item[\bf I3:]  $f_\emptyset(a_1)=f_\emptyset(b_1)=1$ and $f_\emptyset(a_i)=f_\emptyset(b_i)=0$ for $2\le i\le k$
\item[\bf I4:]  $f_T(x)=\sum_{U\subsetneq T} f_U(x)$ for every non-empty set $T\subsetneq [k]$ and every $x\in e(T)$
\item[\bf I5:]  $f_T\equiv 0$ if $|T|$ is odd and $|T|\le k-3$
\item[\bf I6:]  $f_T(0^m)=f_T(a_i+b_i)=0$ if $i\notin T$ and $|T|\le k-4$
%$f_T(0^m)=0$. Also, when %$|T|\le k-4$ and $i\notin %T$, we have %$f_T(a_i+b_i)=0$. 
\item[\bf I7:] $f_T(a_i)=f_T(b_i)$ if $i\notin T$
and $|T|\le k-2$
\end{enumerate}
Here  (I1,I3,I4) are the conditions stated in Proposition~\ref{prop:induction}
and (I2) ensures that a polynomial $f(x)$ constructed 
from the family $\{f_T\}_{T\subsetneq [k]}$ according to Eq.~(\ref{f(x)}) has degree $k-1$.
Thus (I1,I2,I3,I4) alone imply CSS1.
We shall use induction on $\ell$ to prove that a valid family of polynomials 
exists for all $\ell\le k-1$.   The extra conditions
(I5,I6,I7) facilitate analysis of the induction step. 

\medskip

The base case of the induction is $\ell=0$.
Then a valid family is a single polynomial
$f_\emptyset$.  
Condition (I2) demands that 
$f_\emptyset$ has degree $d_T=k-1$.
 Conditions (I4) and (I5) can be skipped for $T=\emptyset$. Condition (I7) follows trivially from (I3). It remains to check (I3,I6)
Note that conditions (I3) and (I6) with $T=\emptyset$
are imposed at disjoint set of points,
see Eq.~(\ref{ABoverlap}).
Thus (I3) and (I6) are consistent. 
We fix the value of $f_\emptyset$ at $s=2k$ points in (I3)
if $k\le 3$, and at $s=3k+1$ points in (I3,I6) if $k\ge 4$.
We can show that the desired polynomial $f_\emptyset$ exists
using Lemma~\ref{lemma:reg}. Below we always
apply the lemma to polynomials satisfying condition (I1). This is justified since the first $k$ bits
of all vectors $a_i$ and $b_i$ are zero. 
If $k\le 3$ then
 we have $s=2k\le 2^{d_T+1}=2^k$.
Thus we can use part~(2) of Lemma~\ref{lemma:reg}.
The extra condition $\sum_{i=1}^s g_i=0$ of the lemma
is satisfied since (I3) fixes the value of $f_\emptyset$
to $1$ at an even number of points.
If $k\ge 4$ then
we have $s=3k+1<2^{d_T+1}=2^k$ and thus 
we can use 
part~(1) of
Lemma~\ref{lemma:reg}.

\medskip

We shall now prove the induction step.
Suppose we have already constructed a valid family of polynomials $f_T$
with $|T|\le \ell-1$. 
Consider a subset $T\subsetneq [k]$ with $|T|=\ell$ such that $1\le \ell\le k-2$. The case $\ell=k-1$ will be considered afterwards.

Suppose $|T|$ is odd. Since $k$ is even
and $\ell\le k-2$, this is only possible if $|T|\le k-3$.
We set $f_T\equiv 0$ to satisfy (I5).
Then  conditions (I2), (I6), (I7) 
are satisfied automatically.
Condition (I3) can be skipped since $T\ne \emptyset$.
Condition (I4) with $f_T\equiv 0$ demands that  $\sum_{U\subsetneq T} f_U(x)=0$ for any $x\in e(T)$.
Since $|T|$ is odd,  $e(T)$ is the set of points $0^m$ and $a_i+b_i$ with $i\notin T$.
Each term $f_U(x)$ with odd $|U|$ vanishes due to (I5). Each term $f_U(x)$
with even $|U|$ vanishes for $x\in e(T)$ due to (I6) since $|U|\le |T|-1\le k-4$.
Thus choosing $f_T\equiv 0$ satisfies (I4).  

Suppose $|T|$ is even. Condition (I2)
demands that $f_T$ has degree $d_T=k-1-|T|$.
Conditions (I3), (I5) can be skipped
since $|T|$ is even and $T\ne \emptyset$.
We claim that (I7) follows from (I4).
Indeed, we have $x\in e(T)$ iff
$x=a_i$ or $x=b_i$ with $i\notin T$.
We have $f_T(x)=\sum_{U\subsetneq T} f_U(x)$
due to (I4). If $|U|$ is odd then
$f_U\equiv 0$ due to (I5) since $|U|\le |T|-1=\ell-1\le k-3$. If $|U|$ is even then $|U|\le |T|-2\le k-4$.
Thus $f_U(a_i)=f_U(b_i)$ since (I7) holds for $f_U$.
This implies $f_T(a_i)=f_T(b_i)$ for any $i\notin T$, as claimed in (I7).
It remains to check (I4,I6).
We fix the value of $f_T$ at $s=|e(T)|=2(k-|T|)$ points in (I4) 
if $|T|\ge k-3$. We fix the value of $f_T$ at $s=|e(T)|+k-|T|+1 = 3(k-|T|)+1$
points in (I4,I6) if $|T|\le k-4$.
Consider first the case $|T|\ge k-3$.
Then $|T|=k-2$ since $k$ and $|T|$ are even.
Thus $s=2(k-|T|)=4$ and $2^{d_T+1}=2^{k-|T|}=4$.
We can use part~(2) of Lemma~\ref{lemma:reg}
to construct $f_T$.
The extra condition $\sum_{i=1}^s g_i=0$
of the lemma is equivalent to 
\be
\label{extra_condition_1}
\sum_{x\in e(T)} \sum_{U\subsetneq T} f_U(x)=0.
\ee
Each term $f_U(x)$ with odd $|U|$ vanishes due to (I5)
since $|U|\le |T|-1=k-3$. 
Each term $f_U(x)$ with even $|U|$ obeys
$f_U(a_i)=f_U(b_i)$ for $i\notin T$
since $f_U$ obeys (I7). 
Thus we have $\sum_{x\in e(T)} f_U(x)=0$
for any $U\subsetneq T$ which implies Eq.~(\ref{extra_condition_1}). Thus the desired polynomial $f_T$ exists by part~(2)  of
Lemma~\ref{lemma:reg}, that is,
we have checked (I4) in the case $|T|\ge k-3$.
Condition (I6) can be skipped in this case.
In the remaining case, $|T|\le k-4$,
we can use part~(1) of Lemma~\ref{lemma:reg}
since $s=3(k-|T|)+1<2^{d_T+1}=2^{k-|T|}$
for $|T|\le k-4$.
Thus conditions (I4,I6) are satisfied. 

It remains to prove the induction step for $\ell=k-1$.
Suppose we have already constructed a valid family of polynomials $f_T$
with $|T|\le k-2$. 
Consider a subset $T\subsetneq [k]$ with $|T|=k-1$.
Since we assumed that $k$ is even, $|T|$ is odd.
Condition (I2) demands that $f_T$
has  degree $k-1-|T|=0$, that is, $f_T$ is
a constant function.
We can skip conditions (I3,I5,I6,I7)
since none of them applies if $|T|=k-1$.
It remains to check (I4).
Note that $e(T)=\{0^m,a_i+b_i\}$ for some $i\in [k]$ such that $T=[k]\setminus \{i\}$.
Condition (I4) fixes the value of $f_T(x)$ at
$x=0^m$ and at $x=a_i+b_i$.
Since we want $f_T$ to be a constant function, it suffices to check that the desired
values $f_T(0^m)$ and $f_T(a_i+b_i)$
are the same.
Substituting the desired values from (I4),
we have to check that 
\be
\label{extra_condition_2}
\sum_{U\subsetneq T} f_U(0^m) + f_U(a_i+b_i)=0.
\ee
The sum contains terms with $|U|\le |T|-1= k-2$.
All terms $f_U$ with odd $|U|$ must have
$|U|\le k-3$ since $k$ is even. Such terms vanish due to (I5).
All terms $f_U$ with even $|U|\le k-4$ vanish due to (I6).
Thus we can restrict the sum Eq.~(\ref{extra_condition_2}) to terms
with $|U|=k-2$.
However, $f_U$ is a degree-1 polynomial if $|U|=k-2$ due to (I2). 
Thus $f_U(0^m)+f_U(a_i+b_i) = f_U(a_i)+f_U(b_i)$
and Eq.~(\ref{extra_condition_2}) is equivalent to
\be
\label{extra_condition_3}
\sum_{\substack{U\subsetneq T\\ |U|=k-2\\}}\;  f_U(a_i) + f_U(b_i)=0.
\ee
Since $f_U$ obeys (I7), we have $f_U(a_i)=f_U(b_i)$,
which implies Eq.~(\ref{extra_condition_3}).
We have now verified (I4). This completes the proof
of the induction step. 

Accordingly, having shown that both conditions CSS1 and  CSS1 of Lemma~\ref{lemma:css} are satisfied, we can now conclude that the resource state $|\mathsf{RM}(k-1,m)\ra$ is $k$-pairable.

\subsection{10-qubit 2-pairable example}
\label{subs:10qubit}

The 2-pairable state of Section~\ref{subs:2-pairable} used $n=16$ qubits. Extending 2-pairability to states with fewer qubits would be good. Here we give
a $10$-qubit example and describe Pauli measurements generating $k=2$ EPR-pairs
for all choices of such pairs (modulo certain symmetries).

We choose the resource state $|\psi\ra$ as the graph
state associated with the 10-vertex ``wheel graph'' shown in Figure~\ref{fig:graph_state}:
\be
|\psi\ra = \prod_{(i,j)\in E} \mathsf{CZ}_{i,j} |+\ra^{\otimes 10}.
\ee
Here $E$ is the set of graph edges and $\mathsf{CZ}$ is the controlled-$Z$ gate.

%\vspace{0.5cm}
\setlength{\intextsep}{10pt}%
\begin{figure}[hbt]
\centerline{
\includegraphics[height=3cm]{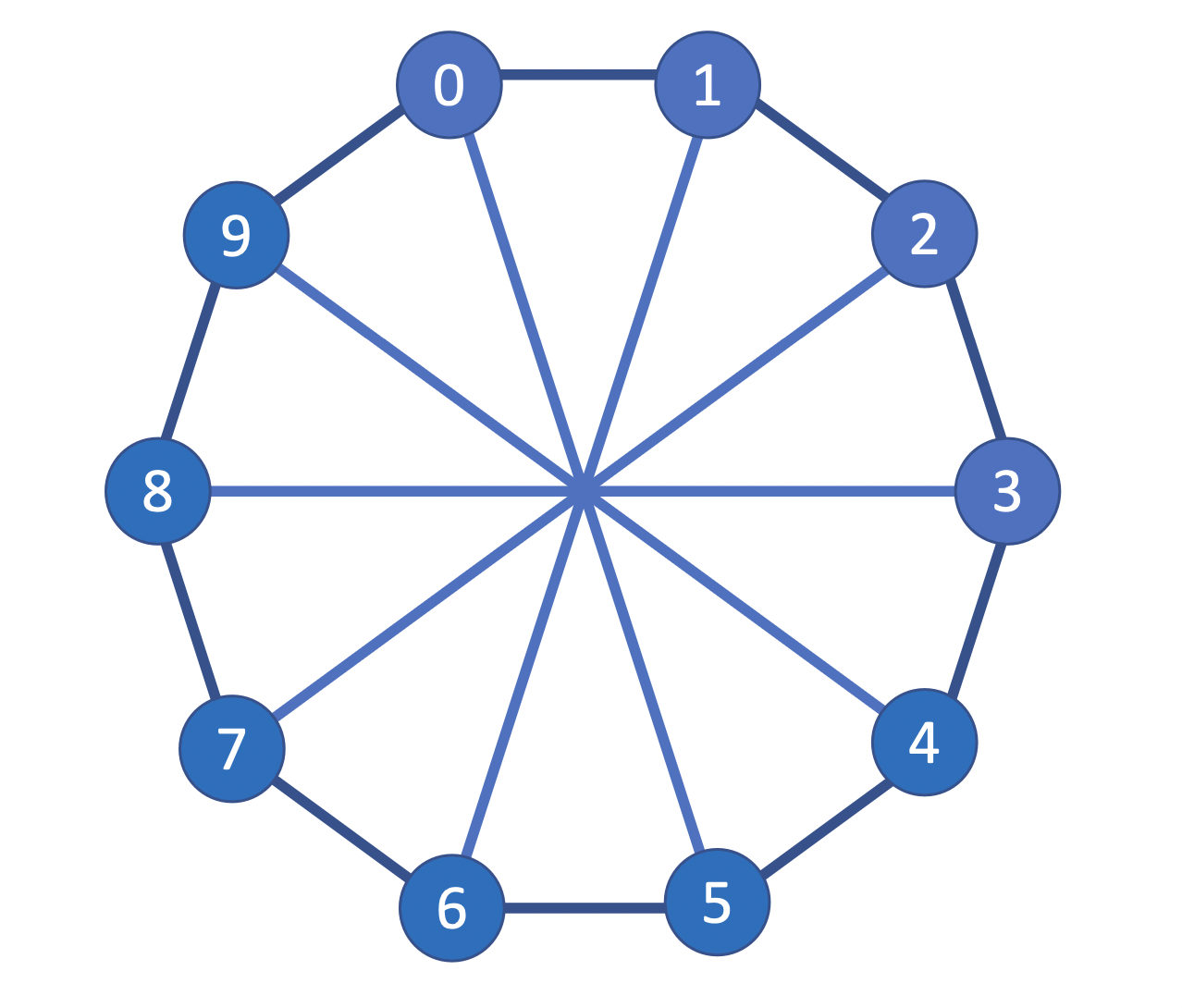}}
  \caption{$10$-vertex ``wheel graph''. The corresponding $10$-qubit graph state is $2$-pairable with one qubit per party (to avoid confusion: the center of the picture is not an 11th vertex).}
  \label{fig:graph_state}
\end{figure}
%\vspace{1cm}

The number of ways to choose two EPR-pairs
$\{a_1,b_1\}$ and $\{a_2,b_2\}$ is $3\binom{n}{4} =630$ for $n=10$ qubits.
However, the number of cases we need to consider can be reduced by noting that 
the graph state~$|\psi\ra$ is invariant under certain permutations of qubits
and local Clifford operations. Indeed, if $W_\varphi$ is a permutation of $n$ qubits
(considered as a unitary operator) and $C$ is a product of single-qubit Clifford gates
such that $W_\varphi |\psi\ra=C|\psi\ra$, then an LOCC protocol generating 
EPR-pairs $\{a_1,b_1\}$ and $\{a_2,b_2\}$ can be easily converted into one generating EPR-pairs $\{\varphi(a_1),\varphi(b_1)\}$
and $\{\varphi(a_2),\varphi(b_2)\}$. This conversion requires only relabeling of qubits
and local basis changes. 

Suppose qubits are labeled by elements of the cyclic group
$\ZZ_{10} = \{0,1,\ldots,9\}$.
Clearly, $|\psi\ra$ is invariant under the cyclic shift of qubits,
$j\to j+1$ and inversion $j\to -j$.
Here and below qubit indexes are computed modulo $10$. 
Consider a permutation $\varphi\, : \, \ZZ_{10} \to \ZZ_{10}$ such that 
$\varphi(j)=3j$. Let $W_\varphi$ be the $10$-qubit unitary that 
implements the permutation $\varphi$.
We claim that 
\be
\label{symmetry_n10}
W_\varphi |\psi\ra = H^{\otimes 10} |\psi\ra,
\ee
where $H$ is the Hadamard gate.
Indeed, it is known~\cite{raussendorf2003measurement} that the graph state $|\psi\ra$ has stabilizers 
\[
S_j =\sigma^x_j \prod_{i\, : \, (i,j)\in E}\; \sigma^z_i=
\sigma^x_j \sigma^z_{j-1} \sigma^z_{j+1} \sigma^z_{j+5}, \qquad j\in \ZZ_{10}.
\]
Thus $|\psi\ra$ is also stabilized by 
\[
S_j' := S_{j-3} S_{j+3} S_{j+5} = \sigma^z_j \sigma^x_{j-3} \sigma^x_{j+3} \sigma^x_{j+5}.
\]
It follows that $H^{\otimes 10} |\psi\ra$ is stabilized by 
\[
S_j'' :=H^{\otimes 10} S_j' H^{\otimes 10}= \sigma^x_j \sigma^z_{j-3} \sigma^z_{j+3} \sigma^z_{j+5} = W_\varphi S_i W_\varphi^\dag,
\]
where $i=\varphi^{-1}(j)$.
Thus $W_\varphi^\dag H^{\otimes 10}|\psi\ra$ is stabilized by $S_i$
for all $i\in \ZZ_{10}$, which implies Eq.~(\ref{symmetry_n10}).

Since $|\psi\ra$ is also invariant under the cyclic shift of qubits, we can assume w.l.o.g.\
that $a_1=0$. The permutation $\varphi$ maps $0$ to $0$ while any qubit $b_1\in \ZZ_{10}\setminus \{0\}$
can be mapped to either $1$, or $2$, or $5$ by repeated applications of $\varphi$.
Thus we can assume w.l.o.g.\ that $a_1=0$ and $b_1\in \{1,2,5\}$.

For each of the remaining choices of EPR-pairs
we numerically examined all $3^6$
Pauli measurement bases on  qubits $\ZZ_{10}\setminus \{a_1,b_1,a_2,b_2\}$
and computed the final post-measurement state
of qubits $a_1,b_1,a_2,b_2$ using the standard stabilizer formalism. To test whether the final
state is locally equivalent to the desired EPR-pairs,
we checked whether the entanglement entropies of the final state obey
$S(a_i)=S(b_i)=1$ and $S(a_ib_i)=0$ for $i=1,2$.
The entanglement entropy of a stabilizer state can
be extracted from its  tableaux as described in~\cite{Fattal2004EntanglementIT}.
Any two-qubit stabilizer state of qubits $a_i,b_i$ satisfying $S(a_i)=S(b_i)=1$ and $S(a_ib_i)=0$ has to be maximally entangled and thus equivalent to the EPR-pair modulo single-qubit Clifford gates. We 
found  a Pauli basis generating maximally-entangled states on qubits
$\{a_1,b_1\}$ and $\{a_2,b_2\}$
 in all considered cases, see Figure~\ref{fig:measurements_n10}.
We also observed that the graph state $|\psi\ra$ is not $2$-pairable
if the Pauli bases are restricted to $\sigma^x$ and $\sigma^z$ only.

\setlength{\intextsep}{10pt}%
\begin{figure}[ht]
%\vspace{10mm}
\centerline{\includegraphics[height=6cm]{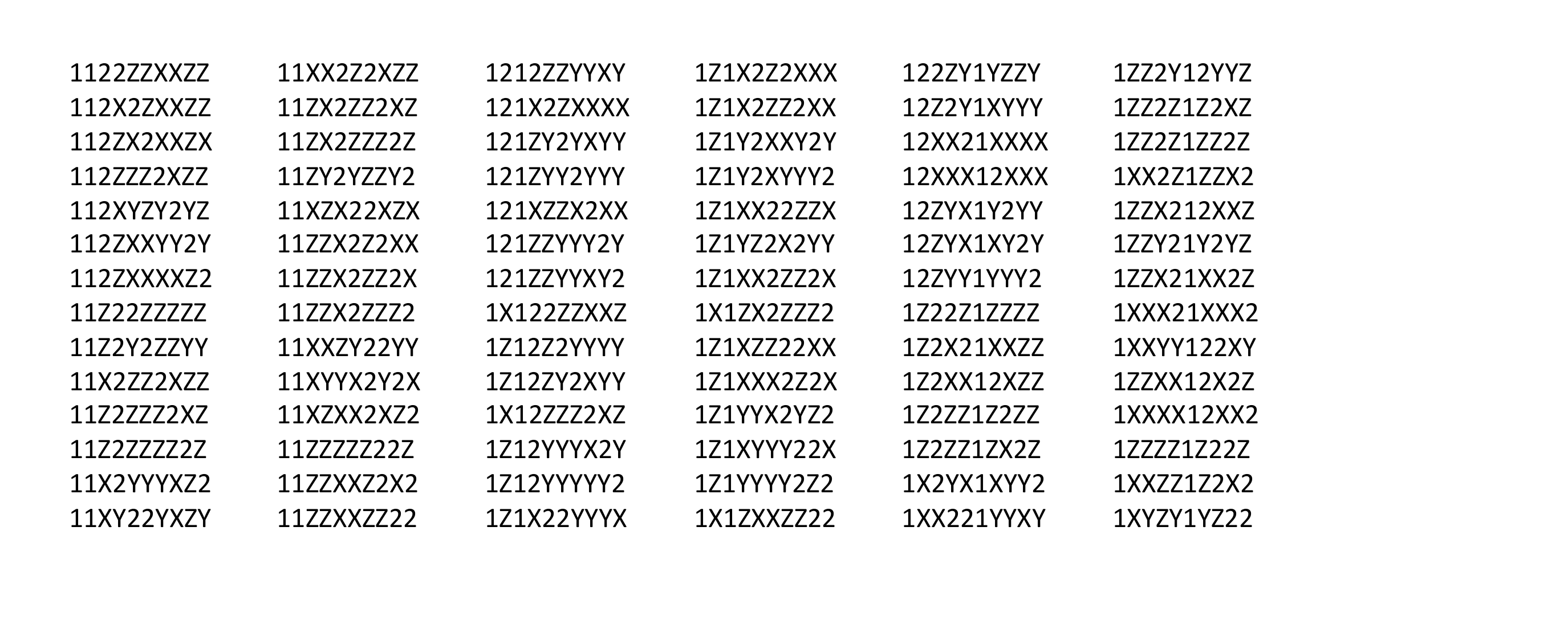}}
\caption{Measurement patterns for the $10$-qubit $2$-pairable
resource state associated with the
``wheel graph''.
Here '1' and '2' stand for the EPR qubits $\{a_1,b_1\}$ and $\{a_2,b_2\}$
respectively.
A qubit labeled by 'X', 'Y', or 'Z' is measured in the Pauli basis
$\sigma^x$, $\sigma^y$, and $\sigma^z$ respectively.
Here we only consider the case $a_1=0$ and $b_1\in \{1,2,5\}$.
All other cases can be obtained by a permutation of qubits
that leaves the resource state invariant (modulo a bitwise Hadamard).
}
\label{fig:measurements_n10}
\end{figure}
%\vspace{10mm}

 We verified numerically that no stabilizer state with $n<10$ qubits is $2$-pairable using LOCC protocols based on Pauli measurements, by checking all possible 9-qubit graph states as listed in~\cite{adcock2020mapping}. 
The code is available at
\href{https://github.com/yashsharma25/generating-k-epr-pairs}{https://github.com/yashsharma25/generating-k-epr-pairs}

\section{Obstructions for complete pairings ($k=n/2$)}\label{sec:obstructionskisn/2}

Now we turn from constructions to proving limitations on all possible $k$-pairable resource states.
Let $n$ be the number of parties as in the previous sections. Since we are talking about complete pairings in this section, we assume here that $n$ is divisible by 2. For a pairing 
\[
\pi = \{\{a_{1},b_{1}\},\ldots,\{a_{n/2},b_{n/2}\}\} \;\;\; {\rm with} \;\;\; \cup \pi = [n]
\]
the tensor product $|\pi\rangle$ of the $n/2$ EPR-pairs,
\[
|\pi\rangle = \bigotimes_{i=1}^{n/2}\,  |\Phi^+\rangle_{a_{i},b_{i}} 
\]
is a state on $n$ qubits, where  by definition $|\Phi^+\rangle_{a,b} = \frac{1}{\sqrt{2}}\left(|0_{a}0_{b}\rangle +|1_{a}1_{b}\rangle\right)$.

For our first type of lower bounds we assume that the $n$ parties want to achieve all possible {\it complete} pairings on $[n]$. Then we find a super-constant lower bound on the required number~$m$ of qubits per party:

\begin{theorem}\label{thm:obstr1}
Suppose $\ket{\psi}$ is a fixed state of $nm$ qubits shared by $n$ parties such that each party holds
$m$ qubits of $\ket{\psi}$. Suppose that for any pairing $\pi$ of $n$ qubits a transformation $|\psi\rangle\rightarrow |\pi\rangle\otimes \ket{w_{\pi}}$ is realizable
by an LOCC protocol 
such that at the end of the protocol the $i$-th qubit of $\ket{\pi}$ belongs to the $i$-th
party for all $i$, and $\ket{w_{\pi}}$
is an arbitrary state on the qubits not 
belonging to $\ket{\pi}$.\footnote{Without loss of generality we may assume  $\ket{w_{\pi}}=\ket{0^{n(m-1)}}$.} Then
\[
m = \Omega(\log\log n).
\]
\end{theorem}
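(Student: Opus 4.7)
The plan is a dimension-counting argument comparing (i) an upper bound on the algebraic dimension / metric entropy of the set of LOCC-reachable states from $|\psi\rangle$ with (ii) a lower bound on the number of mutually well-separated target states that must lie in this set.

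\emph{Step 1 (SLOCC upper bound).} On any fixed branch of an LOCC protocol---i.e.\ for a fixed outcome string of the classical communications---the cumulative operation on party~$i$'s $m$ qubits is a single linear operator $A_i\in M_{2^m}(\mathbb{C})$. Hence every pure state reachable from $|\psi\rangle$ by LOCC with positive probability lies in
\[
\mathcal{R}(\psi)\;=\;\bigl\{\,(A_1\otimes\cdots\otimes A_n)|\psi\rangle/\|(A_1\otimes\cdots\otimes A_n)|\psi\rangle\|\,:\,A_i\in M_{2^m}(\mathbb{C})\bigr\}.
\]
Since the parameter space is $n\cdot 4^m$-dimensional over $\mathbb{C}$, $\mathcal{R}(\psi)$ is a projective algebraic variety of complex dimension at most $D:=n\cdot 4^m$.

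\emph{Step 2 (Separation of target states).} For two distinct pairings $\pi_1\neq\pi_2$ the symmetric difference of the matchings decomposes into vertex-disjoint cycles of even length $\geq 4$ that alternate $\pi_1$- and $\pi_2$-edges. A direct contraction of EPR-pair tensors along a $2k$-cycle produces an overlap contribution $2^{-(k-1)}$, while common pairs contribute~$1$, so
\[
|\langle\pi_1|\pi_2\rangle|\;=\;\prod_{\text{cycle of length }2k}2^{-(k-1)}\;\leq\;\tfrac12,
\]
and tensoring with the common ancilla $|0^{n(m-1)}\rangle$ preserves this overlap. Hence the $(n-1)!!$ target states $|\pi\rangle\otimes|0^{n(m-1)}\rangle$ are pairwise at Hilbert-space distance $\geq 1$.

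\emph{Step 3 (Metric-entropy bound).} A complex projective algebraic variety of dimension $D$ intersected with the unit sphere of Hilbert space admits a $\Omega(1)$-packing of cardinality at most $2^{O(D)}$. One obtains this by taking a $\Theta(1)$-net in the parameter domain (the product of operator-norm unit balls $\|A_i\|_{\mathrm{op}}\leq 1$) and pushing it through the polynomial map $\Phi:(A_1,\ldots,A_n)\mapsto(A_1\otimes\cdots\otimes A_n)|\psi\rangle$, whose Lipschitz constant is absolute thanks to the multilinear structure and $\||\psi\rangle\|=1$.

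\emph{Step 4 (Conclusion).} By the hypothesis of the theorem each of the $(n-1)!!$ targets lies in $\mathcal{R}(\psi)$, and by Step~2 they are pairwise $\Omega(1)$-separated, so
\[
(n-1)!!\;\leq\;2^{O(n\cdot 4^m)}.
\]
Taking logarithms and using $\log((n-1)!!)=\Theta(n\log n)$ yields $n\log n=O(n\cdot 4^m)$, i.e.\ $4^m=\Omega(\log n)$, and therefore $m=\Omega(\log\log n)$.

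\emph{Main obstacle.} The hard step is Step~3: obtaining a packing bound of the clean form $2^{O(D)}$ rather than the weaker $2^{O(D\log n)}$ that a black-box Yomdin–Gromov bound on a semialgebraic set of degree $\Theta(n)$ would give. The weaker bound would only yield $m=\Omega(1)$ and is useless here. Overcoming this requires exploiting the explicit multilinear form of $\Phi$: after restricting each $A_i$ to its operator-norm unit ball, one bounds $\|\Phi(A)-\Phi(A')\|$ by a constant independent of $n,m$, and controls the normalisation $\|(A_1\otimes\cdots\otimes A_n)|\psi\rangle\|$ from below on a large open portion of the domain, so that the normalised map is Lipschitz with a uniform constant and the $\Theta(1)$-net of the parameter domain pushes forward to a $\Omega(1)$-net of $\mathcal{R}(\psi)$.
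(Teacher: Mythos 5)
Your Steps 1, 2 and 4 are sound and match the paper's framing, but Step 3 --- which you yourself flag as the main obstacle --- is a genuine gap, and the fix you sketch does not close it. First, the bare claim that a projective variety of dimension $D$ meets the unit sphere in at most $2^{O(D)}$ pairwise $\Omega(1)$-separated points is false without degree control: already a one-dimensional variety (a union of $N$ lines) contains $N$ pairwise-orthogonal unit vectors for any $N$. Any quantitative bound must involve the degree of the parametrization, which here is $n$, and Yomdin--Gromov-type bounds then give exactly the useless $2^{O(D\log n)}$ you mention. Second, the proposed remedy via a uniform Lipschitz constant cannot work: for $\Phi(A_1,\dots,A_n)=(A_1\otimes\cdots\otimes A_n)\ket{\psi}$ the telescoping estimate gives $\|\Phi(A)-\Phi(A')\|\le\sum_{i=1}^n\|A_i-A_i'\|_{\mathrm{op}}$, and this is tight in general, so the Lipschitz constant with respect to any per-party metric is $\Theta(n)$, not $O(1)$. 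Pushing a net through $\Phi$ therefore forces an $O(1/n)$-net in each of the $n$ factors, and you are back to $n^{O(n4^m)}=2^{O(D\log n)}$. Third, the normalization is not harmless: the LOCC branches that produce $\ket{\pi}$ typically have amplitude $2^{-\Theta(n)}$, so $A\mapsto\Phi(A)/\|\Phi(A)\|$ distorts distances by exponential factors precisely at the points you care about; restricting to ``a large open portion of the domain'' does not help because the targets need not lie there.

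The paper circumvents all three problems by replacing metric entropy with linear algebra. It compares the \emph{linear span} of the $r$-th tensor powers $\ket{\pi}^{\otimes r}$ --- for $\pi$ ranging over a large independent set of pairings chosen so that pairwise overlaps are at most $2^{-n/4}$, whence the Gram matrix of the $r$-th powers is diagonally dominant by Gershgorin once $r=\Theta(\log n)$ --- against the span of $(K\ket{\psi})^{\otimes r}$ over product operators $K=K_1\otimes\cdots\otimes K_n$. The crucial saving is that $K_i^{\otimes r}$ lives in the symmetric subspace of $(\CC^{2^{m+1}})^{\otimes r}$, of dimension $\binom{2^{m+1}+r-1}{2^{m+1}-1}$, so the upper bound on the span is $(\log n)^{O(n2^m)}$ rather than $2^{O(n4^m\log n)}$; and linear spans are insensitive both to normalization and to Lipschitz constants. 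Note also that your constant separation $|\langle\pi_1|\pi_2\rangle|\le\frac12$ over all $(n-1)!!$ pairings would not suffice for a Gram-matrix argument either --- the paper needs both the restriction to an independent set with exponentially small overlaps and the $r$-fold tensoring to amplify them. This tensor-power/symmetric-subspace device is the idea your proposal is missing, and without it (or a correct substitute for Step 3) the argument does not go through.
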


The proof of this theorem is going to be a dimension calculation, but 
with a twist. Given a starting state $\ket{\psi}$ we estimate the dimension of the space that contains all those states that can be obtained (with positive probability) from 
$\ket{\psi}$ by an LOCC protocol.
We want to compare this with the dimension of the space induced by all possible states
that should arise as output, where we let the input range 
over all possible pairings. This by itself, however, will 
not yield the desired lower bound. The mathematical idea is that rather than representing
each state by itself, we represent it by its $r^{\rm th}$ tensor power, where $r$ will
be carefully set in the magnitude of $\Theta(\log n)$.
Let
\[
{\cal L}_{r}= {\rm span}(\; |\pi\rangle^{\otimes r}\mid  \mbox{$\pi$ is an $n$-qubit pairing}\, )
\]
be the linear space induced by the $r^{\rm th}$ tensor powers of all possible 
output states.
Before stating a lower bound on $\dim ({\cal L}_{r})$ we prove a lemma:
\begin{lemma}\label{lem:in}
Let $\pi$ and $\rho$ be two pairings. Then $\langle \pi |\rho\rangle = 2^{\mu-n/2}$, where $\mu$ is the number of cycles in $\pi\cup \rho$, as a graph on 
vertex set $[n]$.
\end{lemma}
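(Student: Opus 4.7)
The plan is to expand each EPR-pair in the computational basis, reduce the inner product to a counting problem about binary strings that are constant on certain index sets, and then recognize that those index sets are exactly the connected components of the graph $\pi\cup\rho$.

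First I would rewrite each pairing state as a uniform superposition. Since $|+\rangle_{a,b}=\frac{1}{\sqrt{2}}(|0_a 0_b\rangle+|1_a 1_b\rangle)$, taking the tensor product over the $n/2$ pairs gives
\[
|\pi\rangle = \frac{1}{2^{n/4}} \sum_{x\in C_\pi} |x\rangle, \qquad C_\pi=\{x\in\{0,1\}^n\,:\, x_a=x_b \text{ for all }\{a,b\}\in\pi\},
\]
and similarly for $|\rho\rangle$. Because the standard basis is orthonormal, the inner product collapses to a cardinality:
\[
\langle \pi|\rho\rangle = \frac{1}{2^{n/2}}\,|C_\pi\cap C_\rho|.
\]

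Next I would identify $C_\pi\cap C_\rho$. A string $x$ lies in this intersection iff $x_a=x_b$ for every edge of $\pi$ \emph{and} every edge of $\rho$, i.e.\ iff $x$ is constant on each connected component of the multigraph $G=\pi\cup\rho$ on vertex set $[n]$. Since $G$ is $2$-regular (each vertex is incident to one edge from $\pi$ and one from $\rho$, counted with multiplicity), its connected components are exactly cycles; a shared edge $\{a,b\}\in\pi\cap\rho$ contributes a degenerate $2$-cycle. Hence the number of components equals $\mu$, the number of cycles counted as in the statement, and
\[
|C_\pi\cap C_\rho| = 2^{\mu}.
\]

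Combining the two displays yields $\langle\pi|\rho\rangle = 2^{\mu-n/2}$, as claimed. There is no real obstacle here; the only point to be a little careful about is the convention that edges shared by $\pi$ and $\rho$ count as length-$2$ cycles in $\pi\cup\rho$, which is forced by the $2$-regularity argument.
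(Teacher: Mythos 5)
Your proposal is correct and follows essentially the same route as the paper: expand both states in the computational basis, reduce the inner product to counting binary labelings of $[n]$ that are constant on every edge of $\pi$ and of $\rho$ (your $C_\pi\cap C_\rho$ is exactly the paper's $\Lambda$), and observe that such labelings are determined by one bit per connected component of the $2$-regular multigraph $\pi\cup\rho$, whose components are precisely the $\mu$ cycles. Your explicit remark that a shared edge counts as a degenerate $2$-cycle is a welcome clarification of a convention the paper leaves implicit.
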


\begin{proof}
Note that the graph $\pi\cup \rho$ (the union of the two perfect matchings $\pi$ and $\rho$ on the same vertex set $[n]$) is a collection of cycles.
We have
\[
\langle \pi |\rho \rangle = \sum_{x\in \Lambda} \frac{1}{2^{n/2}}
\]
where $\Lambda\subseteq \{0,1\}^{n}$ is the set of binary strings, corresponding to the vertex labeling~$\lambda$ of the graph $\pi\cup \rho$
such that for every $\{a_{i},b_{i}\}\in \pi$ we have $\lambda(a_{i})= \lambda(b_{i})$, and
for every $\{a'_{i},b'_i\}\in \rho$ we have $\lambda(a'_{i})= \lambda(b'_{i})$. In other words, the labeling $\lambda$ must be 
constant on each connected component of $\pi\cup \rho$. Therefore,
\[
|\Lambda| \; = \; 2^{\mbox{\# of connected components of $\pi\cup \rho$}} \; = \; 2^{\mbox{\# of cycles in $\pi\cup \rho$}}.
\]
\end{proof}

\begin{lemma}\label{lem:L}
There exist constants $C, C'>0$ such that 
$\dim ( {\cal L}_{C' + \log_{2} n} )\ge {n}^{n/4} \cdot 2^{-Cn}$
\end{lemma}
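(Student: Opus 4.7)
The plan is to lower-bound $\dim(\mathcal{L}_r)$ by the rank of the $N\times N$ Gram matrix $M$ of the vectors $\{|\pi\rangle^{\otimes r}\}$, where $N = (n-1)!!$ is the number of pairings of $[n]$. By Lemma~\ref{lem:in},
\[
M_{\pi,\rho} \;=\; \langle \pi | \rho\rangle^r \;=\; 2^{r(\mu(\pi,\rho) - n/2)},
\]
so $M$ has all diagonal entries equal to $1$ and, being a Gram matrix, is positive semidefinite. I would therefore apply the standard bound $\mathrm{rank}(M) \geq (\mathrm{tr}\,M)^2/\mathrm{tr}(M^2)$. The numerator equals $N^2$ immediately. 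For the denominator, the symmetric group $S_n$ acts transitively on pairings and preserves all inner products, so the row-sum $\sum_\rho M_{\pi,\rho}^2$ is independent of $\pi$, giving
\[
\mathrm{tr}(M^2) \;=\; N\cdot S, \qquad S \;:=\; \sum_{\rho} 2^{2r(\mu(\pi_0,\rho)-n/2)}
\]
for any fixed pairing $\pi_0$. The task reduces to an upper bound on $S$.

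The crux is to evaluate $P_n(\lambda) := \sum_\rho \lambda^{\mu(\pi_0,\rho)}$, since $S = 2^{-rn}\,P_n(2^{2r})$. I would derive a recurrence by conditioning on the cycle of $\pi_0\cup\rho$ that contains a fixed vertex: such a cycle has length $2\ell$, uses $\ell$ $\pi_0$-pairs (including the specified one) chosen in $\binom{n/2-1}{\ell-1}$ ways, and the number of matchings on those $\ell$ pairs that close up to a single $2\ell$-cycle is $(\ell-1)!\,2^{\ell-1}$ (pick a cyclic order on the remaining $\ell-1$ pairs, then orient each pair). This yields
\[
P_n(\lambda) \;=\; \lambda \sum_{\ell=1}^{n/2} \binom{n/2-1}{\ell-1}(\ell-1)!\,2^{\ell-1}\, P_{n-2\ell}(\lambda),
\]
which after normalization by $(n/2)!$ turns into the ODE $Q'(z) = \lambda Q(z)/(1-2z)$ for the EGF $Q(z) = \sum_{\tilde n \geq 0} P_{2\tilde n}(\lambda)\,z^{\tilde n}/\tilde n!$, with solution $Q(z) = (1-2z)^{-\lambda/2}$. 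Extracting coefficients gives
\[
P_n(\lambda) \;=\; \lambda(\lambda+2)(\lambda+4)\cdots(\lambda+n-2).
\]

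Plugging in $\lambda = 2^{2r}$ yields $S = \prod_{j=0}^{n/2-1} \bigl(1 + 2j\cdot 2^{-2r}\bigr)$. Choosing $r \geq \log_2 n + C'$ for a sufficiently large constant $C'$, every factor is at most $1 + O(n^{-1})$, so $S$ is bounded by a constant uniformly in $n$. Therefore $\dim(\mathcal{L}_r) \geq N/S = \Omega\bigl((n-1)!!\bigr)$. By Stirling, $(n-1)!! = \Theta\bigl(\sqrt{n}\,(n/e)^{n/2}\bigr)$, which dwarfs $n^{n/4}\cdot 2^{-Cn}$ for any constant $C$ once $n$ is large enough, and the lemma follows.

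I expect the main technical hurdle to be the closed-form evaluation of $P_n(\lambda)$: both the combinatorial count $(\ell-1)!\,2^{\ell-1}$ of alternating $2\ell$-cycles on $\ell$ fixed $\pi_0$-pairs, and the subsequent EGF/ODE manipulation needed to solve the recurrence. Once this identity is in hand, the remaining steps---the PSD trace bound, the symmetry argument, and the Stirling estimate---are all routine, and the resulting bound in fact has considerable slack compared to the target $n^{n/4}\cdot 2^{-Cn}$.
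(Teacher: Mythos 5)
Your proof is correct, and it takes a genuinely different route from the paper's. The paper also works with the Gram matrix of the vectors $\ket{\pi}^{\otimes r}$, but it first passes to a large subfamily of pairings with pairwise small overlap: it builds an auxiliary graph on the bipartite pairings ($A\to B$) whose edges join pairings $\pi,\rho$ with many common cycles, bounds its degree via Stirling numbers of the first kind, extracts an independent set of size $(n/4-1)!/2^{n/2}\ge n^{n/4}2^{-Cn}$, and then applies Gershgorin's circle theorem to show the Gram matrix restricted to that independent set has full rank. You instead keep \emph{all} $(n-1)!!$ pairings and lower-bound the rank by the stable-rank inequality $\mathrm{rank}(M)\ge (\Tr M)^2/\Tr(M^2)$ for the PSD Gram matrix, which reduces everything to the second moment $S=\sum_\rho\langle\pi_0|\rho\rangle^{2r}$; the exact identity $\sum_\rho\lambda^{\mu(\pi_0,\rho)}=\lambda(\lambda+2)\cdots(\lambda+n-2)$ (which I checked on small cases and whose recurrence/EGF derivation is sound, including the count $(\ell-1)!\,2^{\ell-1}$ of alternating $2\ell$-cycles) then shows $S=O(1)$ once $r\ge\log_2 n+C'$. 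Your route avoids the independent-set construction entirely and yields the stronger bound $\dim(\calL_r)=\Omega((n-1)!!)=n^{n/2}2^{-O(n)}$, versus the paper's $n^{n/4}2^{-O(n)}$; both approaches need $r=\Theta(\log n)$ for the off-diagonal contributions to be controlled, so the improvement does not change the final $m=\Omega(\log\log n)$ conclusion, but it is cleaner and quantitatively sharper. The only points worth making explicit in a write-up are that the inner products are real and nonnegative (so $M$ really is the Gram matrix of the tensor powers and is PSD), and the Cauchy--Schwarz step $(\sum_i\lambda_i)^2\le \mathrm{rank}(M)\sum_i\lambda_i^2$ justifying the trace bound.
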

%\sergey{I believe $2^C$ in the statement of the lemma should be $2^{-Cn}$. }\ronald{indeed}
\begin{proof}
Let $A = [n/2]$, $B = [n/2+1,n]$. Let 
\begin{eqnarray*}
V & = & \left\{\, \pi\mid  \mbox{$\pi$ is an $n$-qubit pairing with $a_{i}\in A$, $b_{i}\in B$ for $1\le i \le n/2$\,}\right\} 
\end{eqnarray*}
We define an undirected graph $G$ on $V$ by
\begin{eqnarray*}
V(G) & = & V \\
E(G) & = & \{ (\pi,\rho) \in V^{2} \mid \, \pi\neq \rho \; {\rm and} \; \pi\cup \rho \; \mbox{consists of at least $n/4$ cycles}\}
\end{eqnarray*}
Let us view each $\pi\in V$ as a 1-1 map from $A$ to $B$.
Then $(\pi,\rho) \in E(G)$ if and only if $\pi\rho^{-1}$ is a permutation on $A$ with at least $n/4$ cycles.
If we fix $\pi$, then as $\rho$ varies, $\pi\rho^{-1}$ runs through all permutations of $A = [n/2]$. Thus, 
every vertex of $G$ has degree $D$, where $D$ is the number of permutations of $[n/2]$ having at least $n/4$ cycles.
Let $c(n,\ell)$ be the unsigned Stirling numbers of the first kind. It is known that $c(n,\ell)$ is exactly the number of 
permutations of $n$ elements with $\ell$ disjoint cycles. Thus
\[
D = \sum_{\ell=0}^{n/4} c(n/2,n/4 + \ell)
\]
It is also known that
\[
c(n,n-\ell) \; =  \; \sum_{0\le i_{1} < i_{2} < \cdots < i_{\ell} < n}  i_{1}i_{2}\cdots i_{\ell}  
\]
The right-hand side above is at most
\[
 \binom{n}{\ell} \cdot (n-\ell)\cdots (n-1) \; \le \; \frac{ 2^n}{(n-\ell-1)!}   (n-1)!                                           %\frac{n!(n-1)!}{k!(n-k)!(n-k-1)!}
\]
Therefore:
\[
D \; = \; \sum_{\ell=0}^{n/4} c(n/2,n/4 + \ell) \; \le \; (n/2)! \; \cdot\;  
\frac{2}{n}\sum_{\ell=0}^{n/4} \,  \frac{2^{n/2}}{(n/4 + \ell-1)!} \; \le \; (n/2)! \frac{2^{n/2}}{(n/4 -1)!} -1.
\]
Then, 
\[
\frac{|V|}{D+1} = \frac{(n/2)!}{D+1}  \ge \frac{(n/4 -1)!}{2^{n/2}}, 
\]
implying the existence of an independent set in $G$ of size at least $(n/4 -1)! / 2^{n/2}$
by a well-known greedy argument: pick a vertex to add to the independent set, remove it and its $\leq D$ neighbors, and continue with the remaining graph. Using Stirling's formula to estimate the factorials, there is a $C$ ($\approx 1 + \frac{1}{4} \log_{2} e$, when $n\rightarrow \infty$)
such that $(n/4 -1)! / 2^{n/2} \ge {n}^{n/4} \cdot 2^{-Cn}$.  Let $I$ be an independent set in $G$ of size ${n}^{n/4} \cdot 2^{-Cn}$ (we ignore rounding to an integer for simplicity). Define the linear space
\begin{eqnarray*}
{\cal L}_{I,r} & = {\rm span}(\; |\pi\rangle^{\otimes r}\mid  \pi\in I \,).
\end{eqnarray*}
From now on we set $r= C' + \log_{2} n$, where $C' = 1 +4C$ ($\approx 5 + \log_{2} e$, when $n\rightarrow \infty$). 
It is enough to show that we have $\dim ( {\cal L}_{I,r} ) = |I|$ ($= {n}^{n/4} \cdot 2^{-C}$), since ${\cal L}_{I,r}\le {\cal L}_{r}$.
Let us define the $|I|\times|I|$ Gram matrix $\mathbb{G}$ of the $\{ |\pi\rangle^{\otimes r}\}_{\pi\in I}$ system:
\[
\mathbb{G}_{\pi,\rho} \; = \; \langle \pi |\rho \rangle^r \;\;\;\;\;\;\;\;\   \pi , \rho \in I
\]
For $\dim ( {\cal L}_{I,r} ) = |I|$ it is sufficient to show that $\mathbb{G}$ has full rank.

 Gershgorin's circle theorem implies, for any (complex) square matrix $A$:  if $R_{i}=\sum_{j:\; j\neq i} |A_{i,j}| < |A_{i,i}|$ for 
all indices $i$, then $A$ has full rank. 

In order to apply this theorem, we compute for the $\pi$-row of our matrix $\mathbb{G}$:
\[
\sum_{\rho\in I:\rho\neq\pi} |\mathbb{G}_{\pi,\rho} | \; \le \; \sum_{\rho\in I:\rho\neq\pi} 2^{-nr/4} \; 
\leq \; 2^{-nr/4} |I| \; = 
\; 2^{-n(\underbrace{\scriptstyle 1 + 4C+ \log_{2} n}_{r})/4} \cdot \underbrace{{n}^{n/4} \cdot \, 2^{-Cn}}_{\rm size \; of \; {\it I}}  < 1
\]
where we used Lemma \ref{lem:in}. Since in addition, 
$\mathbb{G}_{\pi,\pi}=1$ for all $\pi\in I$,  Gershgorin's circle theorem implies that our matrix $\mathbb{G}$ has full rank.
Hence $\dim ( {\cal L}_{I,r} ) = |I| = {n}^{n/4} \cdot 2^{-Cn}$.
\end{proof}

Lemma~\ref{lem:L} says that for any fixed state  $|\psi\rangle$, the possible outputs (over all input pairings $\pi$), 
when taking their $r^{\rm th}$ tensor power with $r = \Theta(1) + \log n$, should span a space of dimension $\geq n^{\Theta(\log n)}$.

This number we have to compare with the dimension of the span of $r^{\rm th}$ tensor powers of possible states that can be produced \emph{by an LOCC protocol} from $\ket{\psi}$. Although LOCC protocols may use unlimited classical communication, they cannot create new entanglement, so all entanglement in their final state
is a \emph{local linear transformation} of the entanglement that already existed in the starting state $\ket{\psi}$.

When each party only possesses $m$ qubits, where $m$ is very small, the variety 
of states that an LOCC protocol can produce from $\ket{\psi}$ is limited in the way we describe below.

To capture this limitation, notice that any LOCC protocol can be described by a completely positive trace-preserving (CPTP) map, with \emph{separable}
Kraus operators. It follows that for any pairing $\pi$ there exists a product Kraus operator
\begin{equation}\label{eq:M1}
K^{\pi} =
K_{1}^{\pi} \otimes K_{2}^{\pi} \otimes \cdots \otimes K_{n}^{\pi}
\end{equation}
such that $K_{i}^{\pi}$
maps $m$ qubits to one qubit for all $1\le i\le n$ and
for all pairings $\pi$:
\begin{equation}\label{eq:M2}
K^{\pi} |\psi\rangle
= c_{\pi}|\pi\rangle 
\;\;\;\;\;
{\rm for}\;{\rm some}\;\;
c_{\pi}\neq 0
\end{equation}
Define:
\[
{\cal M}_{r} = 
{\rm span}(\;
(K|\psi\rangle)^{\otimes r}\;\mid\;
K =
K_{1} \otimes K_{2} \otimes \cdots \otimes K_{n})
\]
where the $K_{i}$ are arbitrary
operators mapping $m$ qubits to one qubit ($K_i$ may depend on $i$), and 
$|\psi\rangle$ is our fixed starting state. From Equations 
(\ref{eq:M2}) and (\ref{eq:M1}) we get the subspace inclusion
\[
{\cal L}_{r} \le {\cal M}_{r}
\]
and hence, using Lemma~\ref{lem:L}, for some $C, C'>0$ we have
$\dim ( {\cal M}_{C' + \log_{2} n} ) \ge {n}^{n/4} \cdot 2^{-C}$. 
However, when 
$m=o(\log\log n)$ this cannot be the case because of the following upper bound:

\begin{lemma}\label{lem:dimMr}
$\displaystyle
\dim \left({\cal M}_r \right)
\le \binom{2^{m+1} + r -1}{2^{m+1} -1}^{n}.
$
\end{lemma}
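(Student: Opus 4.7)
My plan is to view $(K|\psi\rangle)^{\otimes r}$ as the image of the tensor $K_1^{\otimes r}\otimes\cdots\otimes K_n^{\otimes r}$ under a fixed linear map (determined by $|\psi\rangle$), and then bound the dimension of the space in which that tensor lives using the symmetric subspace.

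First I would swap tensor factors so the $r$ copies of each party's single output qubit are grouped together: under this permutation $\tau$ of $nr$ qubits, we have
\[
(K|\psi\rangle)^{\otimes r} \;=\; \tau^{-1}\bigl((K_1^{\otimes r}\otimes K_2^{\otimes r}\otimes\cdots\otimes K_n^{\otimes r})\,\tau(|\psi\rangle^{\otimes r})\bigr).
\]
Since $\tau$ is a fixed unitary, it suffices to bound the dimension of the span, as $K_1,\ldots,K_n$ vary, of the vectors
\[
(K_1^{\otimes r}\otimes\cdots\otimes K_n^{\otimes r})\,|\Psi\rangle,\quad\text{where }|\Psi\rangle := \tau(|\psi\rangle^{\otimes r})
\]
is a fixed vector. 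The map $T\mapsto T|\Psi\rangle$ is linear, so the dimension of the image is at most the dimension of the span of the operators $K_1^{\otimes r}\otimes\cdots\otimes K_n^{\otimes r}$ themselves.

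Next, identify $K_i$ with its matrix in the vector space $V := \mathrm{Hom}(\CC^{2^m},\CC^2)\cong\CC^{2^{m+1}}$; then $K_i^{\otimes r}$ is a pure symmetric tensor in $V^{\otimes r}$ and therefore lies in the symmetric subspace $\mathrm{Sym}^r(V)$. The key classical fact I would invoke is that the span of all rank-one symmetric tensors $\{v^{\otimes r} : v\in V\}$ is all of $\mathrm{Sym}^r(V)$ (via polarization: any symmetric tensor is a linear combination of pure powers), and $\dim\mathrm{Sym}^r(V)=\binom{2^{m+1}+r-1}{r}=\binom{2^{m+1}+r-1}{2^{m+1}-1}$.

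Combining these two observations, as the $K_i$ vary independently, the operator $K_1^{\otimes r}\otimes\cdots\otimes K_n^{\otimes r}$ ranges within $\mathrm{Sym}^r(V)^{\otimes n}$, a space of dimension $\binom{2^{m+1}+r-1}{2^{m+1}-1}^{\!n}$. Since $\mathcal{M}_r$ is the image of this space under the linear map $T\mapsto \tau^{-1}(T|\Psi\rangle)$, we conclude
\[
\dim(\mathcal{M}_r)\;\le\;\binom{2^{m+1}+r-1}{2^{m+1}-1}^{\!n}.
\]
The only step that requires any real care is the reordering in the first paragraph together with checking that the span of the pure tensor-powers $K_i^{\otimes r}$ equals the full symmetric subspace; both are standard, so I expect no genuine obstacle.
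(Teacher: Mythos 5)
Your proof is correct and follows essentially the same route as the paper: vectorize each $K_i$ into $\CC^{2^{m+1}}$, observe that $K_i^{\otimes r}$ lies in the symmetric subspace of dimension $\binom{2^{m+1}+r-1}{2^{m+1}-1}$, take the $n$-fold product, and push through the fixed linear map $T\mapsto T|\psi\rangle^{\otimes r}$. Your extra care with the reordering permutation $\tau$ is a welcome detail the paper glosses over; the polarization fact you cite is not actually needed for the upper bound (containment of $K_i^{\otimes r}$ in the symmetric subspace suffices), but it does no harm.
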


\begin{proof}
Linear operators $K_{i}$ that map 
$m$ qubits to one qubit can be considered as vectors in a complex
space of dimension $D = 2^{m+1}$ (use the vectorized form of operators).
Crucially, the $r$-fold tensor products
$K_{i}^{\otimes r}$ live in the \emph{symmetric subspace} of
$\left(\mathbb{C}^{D}\right)^{\otimes r}$,
which has dimension 
\begin{equation}\label{eq:symm}
{D + r -1 \choose D -1}
\end{equation}
(this is where the big saving occurs: without the information that the vector is in the symmetric subspace, we would have to calculate with $D^r$ instead of the above expression, and would get only a trivial bound).
It follows that operators of the form
$K^{\otimes r} =
K_{1}^{\otimes r} \otimes \cdots \otimes K_{n}^{\otimes r}$
span a linear space of operators
with dimension at most
\begin{equation}\label{eq:symmn}
{D + r -1 \choose D -1}^{n}
\end{equation}
Thus states of the form 
$\left(K|\psi\rangle\right)^{\otimes r}
= K^{\otimes r}|\psi\rangle^{\otimes r}$
with a fixed $|\psi\rangle$  span a linear space with
dimension upper bounded by Equation~(\ref{eq:symmn}).
Substituting $D = 2^{m+1}$, 
one gets the statement of the lemma.
\end{proof}

It is now an easy calculation 
to show that with $r = \Theta(\log n)$, the above lemma together with Lemma~\ref{lem:L} gives
\[
2^m = \Omega\left( \frac{\log n}{\log r}
\right).
\]
This implies $m = \Omega(\log\log n)$ and concludes the proof of Theorem~\ref{thm:obstr1}.

\section{Obstructions for partial pairings}\label{sec:obstructionsgeneralk}

In this section we generalize the 
result of the previous section 
to partial pairings and show:

\newcommand{\kkkk}{k}

\begin{theorem}\label{thm:partial-pairing}
Let $n$ be an integer, $\kkkk\le n/2$,
and $\ket{\psi}$ be a $\kkkk$-pairable
state for $n$ parties where each party has $m$ qubits. Then
\[
\kkkk = O\left( n 2^{m} \frac{\log\log n}{\log n}\right)
\]
\end{theorem}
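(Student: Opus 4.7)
Proof Proposal.

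The plan is to adapt the dimension-counting argument behind Theorem~\ref{thm:obstr1} to the setting of partial pairings by restricting attention to \emph{bipartite} pairings on a fixed $2k$-subset of parties. Fix disjoint $A, B \subseteq [n]$ with $|A| = |B| = k$, let $V$ denote the set of bijections $\pi: A \to B$ (so $|V| = k!$), and for each $\pi \in V$ let $\ket{\pi}$ be the tensor product of the $k$ EPR-pairs on $A \cup B$ specified by $\pi$, padded with $\ket{0}^{\otimes(n-2k)}$ on the remaining parties. Exactly as in Theorem~\ref{thm:obstr1}, absorbing the LOCC protocol's junk outputs into local standard-basis measurements and Pauli corrections shows that there exists a separable Kraus operator $K^\pi = K_1^\pi \otimes \cdots \otimes K_n^\pi$ with each $K_i^\pi \in \mathrm{Hom}((\mathbb{C}^2)^{\otimes m}, \mathbb{C}^2)$ satisfying $K^\pi\ket{\psi} = c_\pi \ket{\pi}$ for some $c_\pi \neq 0$. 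Define $\mathcal{L}_r = \mathrm{span}(\ket{\pi}^{\otimes r}: \pi \in V)$ and $\mathcal{M}_r$ as in Theorem~\ref{thm:obstr1}; the Kraus relation gives $\mathcal{L}_r \subseteq \mathcal{M}_r$.

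For the lower bound on $\dim \mathcal{L}_r$, I would mirror Lemma~\ref{lem:L}. The analogue of Lemma~\ref{lem:in} is verbatim: $\langle \pi | \rho\rangle = 2^{\mu - k}$, where $\mu$ is the number of cycles in $\pi \cup \rho$ viewed now as a graph on vertex set $A \cup B$. Define the graph on $V$ whose edges connect distinct $\pi, \rho$ with $\mu(\pi \cup \rho) > k/2$; the same Stirling-number computation used in Lemma~\ref{lem:L} bounds the maximum degree by $D \leq k! \cdot 2^{k/2}/(k/2-1)!$, and hence produces by the standard greedy argument an independent set $I$ with $|I| \geq (k/2-1)!/2^{k/2} \geq k^{k/2} 2^{-Ck}$ for some absolute constant $C$. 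Applying Gershgorin's circle theorem to the Gram matrix $\mathbb{G}_{\pi,\rho} = \langle \pi | \rho\rangle^r$ indexed by $I$, whose off-diagonal entries are bounded by $2^{-kr/2}$ in magnitude, shows $\mathbb{G}$ has full rank whenever $|I| \cdot 2^{-kr/2} < 1$, i.e., whenever $r \geq \log_2 k + O(1)$. Setting $r = C' + \log_2 n$ for a suitable constant $C'$ then gives $\dim \mathcal{L}_r \geq k^{k/2} 2^{-O(k)}$.

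The upper bound on $\dim \mathcal{M}_r$ is Lemma~\ref{lem:dimMr} unchanged:
\[
\dim \mathcal{M}_r \leq \binom{2^{m+1} + r - 1}{2^{m+1} - 1}^{n}.
\]
The interesting regime is $2^m \leq \log n$ (when $2^m > \log n$, the claimed bound is vacuous since $k \leq n/2$). In that regime the elementary estimate $\binom{D+r-1}{D-1} \leq (2er/D)^D$ with $r = \Theta(\log n)$ gives $\log \binom{2^{m+1}+r-1}{2^{m+1}-1} = O(2^m \log\log n)$, so $\log \dim \mathcal{M}_r = O(n \cdot 2^m \log\log n)$. Chaining the two dimension bounds yields $\tfrac{k}{2} \log_2 k - O(k) \leq O(n \cdot 2^m \log\log n)$, i.e.,
\[
k \log k \;=\; O(n \cdot 2^m \log\log n).
\]
A short case split then concludes: if $k \geq \sqrt n$ then $\log k \geq (\log n)/2$ and one divides through to obtain $k = O(n \cdot 2^m \log\log n / \log n)$, while if $k < \sqrt n$ the stated bound holds trivially for large $n$. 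The main technical step to verify is the Stirling-type degree estimate underlying the independent-set construction, but this is a direct transcription of the computation already carried out for Lemma~\ref{lem:L}; the rest is straightforward bookkeeping.
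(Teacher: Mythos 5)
Your proposal is correct, and it replaces the hardest part of the paper's argument with a substantially simpler one. The paper keeps the full set of $k$-partial pairings with endpoints ranging over all of $[n]$, which forces it to prove a new degree bound for the "many common cycles" graph via an encoding argument (the $(P,K,R)$ triplet plus the Alice--Bob description of the cycle structure, which is the bulk of its Section~\ref{sec:obstructionsgeneralk}); this yields an independent set of size $n^{\Theta(k)}$ and hence directly the inequality $k\log n = O(n\,2^m\log\log n)$. You instead freeze the support to a single $2k$-element bipartition and vary only the matching between $A$ and $B$, so the degree bound is a verbatim transcription of the Stirling-number computation already done in Lemma~\ref{lem:L} with $n/2$ replaced by $k$; this only gives an independent set of size $k^{\Theta(k)}$ and the weaker inequality $k\log k = O(n\,2^m\log\log n)$, but your case split ($k\ge\sqrt n$ versus $k<\sqrt n$) recovers the stated bound because $\log k=\Theta(\log n)$ in the only regime where there is anything to prove. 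The trade-off is that the paper's encoding lemma is a sharper combinatorial statement about general partial pairings (potentially reusable elsewhere), whereas your route gets the same asymptotic conclusion with essentially no new combinatorics. Two small points to tidy up: the degree bound inherited from Lemma~\ref{lem:L} should read $D \le k!\,2^{k}/(k/2-1)!$ rather than $k!\,2^{k/2}/(k/2-1)!$ (this changes only the constant $C$ in $2^{-Ck}$ and nothing downstream), and for Gershgorin you should truncate $I$ to size exactly $\lceil k^{k/2}2^{-Ck}\rceil$ so that the row-sum bound uses an upper bound on $|I|$, exactly as is implicitly done in Lemma~\ref{lem:L}.
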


\begin{proof}
For technical reasons we assume that $n$ is divisible by 4. 
In the proof we also assume that $k\ge n/\log n$, since otherwise there is nothing to prove: the expression in parentheses on the right-hand side is always larger than $n/\log n$.
%\sergey{I think we meant to say that $k\ge cn/\log n$ with a sufficiently large constant $c$.}
A $\kkkk$-pairing of $[n]$ is 
\[
\pi = \{\{a_{1},b_{1}\},\ldots,\{a_{\kkkk},b_{\kkkk}\}\} \;\;\; {\rm with} \;\;\; \cup \pi\subseteq [n], \;\;\;\;  |\cup \pi| =2\kkkk
\]
We denote the set of $\kkkk$-partial pairings 
on $[n]$ with $\Pi_{n,\kkkk}$. 
As in the previous section, we assume that
each party has $m$ qubits, and one of these~$m$ is designated 
as the output qubit, which will hold a qubit of an EPR-pair at the end of the protocol whenever $\pi$ involves the party in question. The goal is to be able to produce 
\[
|\pi\rangle = \ket{0^{n-2\kkkk}} \otimes \bigotimes_{i=1}^{\kkkk}\,  |\Phi^+\rangle_{a_{i},b_{i}} 
\]
from some fixed initial $nm$-qubit resource state $\ket{\psi}$, for all $\pi\in \Pi_{n,\kkkk}$.
We note that in the 
above tensor product the listing order of the qubits
depends on $\pi$,
and we list only the $n$ qubits designated to be output bits. (We list even those designated output qubits of parties that are not covered by the current partial matching $\pi$, since they will participate in the output for other $\pi$s.)
For the remaining $n(m-1)$ qubits, we assume w.l.o.g.\ that they 
end up in the $\ket{0}$-state, and hence are not entangled with the rest. To achieve this the parties can 
set these qubits to $\ket{0}$ by a local operation.

The proof is a slight variation of our proof for the complete-pairing case.
There $\dim \left({\cal M}_r \right)$ was calculated, and similarly to the previous section this dimension upper bounds the dimension of 
\[
{\cal L}_{\kkkk,r} =
{\rm span}\left(\; \ket{\pi}^{\otimes r} 
\mid\; 
\pi \in  \Pi_{n,\kkkk}
\right).
\]
The calculation is very similar to the case of complete pairings:
\begin{enumerate}
\item We will find $n^{\Theta(\kkkk)}$ different $\pi$s such that their
$r^{\rm th}$ tensor powers,
where $r=\Theta(\log n)$, are linearly independent.
(In the complete pairing case 
it was $n^{\Theta(n)}$ 
different $\pi$s.)
\item Setting $r = \Theta(\log n)$ is still
the only reasonable choice. Further, the approach breaks down 
at $m > \log \log n$, so 
we will be satisfied with 
investigating $m \le \log \log n$. With the above
parameters for $r$ and $m$ we have
$2^{m+1} + r -1 = \Theta(r)$, hence via Lemma~\ref{lem:dimMr} we have:
\[
\dim \left({\cal M}_r \right)\le{\Theta(r) \choose 2^{m} }^{n}\le 
r^{\Theta(2^{m}n)}.
\]
\item Similarly to our argument in the previous section, the 
dimension of ${\cal L}_{\kkkk,r}$
must lower bound the dimension of ${\cal M}_{C\log n}$, which is 
$r^{\Theta(2^{m}n)} = 2^{\Theta(n 2^{m}\log\log n) }$.
\item Combining 1 and 3, we get $\Theta(\kkkk \log n) \le \Theta(n\cdot 2^{m}\log\log n$), which implies Theorem~\ref{thm:partial-pairing}.
\end{enumerate}
Points 2-4 require no explanation as 
they just reiterate ideas of the 
previous section. However, we need to prove Point~1.

\medskip

First we prove the analogue of Lemma \ref{lem:in} for partial pairings.

\begin{lemma}\label{lem:in2}
Let $\pi$ and $\rho$ be two partial pairings with $\kkkk$ pairs. Then 
\[
\langle \pi |\rho\rangle = 2^{\mu-\kkkk}
\]
where $\mu$ is the number of cycles in $\pi\cup \rho$, as a graph on 
vertex set $[n]$.
\end{lemma}
\begin{proof}
We have:
\[
\langle \pi |\rho \rangle = \sum_{x\in \Lambda} \frac{1}{2^{\kkkk}}
\]
where $\Lambda\subseteq \{0,1\}^{n}$ is the set of binary strings, corresponding to the vertex labeling, $\lambda$, of the graph $\pi\cup \rho$
such that for every $\{a_{i},b_{i}\}\in \pi$ we have $\lambda(a_{i})= \lambda(b_{i})$, and
for every $\{a'_{i},b'_i\}\in \rho$ we have $\lambda(a'_{i})= \lambda(b'_{i})$ and {\it furthermore
every element of vertex set $[n]$ that is not covered by both a $\pi$-edge and a $\rho$-edge (that is, elements 
not in $(\cup \pi) \cap 
(\cup \rho)$) must get label 0.}\footnote{The formula for $\langle \pi |\rho \rangle$ comes from computing the inner product in the most straightforward way: we notice that both $\ket{\pi}$ and $\ket{\rho}$ have only two kinds of entries: 0 and
$1/\sqrt{2^{\kkkk}}$. Then  we just identify those entries where both $\ket{\pi}$ and 
$\ket{\rho}$ are non-zero and compute the number of such entries.}
Thus, only the cycles in $\pi\cup\rho$ can be labeled two ways, and no more than two ways, since the  edges of
$\pi$ and $\rho$ force the condition that all labels over the cycle must be either 0 or 1. (Paths cannot be labeled two ways as the label at their endpoint is fixed to 0.)
This calculation of $|\Lambda|$ gives the formula.
\end{proof}

\noindent We let $A = [n/2]$, $B = [n/2+1,n]$ and define
\[
 V \; = \; \left\{\, \pi \in  \Pi_{n,\kkkk} \mid  \mbox{$\pi$ is an $n$-qubit partial pairing with $a_{i}\in A$, $b_{i}\in B$ for $1\le i\le \kkkk$\,}\right\}
\]
Then $|V| = {n/2\choose \kkkk} (n/2)! / (n/2-\kkkk)!$. 
%It is easy to see that in the range $\kkkk \ge n/\log n$ (which is the range we restrict ourselves to) there are constants $c_{0},c_{1}>0$ such that $c_{0}^{\kkkk} n^{\kkkk} \le |V|  \le (c_{1}\log n)^{\kkkk} n^{\kkkk}$,
%or in short, $|V| = n^{k(1+o(1))}$.
We need a lower bound on $|V|$. Using the  well-known bounds
\[
{n/2 \choose k} \ge \left(\frac{n}{2k}\right)^k \quad \mbox{and} \quad
k! \ge e^{-k} k^k
\]
one gets
\[
|V| = {n/2 \choose k}^2 k! \ge  \left(\frac{n}{2k}\right)^{2k} e^{-k} k^k
= \frac{n^{2k}}{k^k (4e)^k}
\ge  \frac{n^{2k}}{(n/2)^k (4e)^k}
= n^k(2e)^{-k},
\]
where the second inequality 
follows from $k\le n/2$. 
%\sergey{I added a derivation to avoid confusion.}
%\ronald{Sorry, my mistake. I see it now, it is indeed correct if $k\geq n/\log n$ (it's false for $k\in[1,n^{1-\epsilon}]$, which confused me).}\mario{Ronald, I simply do not think what you write is correct. Notice that we assume $k\ge n/\log n$ (Ronald: "This bound seems wrong, shouldn't that be more like $n^{2k}$? This would actually help us for the lower bound on $|V|/(D+1)$. It may also simplify the counting argument since $c<1$ is no longer crucial."}
We again create a graph with:
\begin{eqnarray*}
V(G) & = & V \\
E(G) & = & \{ (\pi,\rho) \in V^{2} \mid \, \pi\neq \rho \; {\rm and} \; \pi\cup \rho \; \mbox{has at least $\kkkk/2$ cycles}\}
\end{eqnarray*}
Note that $G$ is again regular as in the previous section, since it is vertex-symmetric.
Like before, we want to lower bound $|V|/(D+1)$ where $D$ is the degree of a $\pi\in V$, which is 
then a lower bound on the size of a maximal independent set in~$G$. 
In fact, for an arbitrary fixed $\pi\in V$ we have:
\[
D + 1 = |\{\rho \in V \mid \pi \cup \rho \; \mbox{has at least $\kkkk/2$ cycles}\}|
\]
To upper bound the size of~$D$, w.l.o.g.\ let $\pi = \{\{1,1+n/2\},\ldots, \{\kkkk,\kkkk + n/2\}\}$: we match the first~$k$ vertices in the first half with the first~$k$ vertices in the second half. We can upper bound the number of neighbors of $\pi$
by enumerating them, each possibly multiple times. To define a somewhat elaborate enumeration, first notice that the nodes of 
any cycle in $\pi \cup \rho$ must be fully contained in the vertex set $\cup \pi = [1,\kkkk]\cup[1+n/2,\kkkk+n/2]$.
Assume $\rho$ is a neighbor of $\pi$ such that $\pi \cup \rho$ has $\mu$ cycles. Pick an (arbitrary) point in 
every cycle, such that the selected points belong to $A$. Let these points be $1\le p_{1} < \cdots < p_{\mu}\le \kkkk$,
and let
\[
P = \{p_{i}\}_{1\le i \le \mu}
\]
Let $2L_{i}$ be the length of the cycle that goes through $p_{i}$ (all cycles have even length, because 
the edges alternate between $\pi$ and $\rho$), and denote:
\begin{eqnarray*}
K_{\nu} & = & \sum_{i=1}^{\nu} L_{i}  \;\;\;\;\;  1\le \nu\le \mu \\
K & = & \{ K_{\nu} \}_{1\le \nu \le \mu}
\end{eqnarray*}
Since $K_{\mu}\le \kkkk$, we have $K\subseteq [\kkkk]$.
Finally, let us define
\[
R = \{a \in A\mid \; \mbox{$a$ is an $A$-endpoint of some edge in $\rho$}\}
\]
The triplet $(P,K,R)$ with one piece of additional information, which will be defined next, will determine~$\rho$.
The number of $(P,K,R)$ triplets is $2^{O(n)}$, since all three of 
$P,K,R$ can be given as subsets of sets of size at most $n$
(e.g., $K$ is a subset of $[n]$ due to $K_{\mu} \le n$).

Let us now understand the magnitude of the additional information\footnote{We could have chosen information-theoretic terminology
for our explanation, where we relate the $|V|/(D+1)$ ratio
to the mutual information between $\rho$ and $\pi$.
We have opted for an equivalent counting explanation,
but the reader has to bear in mind, that our underlying intuition is that every 
cycle that $\rho$ and $\pi$ jointly create, increases their mutual information by $\Omega(\log n)$ bits (essentially, the edge of $\rho$ that ``closes 
the cycle'' is cheap to communicate, given $\pi$).
The ratio $|V|/(D+1)$ is simply 
the exponential of this mutual information.} that together with $(P,K,R)$  
determines $\rho$. It will turn out that this information is
$c k \log n$ bits,
where very crucially, $c$ is less than~1 (in fact, $c$ will be essentially $1/2$).
This implies an upper bound on $D+1$, which in turn implies the following lower bound on the size of the largest independent set in $G$:
\begin{equation}\label{eq:lowerboundVD}
\frac{|V|}{D+1} \; \geq \; %2^{(1+o(1)) k \log n} 
\underbrace{n^\kkkk(2e)^{-\kkkk}}_{\rm lower\; bound\; on\; |V|}
\; \cdot \underbrace{2^{-O(n)}}_{\rm \, due\; to \; {\it(P,K,R)}} \cdot \;
\underbrace{2^{-c\kkkk \log n}}_{\rm due\; to \;  additional \; information}
= n^{(1-c)\kkkk}(2e)^{-\kkkk}2^{-O(n)}.
\end{equation}

\medskip

\noindent\emph{Additional data that with $(P,K,R)$ uniquely determines $\rho$, given that $\rho$ is $\pi$'s neighbor in $G$:}

\medskip

We shall define an order $e_{1},\ldots,e_{\kkkk}$ of edges  of $\rho$. We will denote the $B$-endpoint of $e_{i}$ by~$q_{i}$. 

\medskip

Before telling the order, note that
the cycles in $\pi \cup \rho$ already have a
natural order, modulo the $(P,K,R)$
information, 
namely the $i^{\rm th}$ cycle is the 
cycle that contains $p_{i}$.
To ``extend'' this order to the edges we introduce:

\begin{enumerate}
\item Among all edges of $\rho$ those edges come earlier that are
edges of some $\pi\cup \rho$ cycle.
\item If two edges both participate in a $\pi\cup \rho$ cycle, but these two 
cycles are different, then the edge comes first
that belongs to an earlier cycle.
\end{enumerate}

We need to tell how to order edges within the same cycle. Also, we need to tell how to order edges that do not belong to any cycle.

\medskip

\noindent\emph{Ordering of edges of $\rho$ that belong to a given cycle.}\\ 
If the cycle is the $i^{\rm th}$
cycle, we simply walk through the  
the cycle and order the $\rho$-edges as we encounter them. 
The walk-through starts from $p_{i}$ with a $\rho$-edge (which determines that in which orientation 
we follow the cycle).
For instance, consider the first cycle.
For this example notice that if $e_{\ell}$ is a cycle-edge 
in $\pi \cup \rho$, then $\{q_{\ell}-n/2, q_{\ell}\}\in \pi$ belongs to the same cycle as $e_{\ell}$.
We get:

\begin{center}
\begin{tabular}{cl}
$e_{1}$ & is the edge of $\rho$ with $A$-endpoint $p_{1}$ \\
$e_{2}$ & is the edge of $\rho$ with $A$-endpoint $q_{1}-n/2$ \\
$e_{3}$ & is the edge of $\rho$ with $A$-endpoint $q_{2}-n/2$ \\
 & $\cdots$ \\
$e_{L_{1}}$ & is the edge of $\rho$ with $A$-endpoint $q_{L_{1}-1}-n/2$ \\
\end{tabular}
\end{center}

\medskip

\noindent\emph{Ordering of edges of $\rho$ that do not belong to any cycle:}

\medskip

These edges are simply ordered by the numerical value of their $A$-endpoints:
edges with a smaller $A$-endpoint come earlier.

\medskip

Let us now assume that Alice has to specify $\rho$ to Bob.
First Alice gives Bob the $(P,K,R)$ triplet.
Then she starts to tell Bob $q_{1}, q_{2},\ldots$. (Recall, $q_{i}$ is the $B$-endpoint of $e_{i}$.) Two remarkable observations lead to our conclusions.

\begin{enumerate}
    \item Even though Alice only tells the $B$-endpoints to Bob, Bob will (recursively) figure out the $A$-endpoints as well. This is in fact trivial. For instance, the $A$-endpoint of $e_{1}$ is $p_1$, which is known to Bob, since $(P,K,R)$ is given to him, etc.
    When the last cycle is exhausted, Bob
    knows this (this is after $K_{\mu}$ edges had been encountered---the total number of cycle-edges; $K_{\mu}$ in turn is given as the last element of the $K$-sequence), and from then on Bob relies on $R$
    to get the $A$-endpoints.
    \item When Alice arrives at an edge that closes a $\pi\cup\rho$ cycle,
    she does not need to send the $B$-endpoint of this edge! It is simply
    $p_i + n/2$, if the cycle was the $i^{\rm th}$ cycle. In other words, it is the other endpoint of the edge of $\pi$
    incident to $p_i$.
    Therefore, we just assume that Alice skips telling the $B$-endpoint of the last edge of every cycle.
    But how does Bob know that he has arrived at the last edge of the 
    current cycle? He knows this, because the cycle 
    lengths are encoded in~$K$: the length of the $i^{\rm th}$ 
    cycle is $K_{i}-K_{i-1}$ if $i\ge 2$ and $K_{1}$ for $i=1$.
\end{enumerate}

In summary, the information
Alice gives to Bob besides $(P,K,R)$ to identify~$\rho$,
is the $q_{1},q_{2},\ldots$ sequence, but crucially, completely leaving out from this sequence the $B$-endpoints of all the cycle-closing edges. We have $\mu$ cycle-closing edges.
Describing any $q_{\ell}$
takes $\log n$ bits, since $q_{\ell}\in [n]$.
Therefore:

\medskip

\noindent
\emph{The number of bits Alice needs to send Bob to fully describe a $\rho$ that creates $\mu$ cycles with $\pi$ is:}
\[
O(n) + (k-\mu)\log n
\]

\medskip

In conclusion, the number of different $\rho$s 
that form $\mu$ cycles with $\pi$ can be upper bounded by
$2^{O(n)}n^{\kkkk-\mu}$. To upper bound~$D$, recall that $\mu$ is allowed to vary from $\kkkk/2$ to~$\kkkk$, so
$D+1$ is upper bounded by $(\frac{\kkkk}{2} +1)2^ {O(n)}n^{\kkkk/2} + 1$. Thus, looking back at Eq.~(\ref{eq:lowerboundVD}) and using that $k\leq n$, there exists an independent set of size $n^{\kkkk/2}2^{-O(n)}$
in $G$.
Consequently, we can find a set of that many partial permutations such that
if $\pi\neq\rho$
belongs to this set,
then the inner product
of 
$\ket{\pi}^{\otimes r}$ and $\ket{\rho}^{\otimes r}$ is at most 
$2^{-\kkkk r/2}$. 
Setting $r$ to be 
$2\log n$ (generously, in fact: we care only about the $k\ge n\frac{\log \log n}{\log n}$ case, hence
the $2^{O(n)}$ factor becomes $n^{o(k)}$)
and applying Gershgorin's circle theorem in the same fashion
as in the proof of Lemma~\ref{lem:L} in the previous section, we 
prove Point~1 and conclude the proof of Theorem~\ref{thm:partial-pairing}.
\end{proof}

Theorem~\ref{thm:partial-pairing} has the following consequence for the case where each party is restricted to a constant number of qubits:

\begin{corollary}
Let $\ket{\psi}$ be a $\kkkk$-pairable
state for $n$ parties where each party has $m=O(1)$ qubits. Then
$\displaystyle\kkkk = O\left( n \frac{\log\log n}{\log n}\right)$.
\end{corollary}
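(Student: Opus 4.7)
The plan is to derive this corollary as an essentially immediate specialization of Theorem~\ref{thm:partial-pairing}. Since the theorem already gives the tradeoff
\[
\kkkk = O\!\left(n\, 2^m\, \frac{\log\log n}{\log n}\right)
\]
for every $m$ (subject to the mild technical assumptions inside the proof of the theorem), the task reduces to observing that when $m = O(1)$ the factor $2^m$ is itself $O(1)$ and can be absorbed into the hidden constant. No additional combinatorial or dimension-counting work is needed at this stage.

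More concretely, I would write: let $\ket{\psi}$ be any $\kkkk$-pairable resource state in which each of the $n$ parties holds at most $m$ qubits with $m \le c$ for some fixed constant $c$. Applying Theorem~\ref{thm:partial-pairing} directly yields
\[
\kkkk \;=\; O\!\left(n \cdot 2^{c}\cdot \frac{\log\log n}{\log n}\right) \;=\; O\!\left(n\, \frac{\log\log n}{\log n}\right),
\]
where the last equality uses that $2^{c}$ is an absolute constant. Note that the assumption $k \ge n/\log n$ made at the beginning of the proof of Theorem~\ref{thm:partial-pairing} is harmless here, since if $k < n/\log n$ then the claimed bound holds trivially (indeed $n/\log n = o(n \log\log n / \log n)$).

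The only potential subtlety worth mentioning is ensuring that the constants hidden inside the $O(\cdot)$ in Theorem~\ref{thm:partial-pairing} are genuinely uniform in $m$, which they are, as the dependence on $m$ is stated explicitly through the $2^m$ factor and the rest of the argument (Lemma~\ref{lem:dimMr} combined with Lemma~\ref{lem:in2} and the independent-set construction in the graph $G$) is independent of $m$. So there is no main obstacle: the corollary is a one-line consequence of the theorem, presented chiefly to highlight the contrast with our construction of $k$-pairable states achieving $k = n/\polylog(n)$ with only $10$ qubits per party.
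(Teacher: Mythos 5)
Your proposal is correct and matches the paper's treatment: the corollary is stated as an immediate specialization of Theorem~\ref{thm:partial-pairing}, obtained by absorbing the constant factor $2^m=2^{O(1)}$ into the big-$O$, and the paper provides no further proof. Your extra remark that the case $k<n/\log n$ is vacuous is a sensible (and accurate) clarification, consistent with the assumption already made inside the theorem's proof.
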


As mentioned in the introduction, up to the power of the polylog this matches our expander-based construction of $k$-pairable states where $m=10$ and $k\geq n/\polylog(n)$ (Corollary~\ref{cor:mis10}).

\section{Conclusion and future work}

In this paper we initiated the study of $n$-party resource states from which LOCC protocols can create EPR-pairs between any $k$ disjoint pairs of parties. 
These EPR-pairs then enable quantum communication over a classical channel via teleportation.
Our focus was on the tradeoff between the number $k$ of to-be-created EPR-pairs (which we want to be large) and the number~$m$ of qubits per party (which we want to be small).

This work leaves open several questions for future work:
\begin{itemize}
\item Our constructions of $k$-pairable states may be far from optimal in various respects, and it would be interesting to improve them. 
For example, we already mentioned the follow-up papers~\cite{CMP:smallkpairable,CCMPST:vertexminor} which for the case $m=1$ reduced the exponential number of qubits $n(k)=2^{\Omega(k)}$ of our Reed-Muller-based construction of $k$-pairable states to a polynomial dependence on~$k$.
The case $m=2$ remains largely open for now, since two qubits per party is not enough to realize entanglement-swapping protocols on expander graphs, see Section~\ref{sec:constrmultiple}. 
All our constructions are based on stabilizer-type resource states.
Can one improve the tradeoff between $k$ and $m$ using more general resource states? Can one express the pairability parameter $k$ in terms of some previously studied entanglement measures?

\item Regarding lower bounds (obstructions), we showed in Section~\ref{sec:obstructionskisn/2} that a resource state for complete pairings ($n=k/2$) requires $m=\Omega(\log\log n)$ qubits per party. Can we improve this lower bound to $m=\Omega(\log n)$ qubits, matching the upper bound we obtained from expander graphs at the end of Section~\ref{sec:constrmultiple}?
Our lower bounds are actually for a stronger model, applying to LOCC protocols that produce the desired state with positive probability; there may be better upper bounds in this setting, and/or stronger lower bounds for LOCC protocols that are required to succeed with probability~1.
\item How well do our resource states behave under noise? Contreras-Tejada, Palazuelos, and de Vicente~\cite{CPV:noisynetworks} already proved some negative results here for the type of constructions we gave in Section~\ref{sec:constrmultiple} (with EPR-pairs on the edges of an $n$-vertex graph), showing that genuine multipartite entanglement only survives constant amounts of noise per edge if the graph has a lot of connectivity. 
\item We can ask a very similar \emph{classical} question, where the classical analogue of an EPR-pair is a uniform bit shared between two parties and unknown to all others. 
Such shared secret bits can then be used for secure communication over public classical channels (via the one-time pad), similarly to how shared EPR-pairs can be used for secure quantum communication over public classical channels (via teleportation).
We believe our techniques can be modified to obtain non-trivial results about the question: what classically correlated $n$-party resource states are necessary and sufficient for LOCC protocols (with public classical communication) to generate such secret shared bits between any $k$ disjoint pairs of parties? 
One difference is that the straightforward classical analogue of the GHZ-state (a uniformly random bit known to all $n$ parties) is not 1-pairable in this classical sense.
%    \item 
\end{itemize}

\paragraph*{Acknowledgements.}
We thank Carlos Palazuelos for a pointer to~\cite{CPV:noisynetworks}, Jorge Miguel-Ramiro for a pointer to~\cite{MPD:networks}, and Mehdi Mhalla for pointers to~\cite{CMP:smallkpairable,CCMPST:vertexminor}.

\bibliographystyle{plainnat}
\bibliography{ref}

\appendix

\section{Constructing an $f$ that satisfies condition CSS1, when all $a_{i}$, $b_{i}$
are independent}
\label{app:example}

In this appendix we give a proof for a simpler but instructive special case of how we satisfy CSS1 in Section~\ref{subs:k-pairable}.
The additional assumption of our special case is to suppose that $a_1,b_1,\ldots,a_k,b_k\in \FF_2^m$ are linearly independent.

Let $\calL$ be the linear subspace of $\FF_2^m$ spanned by $c_1,\ldots,c_k$
and $a_1,b_1,\ldots,a_k,b_k$ (so in particular, $m=3k$). Let $x_{1},\ldots,x_{m}$ be (mod~2) variables.
Then the above $3k$ vectors define $3k$ linear functions
over $\calL$:
every vector $x\in \calL$ can be uniquely written as
\[
x=\sum_{i=1}^k \alpha_i a_i + \beta_i b_i + \gamma_i c_i
\]
for some binary coefficients $\alpha_i,\beta_i,\gamma_i$ that are 
functions of $x = (x_{1},\ldots,x_{m})$.
As such, all these functions are linear, since 
when writing down $x+x'$ as above, we add the corresponding coefficients.
In the sequel we shall create higher-degree polynomials 
over $x_{1},\ldots,x_{m}$
from these linear functions (e.g., $\alpha_1(x)\beta_1(x)$ is 
a quadratic function).

Let $S\equiv \calS_1\cup \calS_2 \cup \ldots \cup \calS_k$.
Note that $S\subseteq \calL$.
Condition CSS1 (with $i=1$) asks for a degree-$(k-1)$ polynomial $f$
such that 
\be
\label{app_CSS1}
f(a_1)=f(b_1)=1 \quad \mbox{and} \quad 
f(x)=0 \quad \mbox{for all $x\in S\setminus \{a_1,b_1\}$}.
\ee
Let us show that Eq.~(\ref{app_CSS1}) is satisfied if we choose $f$ as
\be
\label{app_eq1}
f(x)=(\alpha_1(x)+\beta_1(x))g(x) {\pmod 2},
\ee
where 
\be
\label{app_g(x)}
g(x) = \sum_{M\subsetneq [k]} \; \prod_{j\in M} \gamma_j(x) {\pmod 2}.
\ee
The polynomial $f$ defined in Eq.~(\ref{app_eq1}) has degree $k$, because $\alpha_1(x)+\beta_1(x)$ has degree~1 and $g(x)$ has degree~$k-1$. However, we will see that the restriction of $f$ onto $S$ coincides with a degree-$(k-1)$ polynomial.

First, we show that
\[
g(x)=\left\{ \ba{rl}
1 &\mbox{if }  
\gamma_1(x)=\gamma_2(x)=\cdots=\gamma_k(x),\\
%\gamma_1(x)+\cdots+\gamma_k(x)\in \{0,k\},\\
0 & \mbox{otherwise}\\
\ea
\right.
\]
Indeed, extending the sum over $M$ in Eq.~(\ref{app_g(x)}) to all subsets
$M\subseteq [k]$ would give a function $\prod_{j=1}^k(1+\gamma_j(x))$
which is zero mod~2 unless $\gamma_j(x)=0$ for all $j$. 
The missing monomial $\prod_{j=1}^k \gamma_j(x)$ associated with the 
subset $M=[k]$
is zero unless $\gamma_j(x)=1$ for all $j$.
 
By definition of $\calS_j$,
any vector $x\in \calS_j$ can be written as a sum
of an odd number of vectors
from the set $\{a_j,b_j,c_1,\ldots,c_k\}\setminus \{c_j\}$.
In particular, $\gamma_j(x)=0$ for any $x\in \calS_j$.
Thus the restriction of $g(x)$ onto $S$
is zero unless $\gamma_j(x)=0$ for all $j$.
In the latter case one has $x=a_j$ or $x=b_j$ for some~$j\in[k]$.
If $j=1$ then $f(x)=1$ since $\alpha_1(x)+\beta_1(x)=1$.
If $j\ge 2$ then $f(x)=0$ since $\alpha_1(x)=\beta_1(x)=0$.
This proves Eq.~(\ref{app_CSS1}).

Next we claim that degree-$k$ monomials in $f(x)$  can be replaced by monomials
of degree at most $k-1$ 
without changing the restriction of $f$ onto $S$. Indeed, 
the sum of all degree-$k$ monomials in $f(x)$ can be written as 
\be
\label{f'(x)}
f'(x) = (\alpha_1(x)+\beta_1(x))\sum_{i=1}^k\;  \prod_{j \in [k]\setminus \{i\}} \gamma_j(x) {\pmod 2}.
\ee
Supose $x\in S$. We claim that $f'(x)=1$ iff  $x=a_1+c_2+\cdots+c_k$ or $x=b_1+c_2+\cdots+c_k$.
Indeed, if $x\in \calS_j$ for $j\ge 2$ then $f'(x)=0$ since $\alpha_1(x)=\beta_1(x)=0$.
Suppose $x\in \calS_1$. Then $\gamma_1(x)=0$ and thus 
$f'(x)= (\alpha_1(x)+\beta_1(x))\gamma_2(x)\cdots \gamma_k(x){\pmod 2}$.
By definition of $\calS_1$, one can write 
$x$ as a sum of an odd number of vectors
from the set $\{a_1,b_1,c_2,\ldots,c_k\}$. Hence $f'(x)=1$ iff
$x=a_1+c_2+\cdots+c_k$ or $x=b_1+c_2+\cdots+c_k$.

If $k$ is even then none of the vectors $a_1+c_2+\cdots+c_k$
and $b_1+c_2+\cdots+c_k$ belongs to $\calS_1$. Thus $f'(x)=0$ for all $x\in S$.
If $k$ is odd, the same arguments as above show that 
\be
f'(x)=\prod_{j=2}^k \gamma_j(x) \quad \mbox{for any} \quad x\in S.
\ee
Thus we can replace $f'(x)$ by a monomial of degree either zero (if $k$ is even)
or degree $k-1$ (if $k$ is odd) without changing the restriction of $f$ onto $S$.
This proves condition CSS1.

\medskip

\noindent{\bf A numerical example:}
Let $k=4$, $m=12$, and let

\begin{center}
    \begin{tabular}{lll}
      $a_{1}\;=$   &  000010000000 \\
      $b_{1}\;=$   &  000001000000 \\
      $a_{2}\;=$   &  000000100000 \\
      $b_{2}\;=$   &  000000010000 \\
      $a_{3}\;=$   &  000000001000 \\
      $b_{3}\;=$   &  000000000100 \\
      $a_{4}\;=$   &  000000000010 \\
      $b_{4}\;=$   &  000000000001 \\
    \end{tabular}
\end{center}

We also set $c_{i}=x_{i}$ for $1\le i\le 4$, so $c_{1}= 100000000000$, etc.
To address condition CSS1 for $i=1$
we want to write down a degree-3 polynomial $f(x_{1},\ldots,x_{12})$ over $\FF_{2}$ 
that on the set $EZ = \calS_1\cup \ldots \cup \calS_k$
takes value 1 on $a_1$ and $b_1$, and 0 on the rest of $EZ$.
(By symmetry then we can write down similar functions for $i=2,3,4$.)
Let $w_{4}(x)$ be the Hamming weight of the first 4 bits of $x$ and
let $w_{-8}(x)$ be the 
Hamming weight of 
the last 8 bits of $x$.
Notice that for all $x\in EZ$ we have $w_{4}(x)\le 3$ and $w_{-8}(x)\le 2$,
and even within this restriction some weight combinations
$(w_{4}, w_{-8}(x))$ may never arise for any element of $EZ$.
Such weight combinations we will call ``impossible pairs.'' 
For each weight combination we have calculated the number of elements 
of EZ that have that weight combination, see the table below. We put an asterisk (rather than 0) in the entries that represent impossible weight combinations.

\begin{center}
\begin{tabular}{|l||*{4}{c|}}\hline
\backslashbox{$w_{-8}$}{$w_{4}$}
&\makebox[3em]{0}&\makebox[3em]{1}&\makebox[3em]{2}
&\makebox[3em]{3} \\\hline\hline
0  & $\ast$ & 4 & $\ast$ & 4\\\hline
1 & 8 & $\ast$ & 24 & $\ast$ \\\hline
2 & $\ast$ & 12 & $\ast$ & 4\\\hline
\end{tabular}
\end{center}

The entries are not hard to calculate, and to give a typical example 
we calculate the 
$(1,2)$-entry: There are 8 weight-1 strings of the last 8 bits, and 6 weight-2 strings 
of the first 4 bits.
However, not all elements of the set $\{a_{1},b_{1},a_{2},\ldots,b_{4}\}$
can be added to some element of the set $\{c_{1}+c_{2},\ldots, c_{3}+c_{4}\}$.
For instance $c_{1}+c_{2}+a_{1}$ does not occur in $EZ$, as $\calS_1$ does not contain $c_1$.
It is easy to see that these bad combinations are
half of all possible $8\cdot 6=48$ combinations, hence we obtain 24 as the $(1,2)$-entry of the table.

Observe now that $|EZ|=56> 16$, so 
Lemma~\ref{lemma:reg} cannot be applied directly to get a degree-3 polynomial.
Let us now construct a degree-3 polynomial $f:\FF_{2}^{12}\rightarrow \FF_{2}$
that \emph{restricted to $EZ$} satisfies condition
CSS1 for $i=1$. First note that $f$ satisfies CSS1 for $i=1$ if
\[
\forall x\in EZ:\;\; f(x) = 1 \; \Longleftrightarrow \;
w_{4}(x)=0\; \wedge \; x_{5}+x_{6}=1.
\]
Consider now the polynomial 
\[
g(x) = 1 + \sum_{i=1}^{4}x_{i} + \sum_{1\le i < j \le 4} x_{i}x_{j}
\]
One can easily check that
\begin{center}
    \begin{tabular}{lccccc}
      When $w_{4}(x)$ & $=$   &  0 & 1 & 2 & 3 \\
      then $g(x)$ mod~2 & $=$ & 1 & 0 & 0 & 1
    \end{tabular}
\end{center}

\noindent and that the polynomial $f(x) = \left(x_{5}+x_{6}\right) g(x)$ then takes values on $EZ$
exactly as needed. (For instance, when $w_{4}=3$ {\it and $x\in EZ$} then 
$x_{5}+x_{6}$ will always give zero, etc.)

\section{Induction step of Section~\ref{subs:k-pairable}: modifications for odd $k$}
\label{app:odd_k}

Here we extend the proof
of $k$-pairability 
given in Section~\ref{subs:k-pairable}
to odd values of $k$.
Suppose $\ell\ge 0$ is an integer.
We say that a family of polynomials
$f_T\, : \, \FF_2^m\to \FF_2$ labeled by subsets
$T\subsetneq [k]$ with $|T|\le \ell$ 
is \emph{valid}  if it satisfies the following
conditions.
\begin{enumerate}
\item[\bf I1:] $f_T$ depends only on the variables $x_{k+1},\ldots,x_m$
\item[\bf I2:]  $f_T$ has degree $k-1-|T|$
\item[\bf I3:]  $f_\emptyset(a_1)=f_\emptyset(b_1)=1$ and $f_\emptyset(a_i)=f_\emptyset(b_i)=0$ for $2\le i\le k$
\item[\bf I4:]  $f_T(x)=\sum_{U\subsetneq T} f_U(x)$ for any non-empty set $T\subsetneq [k]$ and any $x\in e(T)$
\item[\bf I5:]  $f_T\equiv 0$ if $|T|$ is odd and $|T|\le k-3$
\item[\bf I6:]  $f_T(0^m)=f_T(a_i+b_i)=0$ if $i\notin T$ and $|T|\le k-4$
\item[\bf I7:] $f_T(a_i)=f_T(b_i)$ if $i\notin T$
and $|T|\le k-3$
\end{enumerate}
These conditions are identical to the ones
given in Section~\ref{subs:k-pairable}, except for
condition (I7) which is now imposed 
only for $|T|\le k-3$.
Below we assume that $k$ is odd. 
As before,
we shall use induction on $\ell$ to prove that a valid family of polynomials 
exists for all $\ell\le k-1$. 

The base of induction is $\ell=0$.
Then a valid family is a single polynomial
$f_\emptyset$.  
The construction of $f_\emptyset$ is identical
to the one given in Section~\ref{subs:k-pairable}.

We shall now prove the induction step.
Suppose we have already constructed a valid family of polynomials $f_T$
with $|T|\le \ell-1$. 
Consider a subset $T\subsetneq [k]$ with $|T|=\ell$ such that $1\le \ell\le k-2$. The case $\ell=k-1$
will be considered afterwards.

Suppose $|T|$ is odd. 
Condition (I2) demands that $f_T$ has degree $d_T=k-1-|T|$. Condition (I3) can be skipped since $T\ne \emptyset$. Consider two cases.

\noindent
\emph{Case 1:} $|T|\le k-3$.  Then (I5) demands
$f_T\equiv 0$. This automatically satisfies
(I6,I7). It remains to check (I4) which is equivalent
to 
\be
\label{extra_condition_5}
\sum_{U\subsetneq T} f_U(x)=0
\ee
for $x\in e(T)$. Note that $e(T)$ consists of points
$0^m$ and $a_i+b_i$ with $i\notin T$.
All terms $f_U(x)$ with odd $|U|$ vanish due
to (I5). All terms $f_U(x)$ with even $|U|$
obey (I6). Since $|U|\le |T|-1\le k-4$, we have
$f_U(0^m)=0$ and $f_U(a_i+b_i)=0$ due to (I6).
Thus all terms $f_U(x)=0$ vanish, that is,
Eq.~(\ref{extra_condition_5}) is satisfied.

\noindent
\emph{Case 2:} $|T|\ge k-2$. Then $|T|=k-2$ since both
$k$ and $|T|$ are odd. We can skip  (I5,I6,I7).
Thus we just need to satisfy (I4). 
It fixes the 
values of $f_T$ at $s=|e(T)|=k-|T|+1=3$ points,
namely, $0^m$ and $a_i+b_i$ with $i\notin T$.
Since $s=3<2^{d_T+1}=2^{k-|T|}=4$,
the desired polynomial $f_T$ exists by part~(1)
of Lemma~\ref{lemma:reg}.

Suppose $|T|$ is even. Condition (I2)
demands that $f_T$ has degree $d_T=k-1-|T|$.
Conditions (I3),(I5) can be skipped
since $|T|$ is even and $T\ne \emptyset$.
We claim that (I7) follows from (I4).
Indeed, we have $x\in e(T)$ iff
$x=a_i$ or $x=b_i$ with $i\notin T$.
We have $f_T(x)=\sum_{U\subsetneq T} f_U(x)$
due to (I4). If $|U|$ is odd then
$f_U\equiv 0$ due to (I5) since 
$|U|\le |T|-1=\ell-1\le k-3$. If $|U|$ is even then $|U|\le |T|-2\le k-4$.
Thus $f_U(a_i)=f_U(b_i)$ since (I7) holds for $f_U$.
This implies $f_T(a_i)=f_T(b_i)$ for any $i\notin T$, as claimed in (I7).
It remains to check (I4,I6).
We fix the value of $f_T$ at $s=|e(T)|=2(k-|T|)$ points in (I4) 
if $|T|\ge k-3$. We fix the value of $f_T$ at $s=|e(T)|+k-|T|+1 = 3(k-|T|)+1$
points in (I4,I6) if $|T|\le k-4$.
Consider first the case $|T|\ge k-3$.
Then $|T|=k-3$ since $k$ is odd,
$|T|$ is even, and $|T|=\ell\le k-2$.
Thus $s=2(k-|T|)=6<2^{d_T+1}=2^{k-|T|}=8$.
Part~(1) of Lemma~\ref{lemma:reg}
implies that the desired polynomial $f_T$ exists.
Next consider the case $|T|\le k-4$.
Then $s=3(k-|T|)+1<2^{d_T+1}=2^{k-|T|}$.
Part~(1) of Lemma~\ref{lemma:reg}
implies the desired polynomial $f_T$ exists.

It remains to prove the induction step for $\ell=k-1$.
Suppose we have already constructed a valid family of polynomials $f_T$
with $|T|\le k-2$. 
Consider a subset $T\subsetneq [k]$ with $|T|=k-1$.
Since we assumed that $k$ is odd, $|T|$ is even.
Condition (I2) demands that $f_T$
has  degree $k-1-|T|=0$, that is, $f_T$ is
a constant function.
We can skip conditions (I3,I5,I6,I7)
since none of them applies if $|T|=k-1$.
It remains to check (I4).
Note that $e(T)=\{a_i,b_i\}$ for some $i\in [k]$ such that $T=[k]\setminus \{i\}$.
Condition (I4) fixes the value of $f_T$ at
$a_i$ and $b_i$.
Since we want $f_T$ to be a constant function, it suffices to check that the desired
values $f_T(a_i)$
and $f_T(b_i)$ are the same.
Substituting the desired values from (I4),
we have to check that 
\be
\label{extra_condition_6}
\sum_{U\subsetneq T} f_U(a_i) + f_U(b_i)=0.
\ee
All terms $f_U(a_i)+f_U(b_i)$ with $|U|\le k-3$
vanish since $f_U$ obeys (I7).
Thus we can restrict the sum Eq.~(\ref{extra_condition_6}) to terms
with $|U|=k-2$.
However $f_U$ is a degree-1 polynomial if $|U|=k-2$ due to (I2). 
Thus $f_U(a_i)+f_U(b_i) = f_U(0^m)+f_U(a_i+b_i)$
and Eq.~(\ref{extra_condition_6}) is equivalent to
\be
\label{extra_condition_7}
\sum_{\substack{U\subsetneq T\\ |U|=k-2\\}}\;  f_U(0^m) + f_U(a_i+b_i)=0.
\ee
Since $f_U$ obeys (I4) and $|U|$ is odd, we have
$f_U(X)=\sum_{V\subsetneq U} f_V(x)$ for $x=0^m$
or $x=a_i+b_i$. Thus Eq.~(\ref{extra_condition_7})
is equivalent to 
\be
\label{extra_condition_8}
\sum_{\substack{U\subsetneq T\\ |U|=k-2\\}}\; 
\sum_{V\subsetneq U} f_V(0^m) + f_V(a_i+b_i)=0.
\ee
All terms $f_V$ with odd $|V|$ vanish due
to (I5) since $|V|\le |U|-1=k-3$.
All terms $f_V$ with even $|V|$ and $|V|\le k-4$
vanish due to (I6).
Thus we can restrict the sum Eq.~(\ref{extra_condition_8}) to terms with 
$|V|=k-3$ and it suffices to check that 
\be
\label{extra_condition_9}
\sum_{\substack{U\subsetneq T\\ |U|=k-2\\}}\; 
\sum_{\substack{V\subsetneq U\\ |V|=k-3\\}}\; 
f_V(0^m) + f_V(a_i+b_i)=0.
\ee
However, each term $f_V(0^m)$ and $f_V(a_i+b_i)$
is counted exactly two times: if 
$V=T\setminus \{p,q\}$ then one can choose
$U=T\setminus \{p\}$ or $U=T\setminus \{q\}$.
Since we do all arithmetic modulo two,
this implies Eq.~(\ref{extra_condition_9}).
Thus (I4) is satisfied. 
This completes the induction step for odd $k$.

\end{document}